\theoremstyle{definition}
\newtheorem{definition}{Definition}[section]
\newtheorem{example}[definition]{Example}
\theoremstyle{plain}
\newtheorem{lemma}[definition]{Lemma}
\newtheorem{theorem}[definition]{Theorem}
\newtheorem{question}[definition]{Question}
\newtheorem{fact}[definition]{Fact}
\Crefname{fact}{Fact}{Facts}
\theoremstyle{remark}
\Crefname{claim}{Claim}{Claims}
\setlist[enumerate, 1]{font=\upshape, noitemsep, nolistsep}
\setlist[enumerate, 2]{font=\upshape, noitemsep, nolistsep}
\setlist[itemize, 1]{noitemsep, nolistsep,font=\upshape}
\setlist[itemize, 2]{noitemsep, nolistsep,font=\upshape}
\DeclareMathOperator\soe{soe}
\DeclareMathOperator\tw{tw}
\DeclareMathOperator\pw{pw}
\DeclareMathOperator\val{val}
\DeclareMathOperator\depth{depth}
\newcommand{\CMSO}{\ensuremath{\mathsf{CMSO}_2}\xspace}
\newcommand{\FO}{\ensuremath{\mathsf{FO}}\xspace}
\newcommand{\HomInd}{\textup{\textsc{Hom\-Ind}}\xspace}
\newcommand{\HomIndP}[1]{\textup{\textsc{Hom\-Ind}}\textup{(#1)}\xspace}
\newsavebox{\fminibox}
\newlength{\fminilength}
\newenvironment{fminipage}[1][\linewidth]{
	\setlength{\fminilength}{#1-2\fboxsep-2\fboxrule}\begin{lrbox}{\fminibox}\begin{minipage}{\fminilength}}{
	\end{minipage}\end{lrbox}\noindent\fbox{\usebox{\fminibox}}
}
\newcommand{\dproblem}[3]{
	\begin{center}
		\begin{fminipage}[.95\linewidth]
			\textup{\textsc{#1}}\\
						\textsc{Input:} #2\\
						\textsc{Question:} #3
		\end{fminipage}
\end{center}}
\newcommand{\pproblem}[4]{
	\begin{center}
		\begin{fminipage}[.95\linewidth]
			\textup{\textsc{#1}}\\
			\textsc{Input:} #2\\
			\textsc{Parameter:} #3 \\
			\textsc{Question:} #4
		\end{fminipage}
\end{center}}
\tikzset{
	vertex/.style={draw,circle,fill=gray},
	every node/.style={anchor=center},
	lbl/.style={color=lightgray}
}
\title{An Algorithmic Meta Theorem for Homomorphism Indistinguishability}
\author{Tim Seppelt \orcidlink{0000-0002-6447-0568} \\ \small RWTH Aachen University \\\small
	\href{mailto:seppelt@cs.rwth-aachen.de}{\texttt{seppelt@cs.rwth-aachen.de}}}
\renewcommand{\phi}{\varphi}
\renewcommand{\epsilon}{\varepsilon}
\begin{document}
	\maketitle
	
	\begin{abstract}
		Two graphs $G$ and $H$ are \emph{homomorphism indistinguishable} over a family of graphs $\mathcal{F}$ if for all graphs $F \in \mathcal{F}$ the number of homomorphisms from $F$ to $G$ is equal to the number of homomorphism from $F$ to~$H$. 
		Many natural equivalence relations comparing graphs such as (quantum) isomorphism, cospectrality, and logical equivalences can be characterised as homomorphism indistinguishability relations over various graph classes.

		For a fixed graph class $\mathcal{F}$,
		the decision problem \HomIndP{$\mathcal{F}$} asks to determine whether two input graphs $G$ and $H$ are homomorphism indistinguishable over $\mathcal{F}$.
		The problem \HomIndP{$\mathcal{F}$} is known to be decidable only for few graph classes~$\mathcal{F}$.
		We show that \HomIndP{$\mathcal{F}$} admits a randomised polynomial-time algorithm for every graph class~$\mathcal{F}$ of bounded treewidth which is definable in counting monadic second-order logic \CMSO.
		Thereby, we give the first general algorithm for deciding homomorphism indistinguishability.
		
		This result extends to a version of \HomInd where the graph class $\mathcal{F}$ is specified by a \CMSO-sentence and a bound~$k$ on the treewidth, which are given as input.
		For fixed~$k$, this problem is randomised fixed-parameter tractable.
		If $k$ is part of the input then it is coNP- and coW[1]-hard.
		Addressing a problem posed by Berkholz~(2012),
		we show coNP-hardness by establishing that deciding indistinguishability under the $k$-dimensional Weisfeiler--Leman algorithm is coNP-hard when $k$ is part of the input.
	\end{abstract}

	\section{Introduction}
	
	In 1967, Lovász \cite{lovasz_operations_1967} proved that two graphs $G$ and $H$ are isomorphic if and only if they are \emph{homomorphism indistinguishable} over all graphs, i.e.\ they admit the same number of homomorphisms from every graph $F$.
	Subsequently, many graph isomorphism relaxations have been characterised as homomorphism indistinguishability relations.
	For example, two graphs are quantum isomorphic if and only if they are homomorphism indistinguishable over all planar graphs \cite{mancinska_quantum_2020}.
	Moreover, two graphs satisfy the same sentences in $k$-variable first-order logic with counting quantifiers if and only if they are homomorphism indistinguishable over all graphs of treewidth less than~$k$ \cite{dvorak_recognizing_2010,dell_lovasz_2018}.
A substantial list of similar results characterises notions from quantum information~\cite{mancinska_quantum_2020,atserias_quantum_2019}, finite model theory~\cite{dvorak_recognizing_2010,grohe_counting_2020,fluck_going_2024}, convex optimisation~\cite{dell_lovasz_2018,grohe_homomorphism_2022,roberson_lasserre_2023}, 
	algebraic graph theory~\cite{dell_lovasz_2018,grohe_homomorphism_2022},
	machine learning~\cite{morris_weisfeiler_2019,xu_how_2019,grohe_logic_2021}, and category theory~\cite{dawar_lovasz-type_2021,abramsky_discrete_2022} 
	as homomorphism indistinguishability relations.
	
	The wealth of such examples motivates a more principled study of homomorphism indistinguishability \cite{atserias_expressive_2021,roberson_oddomorphisms_2022,seppelt_logical_2023}.
From a computational perspective, the central question on homomorphism indistinguishability concerns the complexity and computability of the following decision problem \cite[Question~9]{roberson_oddomorphisms_2022}:
	
	\dproblem{\HomIndP{$\mathcal{F}$}}{Graphs $G$ and $H$.}{Are $G$ and $H$ homomorphism indistinguishable over $\mathcal{F}$?}
	
	Here, we consider the graph class $\mathcal{F}$ to be fixed and not part of the input.
	The graphs $G$ and $H$ may be arbitrary graphs and do not necessary have to be in $\mathcal{F}$.
	Typically, the graph class $\mathcal{F}$ is infinite.
	Thus, the trivial approach to \HomIndP{$\mathcal{F}$} of checking whether $G$ and $H$ have the same number of homomorphisms from every $F \in \mathcal{F}$ does not even render \HomIndP{$\mathcal{F}$} decidable.
	
	In general, the understanding of \HomIndP{$\mathcal{F}$} is limited and does not go beyond a short list of examples.
	For arguably artificial graph classes $\mathcal{F}$, \HomIndP{$\mathcal{F}$} can be arbitrarily hard \cite{boeker_complexity_2019,seppelt_logical_2023}.
	However, even for natural graph classes $\mathcal{F}$ the complexity-theoretic landscape of \HomIndP{$\mathcal{F}$} is diverse:
	
	For the class $\mathcal{G}$ of all graphs, \HomIndP{$\mathcal{G}$} is graph isomorphism \cite{lovasz_operations_1967}, 
	a problem representing a long standing complexity-theoretic challenge and currently only known  to be in quasi-polynomial time~\cite{babai_graph_2016}.
	For the class $\mathcal{P}$ of all planar graphs, \HomIndP{$\mathcal{P}$} is undecidable~\cite{mancinska_quantum_2020}.
	For the class $\mathcal{K}$ of all cliques,
	\HomIndP{$\mathcal{K}$} is decidable, yet $\mathsf{C}_=\mathsf{P}$-hard~\cite{boeker_complexity_2019}.
	Finally, for the class $\mathcal{TW}_k$ of all graphs of treewidth at most $k$, \HomIndP{$\mathcal{TW}_k$} can be solved in polynomial time using the well-known $k$-dimensional Weisfeiler--Leman algorithm \cite{dvorak_recognizing_2010,cai_optimal_1992}.
	
	In this work, we generalise the last example by establishing that \HomIndP{$\mathcal{F}$} is tractable for every recognisable graph class~$\mathcal{F}$ of bounded treewidth thus giving the first general algorithm for deciding homomorphism indistinguishability.
	
	\begin{theorem}[restate=mainRP, label=] \label{thm:coRP}
		Let $k \geq 1$.
		If $\mathcal{F}$ is a $k$-recognisable graph class of treewidth at most $k-1$
		then \HomIndP{$\mathcal{F}$} is in \textup{coRP}.
	\end{theorem}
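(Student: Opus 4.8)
The plan is to reduce \HomIndP{$\mathcal{F}$} to a polynomially long sequence of linear-algebra computations over a random finite field, using the algebraic structure of homomorphism counting on graphs of bounded treewidth together with recognisability of $\mathcal{F}$.

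\emph{Setup.} For a label set $S\subseteq[k]$, an $S$-labeled graph $F$, and a graph $G$, I will use the homomorphism tensor $\hom_F(G)\in\mathbb{Z}^{V(G)^{S}}$ whose entry at $a\colon S\to V(G)$ counts the homomorphisms $F\to G$ that send the label-$i$ vertex to $a(i)$. Every graph of treewidth at most $k-1$ is generated from constantly many atomic $k$-labeled graphs (single labeled vertices and labeled edges) by the operations of the $k$-labeled-graph algebra: disjoint union, fusing two equally labeled vertices, forgetting a label, relabeling, and adding an edge between two labeled vertices. Each operation $\circ$ induces a multilinear operation $\hat\circ$ on homomorphism tensors satisfying $\hom_{\circ(F_1,\dots)}(G)=\hat\circ(\hom_{F_1}(G),\dots)$ --- fusing is a Schur (entrywise) product, disjoint union an outer product, forgetting a contraction, edge addition a Schur product with a fixed $0/1$ tensor --- and each $\hat\circ$ is given by integer polynomials in the entries, hence commutes with reduction modulo a prime. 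Since $\mathcal{F}$ is $k$-recognisable there is a finite congruence on this algebra, with finite class set $Q$, such that the class $q(F)\in Q$ determines $q(\circ(F_1,\dots))$, and an unlabeled graph $F$ lies in $\mathcal{F}$ if and only if $q(F)$ lies in a fixed set $Q_{\mathrm{acc}}$.

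\emph{The spaces $W_q$ and the reduction.} Fixing the inputs $G,H$ and writing $t_F:=\hom_F(G)\oplus\hom_F(H)$, I will work with the subspaces
\[
  W_q:=\spn\{\,t_F : q(F)=q\,\}\qquad(q\in Q),
\]
each living in a space of dimension at most $|V(G)|^k+|V(H)|^k=\mathrm{poly}(n)$ for $n=|V(G)|+|V(H)|$. By functoriality, multilinearity, and the fact that $q$ is an algebra homomorphism, $(W_q)_{q\in Q}$ is the least family of subspaces such that $t_A\in W_{q(A)}$ for every atom $A$ and $\hat\circ(W_{q_1},\dots)\subseteq W_q$ whenever $\circ(q_1,\dots)=q$; the non-trivial inclusion, that this least family already captures every $t_F$, uses that every graph of treewidth at most $k-1$ is generated from atoms. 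Consequently the $W_q$ can be computed by a fixpoint iteration which maintains, for each $q$, a basis consisting of tensors $t_F$ of explicitly named graphs $F$, and repeatedly tries to extend it using vectors $\hat\circ(t_{F_1},\dots)=t_{\circ(F_1,\dots)}$ built from tensors in the current bases, greedily keeping only those that enlarge the span. Each basis thus stays of size $\mathrm{poly}(n)$ and the iteration stabilises after $\mathrm{poly}(n)$ rounds, at which point the computed basis of $W_{q(F)}$ spans $t_F$ for every $F$. The reduction is then: $G$ and $H$ are homomorphism indistinguishable over $\mathcal{F}$ if and only if for every $q\in Q_{\mathrm{acc}}$ the subspace $W_q\subseteq\mathbb{R}\oplus\mathbb{R}$ is contained in the diagonal $\{(x,x):x\in\mathbb{R}\}$ --- an unlabeled $F$ with $q(F)=q\in Q_{\mathrm{acc}}$ lies in $\mathcal{F}$ and contributes $t_F=(\hom(F,G),\hom(F,H))$, and conversely every $F\in\mathcal{F}$ is such a graph.

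\emph{Why \textup{coRP}.} The obstruction to polynomial time is that the named graphs produced by a $\mathrm{poly}(n)$-round run can have up to $2^{\mathrm{poly}(n)}$ vertices, so the entries of $t_F$ have bit-length up to $2^{\mathrm{poly}(n)}$ and exact computation over $\mathbb{Q}$ is too expensive. Instead I will carry out the whole fixpoint over $\mathbb{F}_p$ for a uniformly random prime $p$ below a bound $2^{\mathrm{poly}(n)}$: then $p$ has $\mathrm{poly}(n)$ bits, field arithmetic is polynomial-time, and no homomorphism count is ever formed, only its residue. Since atoms and all operations commute with reduction mod $p$, the $\mathbb{F}_p$-fixpoint computes spanning sets of $\overline{W_q}:=\spn_{\mathbb{F}_p}\{\bar t_F:q(F)=q\}$, and the algorithm accepts iff $\overline{W_q}$ lies on the diagonal over $\mathbb{F}_p$ for every $q\in Q_{\mathrm{acc}}$. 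If $G,H$ are homomorphism indistinguishable over $\mathcal{F}$, then each relevant $W_q$ lies on the diagonal over $\mathbb{Z}$, hence $\overline{W_q}$ does over every $\mathbb{F}_p$, and the algorithm always accepts. Otherwise, by the reduction some $q_0\in Q_{\mathrm{acc}}$ has $W_{q_0}\not\subseteq$ diagonal, and since the $\mathbb{Z}$-fixpoint exhibits $W_{q_0}$ as spanned by $t_F$'s of named graphs on at most $2^{\mathrm{poly}(n)}$ vertices, some such $F$ has $D:=\hom(F,G)-\hom(F,H)\neq0$; the algorithm then accepts only if $p\mid D$. As $|D|\le 2^{\mathrm{poly}(n)}$, it has at most $2^{\mathrm{poly}(n)}$ prime divisors, so for a suitable choice of bound a random prime divides $D$ with probability at most $1/2$. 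This is one-sided error in the right direction, so \HomIndP{$\mathcal{F}$} is in \textup{coRP}.

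\emph{Main obstacle.} I expect the technical heart to be making the algebra--tensor correspondence fully precise --- defining the multilinear operations matching the graph operations, establishing the functoriality identities and the least-fixpoint characterisation of $(W_q)_q$, and carefully handling labeled versus unlabeled graphs and the exact formulation of ``$k$-recognisable'' --- together with the quantitative bound on the number of vertices of the graphs arising in a polynomial-length run of the fixpoint, which is precisely what dictates how large the random prime has to be.
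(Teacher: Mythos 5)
Your proposal follows essentially the same approach as the paper's: use recognisability to obtain a finite set of states, associate to each state a $\mathrm{poly}(n)$-dimensional vector space of (stacked) homomorphism tensors, compute these spaces by a saturation/fixpoint iteration, read off indistinguishability from a linear condition on the accepting states' spaces, and run the whole computation over $\mathbb{F}_p$ for a random prime $p$ to control bit-lengths — accepting iff no small prime certificate of a discrepancy is found. The one-sided error analysis and the necessary quantitative ingredients (a $2^{\mathrm{poly}(n)}$-vertex size bound for graphs arising in $\mathrm{poly}(n)$ rounds; enough primes below $2^{\mathrm{poly}(n)}$ to beat the number of prime divisors of a nonzero discrepancy $D$) are all present, and you correctly identify the algebra–tensor correspondence and the fixpoint characterisation of the state spaces as the technical heart.

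Two small differences from the paper, neither a gap. First, you work with Courcelle's many-sorted $S$-labelled algebra for $S\subseteq[k]$ and read off the answer from the $S=\emptyset$ state spaces $W_q\subseteq\mathbb{R}^2$ directly via the diagonal condition, whereas the paper fixes the label set to exactly $[k]$, keeps every tensor in the single space $\mathbb{R}^{V(G)^k\cup V(H)^k}$, and reads off via the linear form $\boldsymbol{1}_G^T v = \boldsymbol{1}_H^T v$; this forces it to handle graphs on fewer than $k$ vertices by a separate brute-force check but makes the dimension bookkeeping uniform. Second, there is a minor quantitative slip at the end: if the named graphs have up to $2^{\mathrm{poly}(n)}$ vertices, then $D=\hom(F,G)-\hom(F,H)$ has bit-length up to $2^{\mathrm{poly}(n)}$ (so $|D|\le 2^{2^{\mathrm{poly}(n)}}$, not $|D|\le 2^{\mathrm{poly}(n)}$ as written), and hence at most $2^{\mathrm{poly}(n)}$ prime divisors; your conclusion is still right, but the inequality as stated is inconsistent with the preceding sentence. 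The remaining steps you would need to nail down — the $\mathbb{F}_p$-analogue of the claim that the saturation returns a spanning set for each $W_q$, and the density-of-primes estimate that makes "probability at most $1/2$" precise — are exactly the paper's \cref{lem:correct}, \cref{lem:prob-homind}, and \cref{fact:primes}.
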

	
	Spelled out, \cref{thm:coRP} asserts that there exists a randomised algorithm for \HomIndP{$\mathcal{F}$} which always runs in polynomial time, accepts all \textsmaller{YES}-instances and incorrectly accepts \textsmaller{NO}-instances with probability less than one half.
	Recognisability is a notion arising in the context of Courcelle's theorem \cite{courcelle_monadic_1990} on model checking, cf.\ \cref{def:recog}.
	Courcelle showed that every graph class definable in counting monadic second-order logic \CMSO is recognisable.
	This subsumes graph classes defined by forbidden (induced) subgraphs and minors, and by the Robertson--Seymour Theorem, all minor-closed graph classes.
	
	Thereby, \cref{thm:coRP} applies to e.g.\ the class of graphs of bounded branchwidth, 
	$k$-outerplanar graphs, 
	and the class of trees of bounded degree.
	The theorem also applies to the 
	graph classes arising in homomorphism indistinguishability characterisations for the Lasserre semidefinite programming hierarchy \cite{roberson_lasserre_2023}. 
	In this way, we answer a question raised in~\cite{roberson_lasserre_2023} affirmatively, cf.\ \cref{thm:lasserre}.

	The proof of \cref{thm:coRP} combines Courcelle's graph algebras \cite{courcelle_graph_2012} with the homomorphism tensors from~\cite{mancinska_quantum_2020,grohe_homomorphism_2022}.
	Graph algebras comprise labelled graphs and operations on them such as series and parallel composition.
	Homomorphism tensors keep track of homomorphism counts of labelled graphs.
	We show that recognisability and bounded treewidth guarantee that
	homomorphism tensors yield finite-dimensional representations of suitable graph algebras which certify homomorphism  indistinguishability and are efficiently computable.
	
	Observing that the dimension of the space of homomorphism tensors corresponds to the number of labels of the labelled graphs in the graph algebra which they represent, one might conjecture the following:
	The problem \HomIndP{$\mathcal{F}$} should be decidable if and only if the graph class $\mathcal{F}$ can be generated as family of labelled graphs with a bounded number of labels, cf.\ \cref{question}.
	Indeed, for planar graphs $\mathcal{P}$, the result~\cite{mancinska_quantum_2020} of Man\v{c}inska and Roberson requires an unbounded number of labels and characterises homomorphism indistinguishability over $\mathcal{P}$ as the existence of some infinite-dimensional operator.
	Intuitively, \HomIndP{$\mathcal{P}$} is undecidable as it requires solving an infinite-dimensional system of equations.
	Our \cref{thm:coRP} makes the other direction of this vague argument precise:
	We show that if the number of labels is bounded (e.g.\ the graph class has bounded treewidth) then considering finite-dimensional spaces suffices, rendering the problem tractable.

	\Cref{thm:coRP} can be strengthened in several ways. 
	Firstly, the connection to Courcelle's theorem motivates considering the following parametrised problem:
	
	\pproblem{\HomInd}{Graphs $G$, $H$, a $\mathsf{CMSO}_2$-sentence $\phi$, an integer $k$.}{$|\phi| + k$.}{Are $G$ and $H$ homomorphism indistinguishable over \( \mathcal{F}_{\phi,k} \coloneqq \{\text{graph }F \mid F \models \phi \text{ and } \tw F \leq k -1\}\)?}
	
	The logical sentence $\phi$ allows the graph class to be specified as part of the input.
	We generalise \cref{thm:coRP} as follows:
	
	\begin{theorem}[restate=mainFPT, label=] \label{thm:fptRP}
		There exists a computable function $f \colon \mathbb{N} \to \mathbb{N}$ and a randomised algorithm for \HomInd of runtime $f(|\phi|+k) n^{O(k)}$ for $n \coloneqq \max \{|V(G)|, |V(H)|\}$ which accepts all \textsmaller{YES}-instances and accepts \textsmaller{NO}-instances with probability less than one half.
	\end{theorem}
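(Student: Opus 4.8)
The plan is to revisit the proof of \cref{thm:coRP}, now treating the \CMSO-sentence $\phi$ and the bound $k$ as part of the input and making explicit how every quantity occurring there depends on the graph class $\mathcal{F}_{\phi,k}$. Note first that \cref{thm:coRP} already shows that \HomIndP{$\mathcal{F}_{\phi,k}$} is in \textup{coRP} for every \emph{fixed} pair $(\phi,k)$: the class $\mathcal{F}_{\phi,k}$ has treewidth at most $k-1$ by definition, and it is $k$-recognisable because the class $\{F \mid F \models \phi\}$ is \CMSO-definable, hence recognisable by Courcelle's theorem~\cite{courcelle_monadic_1990}, the class $\{F \mid \tw F \leq k-1\}$ is recognisable (it is \CMSO-definable, equivalently it is the class generated by the graph-algebra operations on graphs with at most $k$ labels), and recognisable classes are effectively closed under intersection. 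The content of \cref{thm:fptRP} is that this argument can be carried out \emph{uniformly}, with running time $f(|\phi|+k)\,n^{O(k)}$. So the first step is to observe that the steps above are constructive: from $\phi$ and $k$ one computes a recognising homomorphism onto a finite graph algebra whose number of elements is bounded by a computable --- in general non-elementary --- function $f_0(|\phi|+k)$; write $N$ for this bound, which in particular bounds the index of the corresponding congruence.

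Given the recognising algebra, we rerun the construction underpinning \cref{thm:coRP} with explicit bounds. Using Courcelle's graph-algebra operations, we assemble by dynamic programming over tree decompositions of width at most $k-1$ the spans of homomorphism tensors of $G$ and of $H$ indexed by the at most $N$ states of the recognising algebra, together with the action of the operations on them. Each such tensor is indexed by a tuple of at most $k$ vertices of $G$ respectively $H$, hence has at most $n^{O(k)}$ entries, and every operation on these tensors --- contractions and Schur products --- as well as the computation of the subspaces they span reduces to linear algebra on matrices of dimension $n^{O(k)}$; thus this phase costs $\operatorname{poly}(N)\cdot n^{O(k)}$. Finally we invoke the randomised subroutine from the proof of \cref{thm:coRP}: whether the two finite-dimensional representations obtained above certify the same homomorphism counts is decided by a fingerprinting/polynomial-identity test, carried out over a random prime so that the potentially exponentially large homomorphism numbers stay small; this again runs in time $\operatorname{poly}(N)\cdot n^{O(k)}$, accepts all \textsmaller{YES}-instances, and accepts \textsmaller{NO}-instances with probability below one half. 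Absorbing $f_0$ and the factors $\operatorname{poly}(N)$ into a single computable function $f$ of $|\phi|+k$ yields the claimed bound.

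The main obstacle is careful bookkeeping rather than a new idea. One has to verify that in the proof of \cref{thm:coRP} every appeal to ``$\mathcal{F}$ is fixed'' is really an appeal either to the bound $N$ on the size of the recognising algebra or to the boundary size $k$, and that nowhere does a super-polynomial dependence on $n$ creep in; in particular, the \emph{degree} of the polynomial in $n$ must be governed by $k$ alone and not by $|\phi|$. A related point is that the translation from the sentence $\phi$ to a recognising algebra is inherently non-elementary in $|\phi|$; this is harmless for the statement, which only asserts computability of $f$, but it explains why the parameter dependence cannot be improved to an elementary function without restricting the logic. Finally, the intersection-closure step must be genuinely effective, with the index of the product device bounded by the product of the two indices, so that $N$ remains bounded by a function of $|\phi| + k$.
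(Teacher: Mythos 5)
Your proposal is correct and follows essentially the same route as the paper's proof: compute from $\phi$ and $k$ a recognising automaton over Courcelle's graph algebra (\cref{thm:courcelle}), run the state-indexed homomorphism-tensor algorithm from the proof of \cref{thm:coRP} with this computed data, and invoke the random-prime modular reduction (\cref{alg2}) to keep the intermediate numbers polynomially sized. One minor difference of bookkeeping: instead of your explicit intersection-closure step, the paper introduces a syntactic term algebra $\mathfrak{TW}(k)$ whose elements already denote graphs of treewidth at most $k-1$ --- so the automaton only has to check $\phi$ --- and it re-derives the analogues of \cref{lem:fefvau,lem:readout,lem:correct} by structural induction on terms rather than on labelled graphs, since the automaton processes terms and the assignment $\pi$ of terms to states need not be computable.
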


	While \cref{thm:coRP} does not depend on Courcelle's work beyond the notion of recognisability, \cref{thm:fptRP} uses his result asserting that given a \CMSO-sentence $\phi$ and $k \in \mathbb{N}$ a finite automaton can be computed which accepts width-$k$ tree-decomposed graphs satisfying $\phi$.
	Based on this automaton,
	our algorithm for \cref{thm:fptRP} computes a basis for  finite-dimensional vector spaces of homomorphism tensors associated to each of its states.
	
	\Cref{thm:coRP} is significant in light of the fact that \HomIndP{$\mathcal{F}$} was previously not even known to be decidable for most graph classes~$\mathcal{F}$ in its scope.
	Despite that,
	it would be desirable to eliminate the need for randomness.
	We accomplish this for a large range of graph classes:
	
	\begin{theorem}[restate=mainPW, label=] \label{thm:pw}
		Let $k \geq 1$.
		If $\mathcal{F}$ is a $k$-recognisable graph class of pathwidth at most $k-1$
		then \HomIndP{$\mathcal{F}$} is in polynomial time.
	\end{theorem}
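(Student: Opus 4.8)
The plan is to run the algorithm behind \cref{thm:coRP} and \cref{thm:fptRP} and to observe that over the \emph{path} algebra every step is an exact rational computation of polynomial size, so that the randomisation becomes superfluous. The starting point is that a graph has pathwidth at most $k-1$ precisely when it is the value of a term over the \emph{path algebra}: the finite signature with the unary operations ``introduce a vertex carrying a fresh label from $[k]$'', ``forget a label'', and ``add an edge between the currently labelled vertices $i$ and $j$'', together with the nullary empty labelled graph. A width-$(k-1)$ path decomposition, read as a sequence of such operations, witnesses one direction and the converse is immediate; in contrast to the treewidth case, no disjoint union or ``glue'' operation is required. Since all these operations are unary, a term is a \emph{word}, and restricting the finite automaton witnessing the $k$-recognisability of $\mathcal{F}$ to the path-shaped width-$(k-1)$ tree decompositions --- which suffice to generate $\mathcal{F}$ because $\mathcal{F}$ has pathwidth at most $k-1$ --- yields a finite \emph{word} automaton $\mathcal{A}$ whose transitions are all unary, with no branching.

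As in the proof of \cref{thm:fptRP}, I would then attach to each state $q$ of $\mathcal{A}$ the subspace $S_q$ of $\mathbb{Q}^{V(G)^{L_q}} \oplus \mathbb{Q}^{V(H)^{L_q}}$ --- where $L_q \subseteq [k]$ is the set of labels active at $q$ --- spanned by the pairs $(F_G, F_H)$ of homomorphism tensors, $F$ ranging over labelled graphs produced by runs of $\mathcal{A}$ ending in $q$ and $F_G$ recording the number of homomorphisms $F \to G$ as a function of the images of the labelled vertices. Each path-algebra operation acts \emph{linearly} on homomorphism tensors (``add an edge'' multiplies entries by a $0/1$ pattern; ``forget'' and label identifications contract one coordinate), so $(S_q)_q$ is the least fixpoint of the linear system induced by the transitions of $\mathcal{A}$, seeded by the empty-graph tensor at the initial state. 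As these spaces are finite-dimensional, the fixpoint stabilises after $O(|\mathcal{A}| \cdot n^{k})$ rounds, and one computes bases $B_q$ incrementally, extending each already-kept basis vector by a single operation and discarding linearly dependent candidates; in particular every vector retained in a $B_q$ is a \emph{chain} of at most $O(|\mathcal{A}| \cdot n^{k})$ unary linear maps applied to the $0/1$-valued empty-graph tensor. Arranging that accepting states carry the empty label set --- so that $F_G$ equals the scalar $\hom(F,G)$ at accepting runs and $\bigcup_{q \text{ accepting}} S_q \subseteq \mathbb{Q}^{2}$ --- the graphs $G$ and $H$ are homomorphism indistinguishable over $\mathcal{F}$ if and only if $(1,-1)$ is orthogonal to $B_q$ for every accepting state $q$, which is a trivial check once the bases are computed; correctness is then the same argument as for \cref{thm:coRP}.

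The step where pathwidth does the real work --- and the one I expect to be the main obstacle --- is the bit-complexity bookkeeping that makes this procedure run \emph{exactly} in polynomial time. The general graph algebra of \cref{thm:coRP} has a binary operation, parallel composition, which acts on homomorphism tensors as an \emph{entrywise product}; iterating it along the fixpoint squares entry magnitudes, and it is this doubly-exponential growth that forces \cref{thm:coRP} to compute randomly rather than over $\mathbb{Q}$. The path algebra has no such operation, so a retained basis tensor is never a product of two earlier tensors; each unary operation enlarges the magnitude of an entry by a factor of at most $n$, so a chain of length $O(|\mathcal{A}| \cdot n^{k})$ starting from a $0/1$ tensor has entries of bit-length $O(n^{k}\log n)$, which is polynomial for fixed $k$. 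It then remains to verify that the incremental basis maintenance stays polynomial: the linear-independence tests should be run by a fraction-free (Bareiss-type) elimination, whose intermediate quantities are minors of a matrix of $O(n^{k})$ vectors of polynomial bit-length and hence are themselves polynomially bounded; and one must be careful to feed the \emph{original} chain vectors, never rational combinations of them, into later rounds, so that the encodings do not grow cumulatively. With this accounting in place, \cref{thm:pw} follows, and the same bookkeeping also covers the variant in which $\mathcal{F}$ is specified by a \CMSO-sentence and a pathwidth bound given as input, at the cost of a factor depending only on the sentence and the bound, as in \cref{thm:fptRP}.
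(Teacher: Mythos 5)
Your proposal is correct, but the final step takes a genuinely different route from the paper. Both arguments rest on the same structural fact, namely that in the pathwidth case only chains of unary operations ($\boldsymbol{J}^i$ and $\boldsymbol{A}^{ij}$) arise, so the spanning set stabilises after at most $2Cn^k$ rounds and the underlying graphs have at most $2Cn^k+k-1$ vertices (this is \cref{lem:pwk-gen,lem:pwk-size}, packaged in the paper as \cref{thm:decidable-pw}); your observation that the absence of the gluing operation is exactly what makes the size bound polynomial is the right diagnosis. Where you diverge is in how to turn this size bound into a deterministic polynomial-time algorithm. You propose to run the fixpoint computation directly over $\mathbb{Q}$, noting that each retained basis vector is a genuine homomorphism tensor of a graph on $O(n^k)$ vertices, so its entries are bounded by $n^{O(n^k)}$ and have polynomial bit-length; you then control the cost of the rank tests with fraction-free (Bareiss) elimination, whose intermediate quantities are minors of a matrix of $O(n^k)$ such vectors and hence stay polynomially bounded, and you correctly warn that the algorithm must only ever store actual chain tensors, never rational combinations, to keep the encodings from growing. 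The paper instead leaves the modular $\mathbb{F}_p$-algorithm (\cref{alg1}, \cref{thm:mod}) untouched and derandomises by brute force: with $N \coloneqq 2Cn^k+k-1$, it enumerates \emph{all} primes in the polynomial-length window $(N\log n, (N\log n)^2]$, runs \cref{alg1} for each, and deduces from the Chinese Remainder Theorem (\cref{lem:boundprime}) that passing every modular test implies equality of the integer homomorphism counts. Your route avoids prime-density estimates and the CRT but shifts the burden onto the exact bit-complexity analysis of the linear algebra, which you handle adequately; the paper's route sidesteps bit-complexity entirely at the cost of a polynomial number of modular reruns. Both are sound, both are polynomial for fixed $k$, and both generalise to the \textsc{PWHomInd} setting in the way you indicate.
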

	
	Furthermore, we show similarly to \cref{thm:fptRP} that the analogue of \HomInd with pathwidth instead of treewidth admits a deterministic algorithm of runtime $f(|\phi|+k)n^{O(k)}$, cf.\ \cref{thm:fptPW}.

	Equipped with the parametrised perspective offered by \HomInd, we finally consider lower bounds on the complexity of this problem.
	Firstly, we show that it is coW[1]-hard and that the runtime of the algorithm that yielded \cref{thm:pw} is optimal under the Exponential Time Hypothesis (\textsmaller{ETH}).
	
	\begin{theorem}[restate=mainD, label=] \label{thm:main4}
		\HomInd is \textup{coW[1]}-hard under fpt-reductions.
		Unless \textsmaller{ETH} fails, there is no algorithm for \HomInd that runs in time $f(|\phi| + k)n^{o(|\phi| + k)}$ for any computable function $f \colon \mathbb{N} \to \mathbb{N}$.
	\end{theorem}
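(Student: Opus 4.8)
The plan is to give a single polynomial-time many-one reduction from the \textsc{Clique} problem (given a graph~$X$ and an integer~$\kappa$, decide whether $X$ contains a clique of size~$\kappa$) to the complement of \HomInd which, viewed with respect to the clique size~$\kappa$, is simultaneously an fpt-reduction. As \textsc{Clique} parameterised by~$\kappa$ is W[1]-complete and, unless \textsmaller{ETH} fails, has no algorithm of running time $f(\kappa)\,N^{o(\kappa)}$ on an $N$-vertex input graph, it is enough to arrange that the produced instance $(G,H,\phi,k)$ of \HomInd satisfies $|\phi| + k = O(\kappa)$ and $\max\{|V(G)|,|V(H)|\}$ polynomial in the size of~$X$; then \textup{coW[1]}-hardness and the claimed \textsmaller{ETH} lower bound both follow at once.

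The key step is to use the input sentence~$\phi$ to collapse the class $\mathcal{F}_{\phi,k}$ to the single clique~$K_\kappa$. Set $k \coloneqq \kappa$ and let $\phi$ be the conjunction of an \FO-sentence expressing that the graph is complete, a \CMSO-sentence expressing $\chi(G) \leq \kappa$, and the negation of a \CMSO-sentence expressing $\chi(G) \leq \kappa - 1$. The property $\chi(G) \leq m$ is a monadic second-order, hence \CMSO, property, expressed succinctly by existentially quantifying over $m$ vertex sets $V_1,\dots,V_m$ that cover $V(G)$ and are each independent, which takes only $O(m)$ symbols; hence $|\phi| = O(\kappa)$. A graph~$F$ satisfies~$\phi$ and has $\tw F \leq k-1 = \kappa-1$ if and only if $F$ is complete with chromatic number exactly~$\kappa$, i.e.\ iff $F \cong K_\kappa$, so $\mathcal{F}_{\phi,k} = \{K_\kappa\}$. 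Therefore $G$ and $H$ are homomorphism indistinguishable over $\mathcal{F}_{\phi,k}$ if and only if $\hom(K_\kappa, G) = \hom(K_\kappa, H)$, equivalently iff $G$ and $H$ have the same number of $\kappa$-cliques. Now map an instance $(X,\kappa)$ of \textsc{Clique} to $(G,H,\phi,k) \coloneqq (X,\ K_{\kappa-1},\ \phi,\ \kappa)$. Since every homomorphism out of $K_\kappa$ is injective with image a $\kappa$-clique, we have $\hom(K_\kappa, K_{\kappa-1}) = 0$ while $\hom(K_\kappa, X) > 0$ exactly when $X$ contains a $\kappa$-clique; hence $(G,H,\phi,k)$ is a \textsmaller{NO}-instance of \HomInd precisely when $(X,\kappa)$ is a \textsmaller{YES}-instance of \textsc{Clique}. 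The map is polynomial-time computable, its output parameter is $|\phi| + k = O(\kappa)$, and (assuming $\kappa \leq |V(X)|$, which is without loss of generality) its output graphs have at most $|V(X)|$ vertices, so all requirements are met and \cref{thm:main4} follows.

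The remaining verification is routine, and the single point that needs care is already apparent: $|\phi|$ must stay linear in~$\kappa$, which is why $\{K_\kappa\}$ is pinned down via its chromatic number, using monadic quantification over the colour classes, rather than by naming $\kappa$ pairwise distinct vertices; the latter would cost $\Theta(\kappa^2)$ symbols, inflating the parameter to $\Theta(\kappa^2)$ and destroying the \textsmaller{ETH} bound. I do not expect a genuine obstacle in \cref{thm:main4} itself. The real difficulty in this part of the paper lies instead in the companion statement — the one answering Berkholz's question — that the same lower bounds survive when $\phi$ may not restrict the class at all, i.e.\ that deciding homomorphism indistinguishability over $\mathcal{TW}_{k-1}$, equivalently equivalence under the $(k-1)$-dimensional Weisfeiler--Leman algorithm, is \textup{coNP}-hard with~$k$ part of the input. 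There one cannot prune the tested graphs and must instead build a pair $(G,H)$ whose Weisfeiler--Leman dimension itself encodes the answer to a hard instance, which I would approach through a Cai--F\"urer--Immerman-type construction over a base graph engineered so that its treewidth jumps from $k-1$ to~$k$ according to the instance — but for \cref{thm:main4} as stated the elementary reduction above is already sufficient.
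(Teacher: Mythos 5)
Your reduction is correct and establishes \cref{thm:main4} as stated, but it takes a genuinely different route from the paper's. The paper first proves the stronger intermediate result (\cref{thm:homindsize}) that \textsc{HomIndSize} --- homomorphism indistinguishability over the class $\mathcal{G}_{\leq k}$ of \emph{all} graphs on at most $k$ vertices --- is coW[1]-hard and has the claimed \textsmaller{ETH} lower bound. Over $\mathcal{G}_{\leq k}$ one cannot use the pair $(X, K_{k-1})$, since that pair disagrees on homomorphism counts from many small graphs besides $K_k$ whenever $X$ is non-trivial; the paper instead takes $(X\times K_0,\ X\times K_1)$, where $K_0, K_1$ are the even and odd \textsmaller{CFI} graphs of $K_k$. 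Roberson's oddomorphism characterisation guarantees $K_0 \equiv_{\mathcal{G}_{\leq k}\setminus\{K_k\}} K_1$, so after tensoring with $X$ the pair is indistinguishable over $\mathcal{G}_{\leq k}$ precisely when $\hom(K_k, X) = 0$. \Cref{thm:main4} then follows by appending the $O(k)$-size \FO-sentence ``at most $k$ vertices''. You instead exploit the freedom in the input sentence $\phi$ to shrink the target class to the singleton $\{K_\kappa\}$, after which the trivial pair $(X, K_{\kappa-1})$ suffices and no \textsmaller{CFI} machinery is needed. This is simpler and perfectly valid for \cref{thm:main4}, but it yields no hardness for \HomIndP{$\mathcal{F}$} with any fixed natural graph class $\mathcal{F}$ such as $\mathcal{G}_{\leq k}$, which is the additional content of the paper's intermediate step. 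One small remark: you could stay inside \FO by encoding $\lvert V(F)\rvert = \kappa$ as ``at most $\kappa$ vertices'' conjoined with the negation of ``at most $\kappa-1$ vertices'', both of size $O(\kappa)$ under the same convention the paper uses for $\lvert\phi_k\rvert$; the $\Theta(\kappa^2)$ blow-up you flag only afflicts the naive pairwise-disequality encoding of ``at least $\kappa$ vertices''. (Your closing aside about \cref{thm:main3} sketches essentially what the paper does there, but note that it is a separate theorem and plays no role in the proof of \cref{thm:main4}.)
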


	Secondly, we show that, when disregarding the parametrisation, \HomInd is coNP-hard. We do so by showing coNP-hardness of the following problem involving the Weisfeiler--Leman (WL) algorithm.
	\Cref{thm:main3} is a first step towards resolving a problem posed by Berkholz~\cite{berkholz_lower_2012}: Is the decision problem in \cref{thm:main3} EXPTIME-complete?
	
	\begin{theorem}[restate=mainC, label=] \label{thm:main3}
		The problem of deciding given graphs $G$ and $H$ and an integer $k\in \mathbb{N}$ whether $G$ and $H$ are $k$-\textup{WL} indistinguishable is \textup{coNP}-hard under polynomial-time many-one reductions.
	\end{theorem}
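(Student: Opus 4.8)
The plan is to reduce from \textsc{Treewidth}, the NP-complete problem of deciding, given a graph $X$ and an integer $w$, whether $\tw X \le w$. Since a language is coNP-hard precisely when its complement is NP-hard, it suffices to give a polynomial-time many-one reduction from \textsc{Treewidth} to the problem of deciding, for given graphs $G$, $H$ and an integer $k$, whether $G$ and $H$ are $k$-\textup{WL} \emph{distinguishable}. It is known that \textsc{Treewidth} remains NP-hard when restricted to connected graphs of bounded maximum degree, and we use the reduction on such instances; the degree bound is essential because it keeps the gadgets of the construction below of polynomial size.

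The construction is the Cai--Fürer--Immerman (CFI) construction. From a connected base graph $X$ of bounded degree it produces, in polynomial time, two graphs $\mathrm{CFI}_0(X)$ and $\mathrm{CFI}_1(X)$ --- the ``even'' and the ``twisted'' copy --- on $O(\lvert V(X)\rvert)$ vertices. The key fact is the tight analysis of these gadgets: $\mathrm{CFI}_0(X)$ and $\mathrm{CFI}_1(X)$ are $k$-\textup{WL} indistinguishable if and only if $\tw X \ge k$; equivalently, they are $k$-\textup{WL} distinguishable if and only if $\tw X < k$. I would pin down the precise additive constant by combining three ingredients: the standard equivalence of $k$-\textup{WL} equivalence with Duplicator winning the bijective $(k+1)$-pebble game, the analysis of this game on CFI graphs in terms of the helicopter cops-and-robber game on $X$, and the Seymour--Thomas theorem that the cop number of $X$ equals $\tw X + 1$.

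The reduction is then $(X, w) \mapsto \bigl(\mathrm{CFI}_0(X),\, \mathrm{CFI}_1(X),\, w + 1\bigr)$, which is computable in polynomial time on bounded-degree inputs. By the previous paragraph, the output pair is $(w+1)$-\textup{WL} distinguishable if and only if $\tw X < w + 1$, that is, $\tw X \le w$. Hence $k$-\textup{WL} distinguishability is NP-hard, so $k$-\textup{WL} indistinguishability is coNP-hard, which proves \cref{thm:main3}.

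The main obstacle is the fine-grained CFI analysis. The textbook CFI theorem is usually stated for expander base graphs in order to obtain an $\Omega(n)$ lower bound on the \textup{WL}-dimension; here one instead needs the exact correspondence between the \textup{WL}-dimension required to separate the CFI pair over $X$ and $\tw X$, valid for \emph{arbitrary} (bounded-degree) base graphs. Getting the additive constants right, and in particular verifying that the cops-and-robber characterisation of treewidth transfers faithfully to the bijective pebble game on CFI graphs over any base graph, is where the work lies; the only other point requiring attention is the restriction to bounded-degree base graphs, which is forced by the polynomial-size requirement on the CFI gadgets.
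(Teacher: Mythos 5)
Your proposal takes essentially the same route as the paper: reduce from treewidth decision on connected bounded-degree graphs (NP-complete by Bodlaender--Thilikos, degree $\leq 9$) by mapping the base graph $X$ to the pair of even and twisted CFI graphs. Two small points of divergence. First, the constant: the paper's equivalence is that the CFI pair is $k$-WL indistinguishable iff $\tw X \geq k+1$ (equivalently, $k$-WL distinguishes iff $\tw X \leq k$), so the paper's map is simply $(X,w) \mapsto (X_0, X_1, w)$; your stated equivalence (``indistinguishable iff $\tw X \geq k$'') is off by one, which your reduction $(X,w) \mapsto (\mathrm{CFI}_0(X), \mathrm{CFI}_1(X), w+1)$ compensates for, and you flag the constant as something to pin down anyway. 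Second, you propose to establish the ``WL distinguishes when treewidth is small'' direction via the bijective pebble game and cops-and-robbers; the paper instead observes that $\hom(X, X_0) \neq \hom(X, X_1)$ (Roberson's CFI lemma) and $\tw X \leq k$, so by Dvo\v{r}\'ak's theorem $k$-WL distinguishes --- a cleaner argument that avoids the pebble-game machinery for that direction. The harder direction (indistinguishability when $\tw X \geq k+1$) is handled by both via the cops-and-robbers analysis, which the paper outsources to Neuen's Lemma~4.4.
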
	
	WL is an important heuristic in graph isomorphism and tightly related to notions in finite model theory~\cite{cai_optimal_1992} and graph neural networks~\cite{xu_how_2019,morris_weisfeiler_2019,grohe_logic_2021}.
	The best known algorithm for \textsc{WL} runs in time $O(k^2n^{k+1}\log n)$ \cite{immerman_describing_1990}, which is exponential when regarding $k$ as part of the input.
	It was shown that when $k$ is fixed then the problem is PTIME-complete~\cite{grohe_equivalence_1999}.
	Establishing lower bounds on the complexity of \textsc{WL} is a challenging problem~\cite{berkholz_lower_2012,berkholz_tight_2017,grohe_compressing_2023}.

	\section{Preliminaries}
	
	All graphs in this work are simple, undirected, and without multiple edges.
	For a graph $G$, we write $V(G)$ for its vertex set and $E(G)$ for its edge set.
	A \emph{homomorphism} $h \colon F \to G$ from a graph $F$ to a graph $G$ is a map $V(F) \to V(G)$ such that $h(u)h(v) \in E(G)$ for all $uv \in E(F)$.
	Write $\hom(F, G)$ from the number of homomorphisms from $F$ to $G$.
	
	Two graph $G$ and $H$ are \emph{homomorphism indistinguishable} over a class of graph $\mathcal{F}$, in symbols $G \equiv_{\mathcal{F}} H$, if $\hom(F, G) = \hom(F, H)$ for all $F \in \mathcal{F}$.
	For an integer $n \geq 2$, $G$ and $H$ are \emph{homomorphism indistinguishable over $\mathcal{F}$ modulo~$n$}, in symbols $G \equiv_{\mathcal{F}}^n H$, if $\hom(F, G) \equiv \hom(F, H) \mod n$ for all $F \in \mathcal{F}$, cf.\@ \cite{lichter_limitations_2024}.

	Let $\log$ denote the logarithm to base~$2$.
	We recall the following consequences of (an explicit version of) the Prime Number Theorem~\cite{rosser_explicit_1941}.
	\begin{fact} \label{fact:primes}
		For $n \geq e^{2000}$,
		the following hold:
		\begin{enumerate}
			\item the number of primes $n < p \leq n^2$ is at least $\frac{n^2}{2 \log n}$,
			\item the product of all primes $n < p \leq n^2$ is at least is $2^{n^2/2}$,
			\item the probability that a random number $n < m \leq n^2$ is prime is $\geq \frac{1}{2 \log n}$.
		\end{enumerate}
	\end{fact}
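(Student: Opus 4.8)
The plan is to derive all three items as routine consequences of an effective form of the Prime Number Theorem. Concretely, I would invoke the explicit estimates going back to Rosser \cite{rosser_explicit_1941} (and their standard refinements) in the shape: for all $x$ beyond a small absolute constant---and in particular for every $x \geq e^{2000}$---one has $\pi(x) > \frac{x}{\ln x}$ and $\vartheta(x) > \frac{1}{2} x$, where $\pi(x)$ is the number of primes up to $x$ and $\vartheta(x) = \sum_{p \leq x} \ln p$, the sum ranging over primes. On the trivial side I would use only $\pi(n) \leq n$ and $\vartheta(n) \leq \pi(n)\ln n \leq n \ln n$. The generous hypothesis $n \geq e^{2000}$ is there precisely so that all slack below is overwhelming; the exact constants in the cited bounds play no role.

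For item~1, write the number of primes in $(n, n^2]$ as $\pi(n^2) - \pi(n)$, bound the first term below by $\frac{n^2}{\ln(n^2)} = \frac{n^2}{2\ln n}$ and the second above by $n$. Since $\frac{n^2}{2\log n} = \frac{(\ln 2)\,n^2}{2\ln n}$ (recall $\log$ is base~$2$), the claim reduces to $\frac{n^2}{2\ln n} - n \geq \frac{(\ln 2)\,n^2}{2\ln n}$, i.e.\ to $(1 - \ln 2)\,n \geq 2\ln n$, which holds with enormous room for $n \geq e^{2000}$.

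For item~2, observe that $\prod_{n < p \leq n^2} p = \exp\bigl(\vartheta(n^2) - \vartheta(n)\bigr)$, so it suffices to show $\vartheta(n^2) - \vartheta(n) \geq \frac{n^2}{2}\ln 2$. Bounding $\vartheta(n^2) > \frac{1}{2} n^2$ from below and $\vartheta(n) \leq n\ln n$ from above, this reduces to $\frac{1}{2} n^2 - n\ln n \geq \frac{\ln 2}{2} n^2$, i.e.\ to $\frac{1 - \ln 2}{2}\, n \geq \ln n$, again immediate in the stated range.

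Item~3 is then immediate from item~1: there are exactly $n^2 - n \leq n^2$ integers $m$ with $n < m \leq n^2$, so the probability that a uniformly random such $m$ is prime equals $\frac{\pi(n^2) - \pi(n)}{n^2 - n} \geq \frac{\pi(n^2) - \pi(n)}{n^2} \geq \frac{1}{2\log n}$ by item~1. I do not expect a genuine obstacle anywhere; the only point that needs a moment's care is the bookkeeping between $\log$ (base~$2$) and the natural logarithm appearing in the Prime Number Theorem, and as the displayed inequalities show, this discrepancy---a single factor of $\ln 2$---is comfortably absorbed by the slack coming from $n \geq e^{2000}$.
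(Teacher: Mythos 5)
Your proof is correct. Items 1 and 3 follow essentially the paper's own argument: you bound $\pi(n^2)$ below by $n^2/\ln(n^2)$ and $\pi(n)$ above trivially (the paper uses the slightly sharper $\pi(n) < n/(\ln n - 2)$, but at $n \geq e^{2000}$ the slack makes this immaterial), and item 3 is the same division by the interval length. The genuine divergence is item 2. You pass to Chebyshev's function $\vartheta(x) = \sum_{p \leq x}\ln p$, invoke an explicit lower bound $\vartheta(x) > x/2$, and bound $\vartheta(n^2) - \vartheta(n)$ from below. The paper instead deduces item 2 directly from item 1 with no further analytic input: each of the $\geq n^2/(2\log n)$ primes in $(n, n^2]$ exceeds $n$, so their product exceeds $n^{n^2/(2\log n)} = 2^{(n^2/(2\log n))\log n} = 2^{n^2/2}$. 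The paper's route is self-contained given item 1 and keeps the cited explicit-PNT facts to a single $\pi$-bound; your $\vartheta$-argument is equally valid but imports one additional explicit estimate from Rosser's tables.
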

	\begin{proof}
		Let $\ln$ denote the natural logarithm.
		Let $\pi(n)$ denote the number of primes $\leq n$.
		By \cite[Theorem~30A]{rosser_explicit_1941}, for $n \geq e^{2000}$,
		\[
		\frac{n}{\ln n} < \pi(n) < \frac{n}{\ln n -2}.
		\]
		Thus the desired quantity $\pi(n^2) - \pi(n)$ is at least
		\begin{align*}
			\frac{n^2}{2 \ln n} - \frac{n}{\ln n - 2}
			\geq \frac{n^2}{2 \ln n} - \frac{2n}{\ln n} 
			= \frac{n^2 - 4n}{2\ln n} 
			\geq \frac{n^2 \ln 2}{2 \ln n}.
		\end{align*}
		For the second claim, the product of the primes is $> n^{\frac{n^2}{2 \log n}} = 2^{\frac{n^2}{2 \log n} \log n} = 2^{n^2/2}$.
		For the third claim, $\frac{n^2}{2 \log n} \cdot \frac{1}{n^2 - n} \geq \frac{1}{2 \log n}$.
	\end{proof}

	\subsection{(Bi)labelled Graphs and Homomorphism Tensors}
	\label{sec:bilabelled}
	
	Let $k, \ell \geq 1$. 
	A \emph{$k$-labelled graph} is a tuple $\boldsymbol{F} = (F, \boldsymbol{u})$ where $F$ is a graph and $\boldsymbol{u} \in V(F)^k$ is a $k$-tuple with distinct entries.
	We say $u_i \in V(F)$, the $i$-th entry of $\boldsymbol{u}$, \emph{carries the $i$-th label}.
	Write $\mathcal{G}(k)$ for the class of $k$-labelled graphs.
	A $k$-labelled graph $\boldsymbol{F} = (F, \boldsymbol{u})$ is \emph{distinctly $k$-labelled} if $u_i \neq u_j$ for all $1 \leq i < j \leq k$.
	Write $\mathcal{D}(k)$ for the class of distinctly $k$-labelled graphs.
		
	A \emph{$(k,\ell)$-bilabelled graph} is a tuple $\boldsymbol{F} = (F, \boldsymbol{u}, \boldsymbol{v})$ where $F$ is a graph and $\boldsymbol{u} \in V(F)^k$ and $\boldsymbol{v} \in V(F)^\ell$.
	It is \emph{distinctly $(k,\ell)$-bilabelled} if  $u_i \neq u_j$ for all $1 \leq i < j \leq k$ and  $v_i \neq v_j$ for all $1 \leq i < j \leq \ell$.
	Note that $\boldsymbol{u}$ and $\boldsymbol{v}$ might share entries.
	We say $u_i \in V(F)$ and $v_i \in V(F)$ \emph{carry the $i$-th in-label} and \emph{out-label}, respectively.
	Write $\mathcal{G}(k,\ell)$ for the class of $(k,\ell)$-bilabelled graphs
	and $\mathcal{D}(k,\ell)$ for the class of distinctly $(k,\ell)$-bilabelled graphs.
	
	For a graph $G$, and $\boldsymbol{F} = (F, \boldsymbol{u}) \in \mathcal{G}(k)$ define its \emph{homomorphism tensor}~$\boldsymbol{F}_G \in \mathbb{N}^{V(G)^k}$ whose $\boldsymbol{v}$-th entry is equal to the number of homomorphisms $h \colon F \to G$ such that $h(u_i) = v_i$ for all $i \in [k]$.
	Analogously, for $\boldsymbol{F} \in \mathcal{G}(k,\ell)$, define $\boldsymbol{F}_G \in \mathbb{N}^{V(G)^k \times V(G)^\ell}$.
	As the entries of homomorphism tensors are integral, we can view them as vectors in vector spaces over $\mathbb{R}$ as in \cref{sec:decidable} or over finite fields $\mathbb{F}_p$ as in \cref{sec:ptime}.
	
	As observed in \cite{mancinska_quantum_2020,grohe_homomorphism_2022}, cf.\ \cite{roberson_lasserre_2023}, (bi)labelled graphs and their homomorphism tensors are intriguing due to the following correspondences between combinatorial operations on the former and algebraic operations on the latter:
	
	Dropping labels corresponds to sum-of-entries (soe):
	For $\boldsymbol{F} = (F, \boldsymbol{u}) \in \mathcal{G}(k)$, define $\soe \boldsymbol{F} \coloneqq F$ as the underlying unlabelled graph of~$\boldsymbol{F}$.
	Then for all graphs $G$, $\hom(\soe \boldsymbol{F}, G) = \sum_{\boldsymbol{v} \in V(G)^k} \boldsymbol{F}_G(\boldsymbol{v}) \eqqcolon \soe(\boldsymbol{F}_G)$.

	Gluing corresponds to Schur products: 
	For $\boldsymbol{F} = (F, \boldsymbol{u})$ and $\boldsymbol{F}' = (F', \boldsymbol{u}')$ in $\mathcal{G}(k)$,
	define $\boldsymbol{F} \odot \boldsymbol{F}' \in \mathcal{G}(k)$ as the distinctly $k$-labelled graph obtained by taking the disjoint union of $F$ and $F'$ and placing the $i$-th label at the vertex obtained by merging $u_i$ with $u'_i$ for all $i \in [k]$.
	Then for every graph~$G$ and $\boldsymbol{v} \in V(G)^k$, $(\boldsymbol{F} \odot \boldsymbol{F}')_G(\boldsymbol{v}) = \boldsymbol{F}_G(\boldsymbol{v}) \boldsymbol{F}'_G(\boldsymbol{v}) \eqqcolon (\boldsymbol{F}_G \odot \boldsymbol{F}'_G)(\boldsymbol{v})$.
	One may similarly define the gluing product of two $(k,\ell)$-bilabelled graphs.
	
	Series composition corresponds to matrix products:
	For $\boldsymbol{K} = (K, \boldsymbol{u}, \boldsymbol{v})$ and $\boldsymbol{K}' = (K', \boldsymbol{u}', \boldsymbol{v}')$ in $\mathcal{G}(k,k)$,
	define $\boldsymbol{K} \cdot \boldsymbol{K}' \in \mathcal{G}(k,k)$
	as the bilabelled graph obtained by taking the disjoint union of $K$ and $K'$,
	identifying $v_i$ with $u'_i$ for $i \in [k]$, 
	and placing the $i$-th in-label (out-label) on $u_i$ (on $v'_i$) for $i \in [k]$.
	Then for all graphs $G$ and $\boldsymbol{x}, \boldsymbol{z} \in V(G)^k$,
	\(
		(\boldsymbol{K} \cdot \boldsymbol{K}')_G(\boldsymbol{x}, \boldsymbol{z})
	   =\sum_{\boldsymbol{y} \in V(G)^k} \boldsymbol{K}_G(\boldsymbol{x}, \boldsymbol{y}) \boldsymbol{K}'_G(\boldsymbol{y}, \boldsymbol{z})
	   \eqqcolon (\boldsymbol{K}_G \cdot \boldsymbol{K}'_G)(\boldsymbol{x}, \boldsymbol{z}).
	\)
	One may similarly compose a graph in $\mathcal{G}(k,k)$ with a graph in $\mathcal{G}(k)$ obtaining one in $\mathcal{G}(k)$.
	This operation correspond to the matrix-vector product.

	Note that all operations above restrict from (bi)labelled graphs to distinctly (bi)labelled graphs. For example, the series product of two distinctly bilabelled graphs is distinctly bilabelled.

	Permutation of labels corresponds to permutation of axes:
	Let $\boldsymbol{F} = (F, \boldsymbol{u}, \boldsymbol{v})\in \mathcal{G}(k,\ell)$ and write $\mathfrak{S}_{k+\ell}$ for the symmetric group acting on $k+\ell$ letters.
	For $\sigma \in \mathfrak{S}_{k+\ell}$, write $\boldsymbol{F}^\sigma \coloneqq (F, \boldsymbol{x}, \boldsymbol{y})$ where $\boldsymbol{x}_i \coloneqq (\boldsymbol{uv})_{\sigma(i)}$ and  $\boldsymbol{y}_{j-k} \coloneqq (\boldsymbol{uv})_{\sigma(j)}$ for all $1 \leq i \leq k < j \leq k+\ell$, i.e.\@ $\boldsymbol{F}^\sigma$ is obtained from $\boldsymbol{F}$ by permuting the labels according to $\sigma$.
	Then for all graphs $G$, $\boldsymbol{F}^\sigma_G$ is obtained from $\boldsymbol{F}_G$ by permuting the axes of the tensor according to $\sigma$, cf.\ \cite{roberson_lasserre_2023}.

	\subsection{Labelled Graphs of Bounded Treewidth}
	\label{sec:labelled}
	
	Labelled graphs of bounded treewidth represent the technical foundation of most proofs in this article. 
	Several versions of such have been discussed in previous works \cite{grohe_homomorphism_2022,courcelle_graph_2012,fluck_going_2024}.
	
	Let $F$ be a graph. A \emph{tree decomposition} of $F$ is a pair $(T, \beta)$ where $T$ is a tree and $\beta \colon V(T) \to 2^{V(F)}$ is a map such that
	\begin{enumerate}
		\item the union of the $\beta(t)$ for $t \in V(T)$ is equal to $V(F)$,
		\item for every edge $e \in E(F)$ there exists $t \in V(T)$ such that $e \subseteq \beta(t)$,
		\item for every vertex $u \in V(F)$ the set of vertices $t \in V(T)$ such that $u \in \beta(t)$ is connected in~$T$.
	\end{enumerate}
	The \emph{width} of $(T, \beta)$ is the maximum over all $\lvert \beta(t) \rvert - 1$ for $t \in V(T)$.
	The \emph{treewidth} $\tw F$ of $F$ is the minimum width of a tree decomposition of $F$.
	A \emph{path decomposition} is a tree decomposition $(T, \beta)$ where $T$ is a path. The \emph{pathwidth} $\pw F$ of $F$ is the minimum width of a path decomposition of $F$.
	
	The following \cref{lem:treedec1,lem:treedec2} introduce tree and path decompositions with certain enjoyable properties.

	\begin{lemma}[{\cite[Lemma~8]{bodlaender_partial_1998}}] \label{lem:treedec1}
		Let $k \geq 1$ and $F$ be a graph such that $|V(F)| \geq k$.
		If $\tw F \leq k-1$ then $F$ admits a tree decomposition $(T, \beta)$ such that
		\begin{enumerate}
			\item $|\beta(t)| = k$ for all $t \in V(T)$,\label{tw1}
			\item $|\beta(s) \cap \beta(t)| = k-1$ for all $st \in E(T)$.\label{tw2}
		\end{enumerate}
		If $\pw F \leq k-1$ then $F$ admits a path decomposition $(T, \beta)$ satisfying \cref{tw1,tw2}.
	\end{lemma}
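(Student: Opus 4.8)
The plan is to start from an arbitrary tree decomposition $(T,\beta)$ of $F$ of width at most $k-1$ and to bring it into the required shape by a sequence of local edits, each preserving the width. As a first normalisation, as long as some edge $st\in E(T)$ satisfies $\beta(s)\subseteq\beta(t)$ I would contract it, keeping the bag $\beta(t)$; each such contraction preserves the three tree-decomposition axioms and strictly decreases $|V(T)|$, so after finitely many of them $\beta(s)\setminus\beta(t)\neq\emptyset$ holds for every remaining edge $st$.

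Next I would enlarge undersized bags. If $|\beta(t)|<k$ for some node $t$, then $t$ cannot be the unique node of $T$ (otherwise $\beta(t)=V(F)$ would have size at least $k$, using the hypothesis $|V(F)|\geq k$), so $t$ has a neighbour $s$; choosing $v\in\beta(s)\setminus\beta(t)$ — nonempty once the previous normalisation has been re-applied to exhaustion — I add $v$ to $\beta(t)$. The first two axioms are untouched and the third survives because the set of nodes whose bag contains $v$ just gains $t$, which is adjacent to the node $s$ whose bag already contained $v$. Such an edit may create a fresh nested pair, so I alternate it with the contraction step; this interleaved process terminates because the pair $\bigl(|V(T)|,\ \sum_{t\in V(T)}(k-|\beta(t)|)\bigr)$ — with nonnegative second coordinate since the width is at most $k-1$ — never increases and strictly decreases lexicographically at each edit. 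When it halts, every bag has size exactly $k$ and no two adjacent bags are nested, hence are distinct, so $|\beta(s)\cap\beta(t)|\leq k-1$ for every edge; this gives \cref{tw1}.

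It then remains to repair each edge $st$ with $|\beta(s)\cap\beta(t)|=k-j$ for some $j\geq 2$. Setting $C\coloneqq\beta(s)\cap\beta(t)$ and writing $\beta(s)=C\cup\{a_1,\dots,a_j\}$, $\beta(t)=C\cup\{b_1,\dots,b_j\}$ as disjoint unions (so $|C|=k-j$), I subdivide the edge $st$ into a path $s,r_1,\dots,r_{j-1},t$ and set $\beta(r_i)\coloneqq C\cup\{b_1,\dots,b_i\}\cup\{a_{i+1},\dots,a_j\}$. Each new bag then has size exactly $k$, consecutive bags along $\beta(s),\beta(r_1),\dots,\beta(r_{j-1}),\beta(t)$ differ in a single vertex and hence meet in $k-1$ vertices, and connectedness is maintained since $a_i$ now occupies a connected prefix of the new path ending at $s$, $b_i$ a connected suffix starting at $t$ — matching the side of $st$ on which each lay before — and every $c\in C$, which previously occupied a subtree running through the edge $st$, now additionally occupies $r_1,\dots,r_{j-1}$. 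Each such subdivision strictly lowers the number of edges whose endpoints' bags meet in fewer than $k-1$ vertices, so after finitely many of them \cref{tw2} holds too, with \cref{tw1} still in force.

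For the pathwidth statement the same argument applies verbatim, since contracting an edge of a path, adding a vertex to a bag, and subdividing an edge all keep the underlying tree a path. The one place where care is genuinely needed is the bookkeeping — verifying that each local move preserves the connectedness axiom (the interpolation of the third step being the least transparent) and that interleaving bag enlargement with re-normalisation terminates — and the lexicographic potential above is what makes the latter go through.
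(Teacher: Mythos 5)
Your proof is correct. The paper does not prove this lemma itself but only cites it as Lemma~8 of Bodlaender's survey, so there is no in-paper argument to compare against; your derivation is a self-contained version of the standard ``smooth tree decomposition'' normalisation, which is what the cited source does as well. All three phases check out: the contraction of nested-adjacent-bags edges preserves the decomposition axioms; the bag-enlargement step is well-defined once contractions have been run to exhaustion (and the hypothesis $|V(F)|\geq k$ is used exactly where it must be, to handle a one-node tree); the lexicographic potential $\bigl(|V(T)|,\ \sum_{t}(k-|\beta(t)|)\bigr)$ is nonnegative since the width bound caps every bag at size~$k$, and it strictly drops under both moves, so the interleaved loop terminates with all bags of size~$k$ and no adjacent nesting; and the interpolating subdivision of an edge with overlap $k-j$ into $j$ edges of overlap $k-1$ keeps the occurrence set of every vertex connected, for exactly the prefix/suffix/through reasons you give. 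The observation that each of the three local moves maps a path to a path settles the pathwidth case with no extra work.
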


	The tree decompositions in \cref{lem:treedec1} can be rearranged in order for the depth of the decomposition tree to give a bound on the number of vertices in the decomposed graph.

	\begin{lemma} \label{lem:treedec2}
		Let $k \geq 1$ and $F$ be a graph such that $\tw F \leq k-1$ and $|V(F)| \geq k$.
		Then $F$ admits tree decomposition $(T, \beta)$ such that
		\begin{enumerate}
			\item $|\beta(t)| = k$ for all $t \in V(T)$,
			\item $|\beta(s) \cap \beta(t)| = k-1$ for all $st \in E(T)$,
			\item there exists a vertex $r \in V(T)$ such that the out-degree of every vertex in the rooted tree $(T, r)$ is at most $k$.
		\end{enumerate}
	\end{lemma}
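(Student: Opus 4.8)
The plan is to start from the tree decomposition $(T,\beta)$ supplied by \cref{lem:treedec1}, which already gives the first two conclusions, and to repeatedly \emph{restructure} it — changing only the tree edges, never the vertex set of $T$ and never a bag — until the third conclusion holds for some root. First I would fix an arbitrary root $r \in V(T)$ and record the elementary consequence of conclusions~1 and~2 that for every non-root node $t$ with parent $p$ one has $|\beta(p)\setminus\beta(t)| = |\beta(t)\setminus\beta(p)| = 1$; write $a_t$ and $b_t$ for the unique vertices forgotten and introduced at $t$. Using the connectivity axiom one sees that two distinct children $s,s'$ of a common node must satisfy $b_s \neq b_{s'}$: the nodes whose bag contains $b_s$ form a connected set containing $s$ but not the parent, hence not $s'$.

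Next I would isolate the key move. Suppose a node $t$ has out-degree at least $k+1$. Each child $s$ contributes a forgotten vertex $a_s \in \beta(t)$, and since $|\beta(t)| = k$, pigeonhole yields two distinct children $s,s'$ with $a_s = a_{s'} =: a$. I then delete the edge $t s'$ and add the edge $s s'$, i.e.\ I relocate the whole subtree rooted at $s'$ to hang below $s$. I claim the result is again a tree decomposition satisfying conclusions~1 and~2. Conclusions~1 and~2 and the first two decomposition axioms are immediate because no bag changed; the new edge $ss'$ is admissible since $\beta(s) = (\beta(t)\setminus\{a\})\cup\{b_s\}$ and $\beta(s') = (\beta(t)\setminus\{a\})\cup\{b_{s'}\}$ with $b_s \neq b_{s'}$, so $\beta(s)\cap\beta(s') = \beta(t)\setminus\{a\}$ has size $k-1$ — this is exactly where the pigeonhole hypothesis is used. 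For the connectivity axiom, fix a vertex $v$, let $W$ be the set of nodes whose bag contains $v$ (connected in the old tree), and write the old tree as the disjoint union of the relocated subtree $S'$ and the remaining subtree $R \ni t,s$, joined by the edge $ts'$. If $W \subseteq R$ or $W \subseteq S'$ there is nothing to check; otherwise $W$ uses the edge $ts'$, so $t,s' \in W$, hence $v \in \beta(t)\cap\beta(s') = \beta(t)\setminus\{a\} \subseteq \beta(s)$, so $s \in W$, and then $W\cap R$ (connected, containing $t$ and $s$) and $W\cap S'$ (connected, containing $s'$) are joined in the new tree by the edge $ss'$, so $W$ stays connected.

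For termination I would use the potential $\Phi(T,r) \coloneqq \sum_{t\in V(T)} \depth(t)$, with depth measured from the fixed root $r$. One restructuring move relocates a subtree $S'$ from below $t$ to below its child $s$, so every node of $S'$ has its depth increased by exactly one and all other depths are unchanged; hence $\Phi$ strictly increases. Since $V(T)$ never changes, $\Phi \leq |V(T)|^2$ throughout, so only finitely many moves are possible. Once no move applies, every node has out-degree at most $k$, and this final decomposition together with the root $r$ witnesses all three conclusions.

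I expect the only genuinely fiddly step to be checking that the connectivity axiom survives a restructuring move (the case analysis on whether $W$ meets the relocated subtree); the admissibility of the new edge and the termination bound are short, and the pigeonhole hypothesis "out-degree exceeds $k$'' is precisely what forces the relocated edge to share $k-1$ vertices.
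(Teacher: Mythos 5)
Your proposal is correct and follows the same core idea as the paper's proof: when a node has too many children, two of them must forget the same vertex of the parent bag (pigeonhole on $|\beta(t)| = k$), so one can be rehung below the other without violating conclusion 2. The execution differs in three small but worth-noting ways. First, the paper batch-reorganizes all children of a node at once (partitioning them by forgotten vertex into the sets $C(v)$ and designating one representative per set), then recurses top-down; you instead perform one local swap at a time and terminate by a potential function $\sum_t \depth(t)$, which makes the termination argument fully explicit where the paper leaves the recursion's termination implicit. Second, you carefully verify that the connectivity axiom survives the rewiring by the cut-edge case analysis; the paper asserts that the new decomposition "still satisfies the assertions of \cref{lem:treedec1}" without spelling this out, so your argument supplies a detail the paper elides. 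Third, the paper begins with a preprocessing step that merges siblings with identical bags; your observation that two siblings $s,s'$ of a common parent $t$ must satisfy $b_s \neq b_{s'}$ (since the connected set of bags containing $b_s$ includes $s$ but not $t$) shows this case cannot arise, so the merging step in the paper is in fact vacuous. Your version is therefore a slightly cleaner and more careful rendering of the same argument.
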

	\begin{proof}
		Let $(T, \beta)$ be a tree decomposition satisfying the assertions of \cref{lem:treedec1}. 
		Pick a root $r \in V(T)$ arbitrarily.
		To ensure that the last property holds, the tree decomposition is modified recursively as follows:
		
		By merging vertices, it can be ensured that no two children of $r$ carry the same bag, i.e.\@ that there exist no two children $s_1 \neq s_2$ of $r$ such that $\beta(s_1) = \beta(s_2)$.
		
		For every $v \in \beta(r)$, let $C(v)$ denote the set of all children $t$ of $r$ such that $\beta(r) \setminus \beta(t) = \{v\}$, i.e.\@ $C(v)$ is the set of all children of $r$ whose bags do not contain the vertex $v$.
		The collection $C(v)$ for $v \in \beta(r)$ is a partition of the children of $r$ in at most $k$ parts.
		
		Note that for two distinct children $t_1 \neq t_2$  in the same part $C(v)$ it holds that $|\beta(t_1) \cap \beta(t_2)| = k-1$.
		Rewire the children of $r$ as follows:
		For every $v \in \beta(r)$ with $C(v) \neq \emptyset$, pick $t \in C(v)$, make $t$ a child of $r$ and all other elements of $C(v)$ children of $t$.
		
		The vertex $r$ now has at most $k$ children and the new tree decomposition still satisfies the assertions of \cref{lem:treedec1}. Proceed by processing the children of $r$.
	\end{proof}	

	Inspired by \cref{lem:treedec1}, 
	we consider the following family of distinctly labelled graphs.
	The \emph{depth} of a rooted tree $(T, r)$ is the maximal number of vertices on any path from $r$ to a leaf.
	
	\begin{definition}\label{def:twk}
		Let $k,d \geq 1$.
		Define $\mathcal{TW}_d(k)$ as the set of all $\boldsymbol{F} = (F, \boldsymbol{u}) \in \mathcal{D}(k)$ such that
		$F$ admits a tree decomposition $(T, \beta)$ of width $\leq k-1$ 
		satisfying the assertions of \cref{lem:treedec2} with some $r \in V(T)$
		such that $\beta(r) = \{u_1, \dots, u_k\}$ and $(T, r)$ is of depth at most $d$.
		Let $\mathcal{TW}(k) \coloneqq \bigcup_{d\geq 1} \mathcal{TW}_d(k)$.
	\end{definition}

	Crucially, this definition permits the following bound on the size of the graphs in $\mathcal{TW}_d(k)$ in terms of $d$ and $k$.

	\begin{lemma} \label{lem:tw-size}
		Let $k ,d\geq 1$.
		Every $\boldsymbol{F} \in \mathcal{TW}_d(k)$ has at most $\max\{k^d, d\}$~vertices.
	\end{lemma}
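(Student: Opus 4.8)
The plan is to read off the claimed bound directly from the tree decomposition witnessing membership in $\mathcal{TW}_d(k)$, counting how many vertices of $F$ can possibly occur in a decomposition tree that is both shallow and of bounded degree. Fix $\boldsymbol{F} = (F, \boldsymbol{u}) \in \mathcal{TW}_d(k)$ and let $(T, \beta)$ be a tree decomposition of $F$ of width at most $k-1$ satisfying the assertions of \cref{lem:treedec2}, with a root $r \in V(T)$ such that $\beta(r) = \{u_1, \dots, u_k\}$, the rooted tree $(T, r)$ has depth at most $d$, and every node of $(T, r)$ has at most $k$ children. The argument then splits into two parts: a covering estimate bounding $|V(F)|$ in terms of $|V(T)|$, and a counting estimate bounding $|V(T)|$ via the shape of $(T,r)$.

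For the first part I would establish that the bags cover $V(F)$ economically, namely that
\[
V(F) = \beta(r) \cup \bigcup_{t \in V(T) \setminus \{r\}} \bigl(\beta(t) \setminus \beta(s_t)\bigr),
\]
where $s_t$ denotes the parent of $t$ in $(T, r)$. Indeed, for $v \in V(F)$ the set $T_v \coloneqq \{t \in V(T) \mid v \in \beta(t)\}$ is non-empty and, by the connectivity axiom of tree decompositions, connected in $T$; let $t_v$ be the node of $T_v$ closest to $r$. If $t_v = r$ then $v \in \beta(r)$; otherwise the parent $s_{t_v}$ lies outside $T_v$ by connectedness, so $v \in \beta(t_v) \setminus \beta(s_{t_v})$. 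Since $|\beta(t)| = k$ and $|\beta(t) \cap \beta(s_t)| = k-1$ by \cref{lem:treedec2}, each set $\beta(t) \setminus \beta(s_t)$ has exactly one element, whence $|V(F)| \leq k + (|V(T)| - 1)$. This covering identity is the only genuinely delicate point: one has to invoke the connectivity axiom to be sure that no non-root node contributes more than a single vertex of $F$ that is not already witnessed by its parent; everything afterwards is elementary.

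For the second part, since $(T, r)$ has depth at most $d$ and at most $k$ children per node, there are at most $k^i$ nodes at distance $i$ from $r$, so $|V(T)| \leq \sum_{i=0}^{d-1} k^i$. If $k = 1$ this gives $|V(F)| \leq 1 + (d-1) = d = \max\{1, d\}$. If $k \geq 2$ it gives $|V(F)| \leq k - 1 + \tfrac{k^d - 1}{k-1}$, and a short rearrangement shows this is at most $k^d$, the inequality being equivalent to $k(k-2) \leq k^d(k-2)$, which holds for all $k \geq 2$ and $d \geq 1$ (with equality throughout when $k = 2$). In every case $|V(F)| \leq \max\{k^d, d\}$, which is the claim.
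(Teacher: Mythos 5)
Your proof is correct, and it takes a genuinely different route from the paper's. The paper proves the bound by induction on the depth $d$: the base case is a single bag with $k$ vertices, and the inductive step analyses whether the root has one or several children, bounding the number of new vertices by $k^{d-1}+1$ or $k \cdot k^{d-1}$ respectively. You instead decouple two estimates: a covering identity $V(F) = \beta(r) \cup \bigcup_{t \neq r}\bigl(\beta(t)\setminus\beta(s_t)\bigr)$, which via the connectivity axiom and the conditions $|\beta(t)|=k$, $|\beta(t)\cap\beta(s_t)|=k-1$ yields $|V(F)| \leq k + |V(T)| - 1$, and then a separate bound $|V(T)| \leq \sum_{i=0}^{d-1} k^i$ from the depth and out-degree constraints. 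Your final algebraic check that $k-1+\tfrac{k^d-1}{k-1}\leq k^d$ for $k\geq 2$ is correct (it reduces to $k(k-2)\leq k^d(k-2)$), and the $k=1$ case gives $d$ directly. What your approach buys is a slightly sharper explicit bound and a clean separation of the "each bag introduces exactly one new vertex" fact — which the paper also exploits, but only implicitly inside the induction — from the purely combinatorial count of nodes in a bounded-degree, bounded-depth tree. The paper's induction is shorter to state but conflates the two.
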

	\begin{proof}
		Let $(T, \beta)$ and $r \in V(T)$ be as in \cref{def:twk}.
		If $k = 1$, then every vertex in $(T,r)$ has out-degree $1$ and $F$ at most $d$ vertices.
		
		Suppose that $k \geq 2$.
		The proof is by induction on the depth $d$ of the rooted tree $(T, r)$.
		If $d = 1$, then $T$ contains only a single vertex and $\boldsymbol{F}$ has at most $k$ vertices.
		
		For the inductive step, let $\boldsymbol{F}$ be of depth $d$.
		If $r$ has only a single neighbour $s$, then $S \coloneqq T - r$ is such that $(S, s)$ is of depth $d-1$.
		By the inductive hypothesis, $\left\lvert \bigcup_{s\in V(S)} \beta(s) \right\rvert \leq k^d$.
		Furthermore, $\left\lvert \bigcup_{t\in V(T)} \beta(t) \setminus \bigcup_{s\in V(S)} \beta(s) \right\rvert = 1$.
		Hence, $\boldsymbol{F}$ has at most $k^{d-1} + 1 \leq k^{d}$ many vertices.
		
		If $r$ has multiple neighbours, observe that due to \cref{lem:treedec2} every vertex in $\beta(r)$ is also in $\beta(s)$ for some neighbour $s$ of $r$.
		Hence, the number of vertices in $\boldsymbol{F}$ is bounded by the number of vertices covered by the subtrees of $T - r$ rooted in $s$.
		Thus, $\boldsymbol{F}$ has at most $k^{d-1} \cdot k \leq k^{d}$ many vertices.
	\end{proof}

	Clearly, if $\boldsymbol{F} \in \mathcal{TW}(k)$ then $\tw (\soe \boldsymbol{F}) \leq k-1$.
	Conversely, by \cref{lem:treedec2}, for every $F$ with $\tw F \leq k-1$ and $\lvert V(F) \rvert \geq k$,
	there exists $\boldsymbol{u} \in V(F)^k$ such that $(F, \boldsymbol{u}) \in \mathcal{TW}(k)$.
	Thus the underlying unlabelled graphs of the labelled graphs in $\mathcal{TW}(k)$ are exactly the graphs of treewidth $\leq k-1$ on $\geq k$ vertices.
	
	The family $\mathcal{TW}(k)$ is generated by certain small building blocks under series composition and gluing as follows.
	
	Let $\boldsymbol{1} \in \mathcal{TW}_1(k)$ be the distinctly $k$-labelled graph on $k$ vertices without any edges.
	For $i \in [k]$, let $\boldsymbol{J}^i = (J^i, (1, \dots, k), \allowbreak (1, \dots, i-1, \widehat{i}, i+1, \dots, k))$ the distinctly $(k,k)$-bilabelled graph with $V(J^i) \coloneqq [k] \cup \{\widehat{i}\}$ and $E(J^i) \coloneqq \emptyset$.
	Writing $\binom{[k]}{2}$ for the set of pairs of distinct elements in $[k]$,
	let $\boldsymbol{A}^{ij} = (A^{ij}, (1, \dots, k), (1, \dots, k))$ for $ij \in \binom{[k]}{2}$ be the distinctly $(k,k)$-bilabelled graph with $V(A^{ij}) \coloneqq [k]$ and $E(A^{ij}) \coloneqq \{ij\}$.
	These graphs are depicted in \cref{fig:basal}.
	Let $\mathcal{B}(k) \coloneqq \{\boldsymbol{J}^i \mid i \in [k]\} \allowbreak \cup \{\boldsymbol{A}^{ij} \mid ij \in \binom{[k]}{2}\}$.
	Observe that $\lvert \mathcal{B}(k) \rvert \leq k^2$.
	
	\begin{figure}
		\begin{subfigure}{.3\linewidth}
			\centering
			\begin{tikzpicture}
				\node (a) [vertex] {};
				\node (ab) [vertex,below of=a] {};
				\node (b) [below of=ab, yshift=.5cm] {\vdots};
				\node (f) [below of=b,vertex, yshift=.3cm] {};
				\node (fb) [below of=f,vertex] {};
				
				\node (a1) [lbl,left of=a] {$1$};
\draw [color=lightgray] (a) -- (a1);

				\node (ab1) [lbl,left of=ab] {$2$};
\draw [color=lightgray] (ab) -- (ab1);
\node (f1) [lbl,left of=f] {$k-1$};
\draw [color=lightgray] (f) -- (f1);
\node (fb1) [lbl,left of=fb] {$k$};
\draw [color=lightgray] (fb) -- (fb1);
\end{tikzpicture}
			\caption{$\boldsymbol{1} \in \mathcal{TW}(k)$.}
			\label{fig:one}
		\end{subfigure}
		\begin{subfigure}{.3\linewidth}
			\centering
			\begin{tikzpicture}
				\node (a) [vertex] {};
				\node (b) [below of=a, yshift=.5cm] {\vdots};
				\node (c) [below of=b,vertex, yshift=.3cm] {};
				\node (d) [below of=c,vertex] {};
				\node (e) [below of=d, yshift=.5cm] {\vdots};
				\node (f) [below of=e,vertex, yshift=.3cm] {};
				\draw [ultra thick] (c) edge [bend left] (d); 
				\node (e) [below of=c, yshift=.6cm] {\vdots};
				
				\node (a1) [lbl,left of=a] {$1$};
				\node (a2) [lbl,right of=a] {$1$};
				\draw [color=lightgray] (a) -- (a1);
				\draw [color=lightgray] (a) -- (a2);
				
				\node (c1) [lbl,left of=c] {$i$};
				\node (c2) [lbl,right of=c] {$i$};
				\draw [color=lightgray] (c) -- (c1);
				\draw [color=lightgray] (c) -- (c2);
				\node (d1) [lbl,left of=d] {$j$};
				\node (d2) [lbl,right of=d] {$j$};
				\draw [color=lightgray] (d) -- (d1);
				\draw [color=lightgray] (d) -- (d2);
				\node (f1) [lbl,left of=f] {$k$};
				\node (f2) [lbl,right of=f] {$k$};
				\draw [color=lightgray] (f) -- (f1);
				\draw [color=lightgray] (f) -- (f2);
			\end{tikzpicture}
			\caption{$\boldsymbol{A}^{ij} \in \mathcal{D}(k,k)$.}
		\end{subfigure}
		\begin{subfigure}{.3\linewidth}
			\centering
			\begin{tikzpicture}
				\node (a) [vertex] {};
				\node (b) [below of=a, yshift=.5cm] {\vdots};
				\node (c) [below of=b,vertex, yshift=.3cm] {};
				\node (dx) [below of=c,vertex, xshift=-.5cm, yshift=.5cm] {};
				\node (dy) [below of=c,vertex, xshift=.5cm, yshift=.5cm] {};
				\node (dz) [below of=dx, xshift=.5cm,vertex, yshift=.5cm] {};
				\node (e) [below of=dz, yshift=.5cm] {\vdots};
				\node (f) [below of=e,vertex, yshift=.3cm] {};

				\node (a1) [lbl,left of=a,xshift=-.5cm] {$1$};
				\node (a2) [lbl,right of=a,xshift=.5cm] {$1$};
				\draw [color=lightgray] (a) -- (a1);
				\draw [color=lightgray] (a) -- (a2);
				
				\node (c1) [lbl,left of=c,xshift=-.5cm] {$i-1$};
				\node (c2) [lbl,right of=c,xshift=.5cm] {$i-1$};
				\draw [color=lightgray] (c) -- (c1);
				\draw [color=lightgray] (c) -- (c2);
				\node (d1) [lbl,left of=dx] {$i$};
				\node (d2) [lbl,right of=dy] {$i$};
				\draw [color=lightgray] (d1) to (dx);
				\draw [color=lightgray] (d2) to (dy);
				
				\node (d3) [lbl,left of=dz,xshift=-.5cm] {$i+1$};
				\node (d4) [lbl,right of=dz,xshift=.5cm] {$i+1$};
				\draw [color=lightgray] (dz) -- (d3);
				\draw [color=lightgray] (dz) -- (d4);
				
				\node (f1) [lbl,left of=f,xshift=-.5cm] {$k$};
				\node (f2) [lbl,right of=f,xshift=.5cm] {$k$};
				\draw [color=lightgray] (f) -- (f1);
				\draw [color=lightgray] (f) -- (f2);
			\end{tikzpicture}
			\caption{$\boldsymbol{J}^{i} \in \mathcal{D}(k,k)$.}
		\end{subfigure}
		\caption{The (bi)labelled generators of $\mathcal{TW}(k)$ in wire notion of \cite{mancinska_quantum_2020}. A vertex carries in-label (out-label) $i$ if it is connected to the index~$i$ on the left (right) by a wire. Actual edges and vertices of the graph are depicted in black.}
		\label{fig:basal}
	\end{figure}
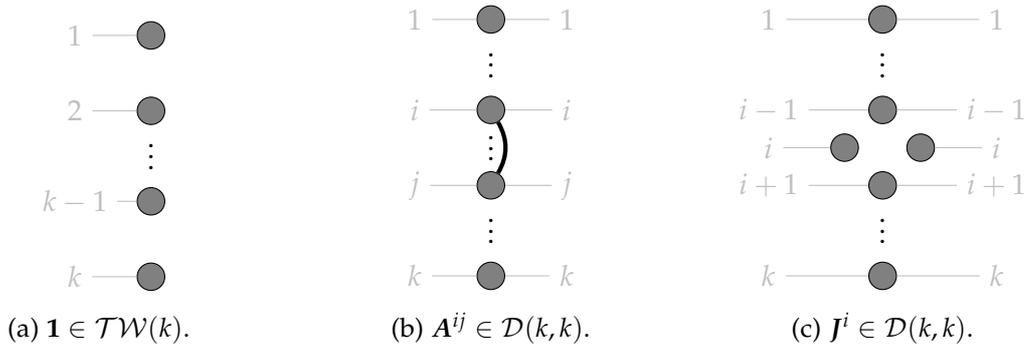

	It can be readily verified that if $\boldsymbol{F} \in \mathcal{TW}(k)$ and $\boldsymbol{B} \in \mathcal{B}(k)$ then $\boldsymbol{B} \cdot \boldsymbol{F} \in \mathcal{TW}(k)$.
	Furthermore, if $\boldsymbol{F}, \boldsymbol{F}' \in \mathcal{TW}(k)$ then $\boldsymbol{F} \odot \boldsymbol{F}' \in \mathcal{TW}(k)$.
	Conversely, the elements of $\mathcal{B}(k)$ generate $\mathcal{TW}(k)$ in the following sense:
	
	\begin{lemma} \label{lem:twk-gen}
		Let $k \geq 1$.
		For every $\boldsymbol{F} \in \mathcal{TW}(k)$, one of the following holds:
		\begin{enumerate}
			\item $\boldsymbol{F} = \boldsymbol{1}$,
			\item $\boldsymbol{F} = \prod_{ij \in A} \boldsymbol{A}^{ij} \cdot \boldsymbol{F}'$ for some $A \subseteq \binom{[k]}{2}$ and $\boldsymbol{F}' \in \mathcal{TW}(k)$ with less edges than $\boldsymbol{F}$,
			\item $\boldsymbol{F} = \boldsymbol{J}^i \cdot \boldsymbol{F}'$ for some $i \in [k]$ and $\boldsymbol{F}' \in \mathcal{TW}(k)$ with less vertices than $\boldsymbol{F}$,
			\item $\boldsymbol{F} = \boldsymbol{F}_1 \odot \boldsymbol{F}_2$ for $\boldsymbol{F}_1, \boldsymbol{F}_2 \in \mathcal{TW}(k)$ on less vertices than~$\boldsymbol{F}$.
		\end{enumerate}
	\end{lemma}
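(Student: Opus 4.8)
The plan is to argue by a direct case analysis on a rooted tree decomposition $(T,\beta)$ with root $r$ witnessing $\boldsymbol F = (F,\boldsymbol u) \in \mathcal{TW}(k)$ as in \cref{def:twk}, so that $\beta(r) = \{u_1,\dots,u_k\}$. I would split into four mutually exhaustive cases matching the four conclusions: (i) $F$ is edgeless and $|V(F)| = k$; (ii) some edge of $F$ has both endpoints in $\beta(r)$; (iii) $\beta(r)$ spans no edge, $|V(F)| > k$, and $r$ has exactly one child in $(T,r)$; (iv) $\beta(r)$ spans no edge and $r$ has at least two children. Since the labels are distinct we always have $|V(F)| \geq k$, and when $|V(F)| = k$ the tree $T$ has a single node, so $F$ is either edgeless (case (i), giving $\boldsymbol F = \boldsymbol 1$) or has an edge inside $\beta(r)$ (case (ii)); hence the split is complete.

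For case (ii), let $A \coloneqq \{ij \in \binom{[k]}{2} \mid u_iu_j \in E(F)\} \neq \emptyset$, let $F'$ be $F$ with these edges deleted, and set $\boldsymbol F' \coloneqq (F',\boldsymbol u)$. The same $(T,\beta,r)$ witnesses $\boldsymbol F' \in \mathcal{TW}(k)$, because deleting edges preserves the tree-decomposition axioms while the conditions of \cref{lem:treedec2} (bag sizes, bag intersections, out-degrees, depth, and the identity of $\beta(r)$) do not mention edges at all. Unwinding the definition of series composition, $\prod_{ij \in A} \boldsymbol A^{ij}$ is the $(k,k)$-bilabelled graph on vertex set $[k]$ with edge set $A$ and identity in- and out-labellings; composing it with $\boldsymbol F'$ reinserts exactly the edges of $A$ among $u_1,\dots,u_k$, so $\boldsymbol F = \prod_{ij\in A}\boldsymbol A^{ij}\cdot\boldsymbol F'$ with $\boldsymbol F'$ on strictly fewer edges.

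For cases (iii) and (iv), recall that every child $s$ of $r$ has $\beta(s) = (\beta(r) \setminus \{u_i\}) \cup \{w\}$ for a unique index $i$ and a vertex $w \notin \beta(r)$ (from $|\beta(s)| = |\beta(r)| = k$ and $|\beta(s)\cap\beta(r)| = k-1$), and by the connectivity axiom $w$ appears only in bags of the subtree rooted at $s$. In case (iii), $u_i$ then appears only in $\beta(r)$, so all its neighbours lie in $\beta(r)$; as $\beta(r)$ spans no edge, $u_i$ is isolated in $F$. Deleting $u_i$ and rooting $T-r$ at $s$ yields $\boldsymbol F' \in \mathcal{TW}(k)$ whose label tuple is obtained from $\boldsymbol u$ by replacing $u_i$ with $w$; since $\boldsymbol J^i$ precisely turns the label-$i$ vertex into an unlabelled internal vertex and attaches a fresh isolated vertex carrying label $i$, one gets $\boldsymbol F = \boldsymbol J^i \cdot \boldsymbol F'$ (identifying that fresh vertex with the isolated $u_i$), with $\boldsymbol F'$ on fewer vertices. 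In case (iv), I would partition the children of $r$ into two nonempty groups and set $\boldsymbol F_t = (F_t, \boldsymbol u)$ for $t \in \{1,2\}$, where $F_t$ consists of $\beta(r)$ together with all vertices in the subtrees rooted at the children in group $t$, with the edges covered by the bags of those subtrees; the restriction of $(T,\beta)$ to the corresponding part, still rooted at $r$, witnesses $\boldsymbol F_t \in \mathcal{TW}(k)$. Because $\beta(r)$ spans no edge and the two families of subtrees lie in different components of $T-r$, every edge of $F$ is covered within exactly one group and every non-root vertex occurs in exactly one group, so $\boldsymbol F = \boldsymbol F_1 \odot \boldsymbol F_2$; and since each group contains a child contributing a fresh vertex absent from the other part, both $\boldsymbol F_t$ have strictly fewer vertices than $\boldsymbol F$.

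I expect the gluing case (iv) to be the main obstacle: one must check carefully, via the connectivity axiom and the edge-freeness of $\beta(r)$, that cutting $T$ at $r$ partitions the edge set and the non-root vertices cleanly, so that reassembling $F_1$ and $F_2$ reproduces $F$ without creating parallel edges, and that the two restricted decompositions still meet every requirement of \cref{def:twk} (in particular the out-degree bound, which can only improve under restriction). The remaining work — that edge and vertex deletion keep us inside $\mathcal{TW}(k)$, and that the generator products $\prod_{ij\in A} \boldsymbol A^{ij}$ and $\boldsymbol J^i \cdot (-)$ evaluate as claimed — is routine once the definitions of series composition and gluing from \cref{sec:bilabelled} are spelled out.
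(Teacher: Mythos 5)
Your proof is correct and follows essentially the same strategy as the paper's: root the tree decomposition at $r$ with $\beta(r)=\{u_1,\dots,u_k\}$, peel off edges inside $\beta(r)$ using $\boldsymbol{A}^{ij}$, peel off the unique introduced vertex using $\boldsymbol{J}^i$ when $r$ has one child, and split at $r$ using $\odot$ when $r$ has several children. The only cosmetic differences are that you make the base case $\boldsymbol{F}=\boldsymbol{1}$ explicit (the paper leaves it implicit), and in the gluing case you allow an arbitrary nonempty bipartition of the children where the paper concretely takes $\{s_1\}$ versus $\{s_2,\dots,s_m\}$; your verification that edges and non-root vertices split cleanly across the two parts, and that the out-degree bound survives restriction and re-rooting, is the same reasoning the paper relies on implicitly.
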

	\begin{proof}
		Write $\boldsymbol{F} = (F, \boldsymbol{u})$ and
		let $(T, \beta)$ and $r \in V(T)$ be for $\boldsymbol{F} \in \mathcal{TW}(k)$ as in \cref{def:twk}.
		
		For the second case, let $A = \{ij \in \binom{[k]}{2} \mid u_iu_j \in E(F)\}$.
		Write $F'$ for the graph obtained from $F$ by deleting the edges $u_iu_j$ for $ij \in A$.
		Then $\boldsymbol{F}' \coloneqq (F', \boldsymbol{u}) \in \mathcal{TW}(k)$
		and $\boldsymbol{F} = \prod_{ij \in A} \boldsymbol{A}^{ij} \cdot \boldsymbol{F}'$.
		
		If $A = \emptyset$, distinguish two cases:
		The vertex $r$ has a unique neighbour $s \in V(T)$.
		By the assertions of \cref{lem:treedec1}, 
		there exists a unique $i \in [k]$ such that $u_i \not\in \beta(s)$.
		Let $F'$ be the subgraph of $F$ induced by $\bigcup_{t \in V(T) \setminus \{r\}} \beta(t)$
		and write $\boldsymbol{v} \in V(F')^k$ for the tuple of distinct vertices which coincides with $\boldsymbol{u}$
		everywhere except at the $i$-th entry where it is equal to the unique vertex in $\beta(s) \setminus \beta(r)$.
		Then $\boldsymbol{F}' \coloneqq (F', \boldsymbol{v}) \in \mathcal{TW}(k)$ and $\boldsymbol{F} = \boldsymbol{J}^i \cdot \boldsymbol{F}'$.
		
		Finally, suppose that $r$ has multiple neighbours $s_1, \dots, s_m \in V(T)$.
		Let $T_1$ denote the connected component of $T - \{s_2, \dots, s_m\}$ containing $s_1$.
		Let $T_2$ denote the connected component of $T - \{s_1\}$ containing $s_2, \dots, s_m$.
		For $j \in \{1,2\}$, let $F_j$ be the subgraph induced by $\bigcup_{t \in V(T_j)} \beta(t)$. 
		Then $F_1$ contains less vertices than $F$ since the vertex in $\beta(s_2) \setminus \beta(r)$ is not in~$F_1$.
		Symmetrically, $F_2$ contains less vertices than $F$.
		Then $\boldsymbol{F}_j \coloneqq (F_j, \boldsymbol{u}) \in \mathcal{TW}(k)$ for $j \in \{1,2\}$.
		As desired, $\boldsymbol{F} = \boldsymbol{F}_1 \odot \boldsymbol{F}_2$.
	\end{proof}

 	\subsection{Labelled Graphs of Bounded Pathwidth}
 	\label{sec:labelled-pw}
 	
 	The set-up from \cref{sec:labelled} can be adapted for pathwidth as follows.
 	
 	\begin{definition} \label{def:pwk}
 		For $k, d \geq 1$,
 		write $\mathcal{PW}_d(k) \subseteq \mathcal{D}(k)$ for the class of all distinctly $k$-labelled graphs $\boldsymbol{F} = (F, \boldsymbol{u})$ such that there exists a path decomposition $(P, \beta)$ of $F$ where
 		\begin{enumerate}
 			\item $|\beta(t)| = k$ for all $t \in V(P)$,
 			\item $|\beta(s) \cap \beta(t)| = k-1$ for all $st \in E(P)$,
 			\item there exists a vertex $r \in V(P)$ of degree at most $1$ such that $\beta(r) = \{u_1, \dots, u_k\}$,
 			\item $|V(P)| \leq d$.
 		\end{enumerate}
 		Let $\mathcal{PW}(k) \coloneqq \bigcup_{d\geq 1} \mathcal{PW}_d(k)$.
 	\end{definition}
 
 	By \cref{lem:treedec1}, for every graph $F$ with $\pw F \leq k-1$ on at least $k$ vertices, there exists $\boldsymbol{u} \in V(F)^k$ such that $(F, \boldsymbol{u}) \in \mathcal{PW}(k)$.
 	Conversely, the underlying graph of every $\boldsymbol{F} \in \mathcal{PW}(k)$ has pathwidth at most $k-1$.
 	Analogously to \cref{lem:tw-size}, we obtain the following bound on the size of the graphs in $\mathcal{PW}_d(k)$:
 	
 	\begin{lemma} \label{lem:pwk-size}
 		Let $k, d \geq 1$.
 		Every $\boldsymbol{F} \in \mathcal{PW}_d(k)$ has at most $k + d- 1$ vertices.
 	\end{lemma}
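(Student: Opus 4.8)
The plan is to read the bound directly off the linear structure of a path decomposition, mirroring the argument for \cref{lem:tw-size} but without the recursion that the branching of a tree forces.

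Let $\boldsymbol{F} = (F, \boldsymbol{u}) \in \mathcal{PW}_d(k)$ and fix a path decomposition $(P, \beta)$ as in \cref{def:pwk}. Write $V(P) = \{t_1, \dots, t_m\}$ with the $t_i$ listed in the order in which they appear along the path $P$, so that $t_{i}t_{i+1} \in E(P)$ for $i \in [m-1]$ and $m = |V(P)| \leq d$. The first step is the observation that each step along $P$ introduces at most one new vertex: since $|\beta(t_{i+1})| = k$ and $|\beta(t_i) \cap \beta(t_{i+1})| = k-1$ by items~(1) and~(2) of \cref{def:pwk}, the difference $\beta(t_{i+1}) \setminus \beta(t_i)$ has exactly one element.

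The second step is simply to sum these contributions. By the definition of a path decomposition, $V(F) = \bigcup_{i=1}^m \beta(t_i)$, and writing this union as $\beta(t_1) \cup \bigcup_{i=2}^{m}\bigl(\beta(t_i) \setminus \beta(t_{i-1})\bigr)$ yields
\[
|V(F)| \leq |\beta(t_1)| + \sum_{i=2}^m \bigl|\beta(t_i) \setminus \beta(t_{i-1})\bigr| \leq k + (m-1) \leq k + d - 1,
\]
as claimed. I do not expect a genuine obstacle here: the only point requiring care is that \cref{def:pwk} stipulates bags of size exactly $k$ meeting in exactly $k-1$ vertices, which is precisely what makes the per-step increment equal to one; everything else is a telescoping count, and the degenerate case $m = 1$ is covered since $d \geq 1$.
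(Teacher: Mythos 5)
Your proof is correct and matches the paper's argument exactly: both observe that, traversing the path decomposition from one end, each successive bag introduces exactly one new vertex (since consecutive bags of size $k$ overlap in $k-1$ vertices), giving the telescoping bound $k + (m-1) \leq k + d - 1$.
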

 	\begin{proof}
 		If $d = 1$, then the path decomposition $(P, \beta)$ of $\boldsymbol{F}$ as in \cref{def:pwk} has at most one bag and $\boldsymbol{F}$ has at most $k$ vertices.
 		For larger $d$, observe that, when traversing the decomposition from one end of $P$ to the other, every bag introduces exactly one new vertex. Thus, there are at most $k+d-1$ vertices in $\boldsymbol{F}$.
 	\end{proof}

	Similar to $\mathcal{TW}(k)$, $\mathcal{PW}(k)$ is generated by small building blocks. For $\mathcal{PW}(k)$, only series composition needs to be considered. Gluing is not permitted.
	Clearly, if $\boldsymbol{F} \in \mathcal{PW}(k)$ then $\boldsymbol{B} \cdot \boldsymbol{F}\in \mathcal{PW}(k)$ for every $\mathcal{B}(k)$.
	\begin{lemma} \label{lem:pwk-gen}
		Let $k \geq 1$.
		For every $\boldsymbol{F} \in \mathcal{PW}(k)$, one of the following holds:
		\begin{enumerate}
			\item $\boldsymbol{F} = \boldsymbol{1}$,
			\item $\boldsymbol{F} = \prod_{ij \in A} \boldsymbol{A}^{ij} \cdot \boldsymbol{F}'$ for some $A \subseteq \binom{[k]}{2}$ and $\boldsymbol{F}' \in \mathcal{PW}(k)$ with less edges than $\boldsymbol{F}$,
			\item $\boldsymbol{F} = \boldsymbol{J}^i \cdot \boldsymbol{F}'$ for some $i \in [k]$ and $\boldsymbol{F}' \in \mathcal{PW}(k)$ with less vertices than $\boldsymbol{F}$.
		\end{enumerate}
	\end{lemma}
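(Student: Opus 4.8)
The plan is to mimic the proof of \cref{lem:twk-gen} almost line for line, simply omitting the gluing (Schur product) case: a path decomposition has no branching, so the root bag never splits into two independent subtrees and only series composition is ever needed. Write $\boldsymbol{F} = (F, \boldsymbol{u})$ and fix a path decomposition $(P, \beta)$ together with a vertex $r \in V(P)$ of degree at most $1$ witnessing $\boldsymbol{F} \in \mathcal{PW}_d(k)$ for some $d$, as in \cref{def:pwk}.

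The first step is to peel off the edges among the labelled vertices. Put $A \coloneqq \{ij \in \binom{[k]}{2} \mid u_i u_j \in E(F)\}$. If $A \neq \emptyset$, let $F'$ be $F$ with all edges $u_i u_j$, $ij \in A$, deleted and set $\boldsymbol{F}' \coloneqq (F', \boldsymbol{u})$. The same pair $(P, \beta)$ (with the same $r$) witnesses $\boldsymbol{F}' \in \mathcal{PW}_d(k) \subseteq \mathcal{PW}(k)$, the graph $\boldsymbol{F}'$ has strictly fewer edges than $\boldsymbol{F}$, and unwinding the definitions of the $\boldsymbol{A}^{ij}$ and of series composition gives $\boldsymbol{F} = \prod_{ij \in A} \boldsymbol{A}^{ij} \cdot \boldsymbol{F}'$; this is case~(2).

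So assume $A = \emptyset$. If $|V(P)| = 1$, then every edge of $F$ lies inside the unique bag $\beta(r) = \{u_1, \dots, u_k\}$, hence (since $A = \emptyset$) $F$ is edgeless on these $k$ labelled vertices, i.e.\ $\boldsymbol{F} = \boldsymbol{1}$, which is case~(1). Otherwise $r$ has a unique neighbour $s$ in $P$, and by the second condition of \cref{def:pwk} there is a unique index $i \in [k]$ with $u_i \notin \beta(s)$; by the connectivity axiom of tree decompositions, $u_i$ then lies in no bag other than $\beta(r)$. Let $F'$ be the subgraph of $F$ induced by $\bigcup_{t \in V(P)\setminus\{r\}} \beta(t)$, and let $\boldsymbol{v} \in V(F')^k$ agree with $\boldsymbol{u}$ in every coordinate except the $i$-th, where it takes the unique value in $\beta(s) \setminus \beta(r)$. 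Deleting $r$ from $P$ leaves a path with endpoint $s$, so $(P - r, \beta)$ witnesses $\boldsymbol{F}' \coloneqq (F', \boldsymbol{v}) \in \mathcal{PW}_{d-1}(k)$; moreover $u_i \notin V(F')$, so $\boldsymbol{F}'$ has strictly fewer vertices than $\boldsymbol{F}$. Since $A = \emptyset$ and $u_i$ occurs only in $\beta(r)$, the vertex $u_i$ is isolated in $F$, whence $F$ is exactly $F'$ with an isolated vertex added to carry the $i$-th label; comparing with the definition of $\boldsymbol{J}^i$ and series composition yields $\boldsymbol{F} = \boldsymbol{J}^i \cdot \boldsymbol{F}'$, which is case~(3). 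These three cases are exhaustive.

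I do not expect a genuine obstacle here: the two points deserving a moment's care are verifying that $(P - r, \beta)$ still meets all four conditions of \cref{def:pwk} with root $s$ — in particular that $s$ has degree at most $1$ in $P - r$, which holds precisely because $r$ was an endpoint of the path — and the small argument that $u_i$ is isolated in $F$, which is where $A = \emptyset$ is combined with the connectivity of the set of bags containing $u_i$. The genuinely new content relative to \cref{lem:twk-gen} is only negative, namely that no gluing step arises, which is exactly why $\mathcal{PW}(k)$, unlike $\mathcal{TW}(k)$, is generated by $\mathcal{B}(k)$ under series composition alone.
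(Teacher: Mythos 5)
Your proof is correct and follows exactly the route the paper intends: the paper's own proof is the one-liner "Following the proof of \cref{lem:twk-gen}, observe that the missing case does not occur," and you have expanded precisely that argument — peeling off the $\boldsymbol{A}^{ij}$, then the $\boldsymbol{J}^i$, and noting that since $r$ is an endpoint of the path it has at most one neighbour, so the branching (gluing) case can never arise. The two points you flag as deserving care — that $s$ has degree at most $1$ in $P - r$ because $r$ was an endpoint, and that $u_i$ is isolated because $A = \emptyset$ together with the connectivity axiom forces any edge at $u_i$ to lie inside $\beta(r)$ — are indeed the substantive checks, and you handle both correctly.
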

	\begin{proof}
		Following the proof of \cref{lem:twk-gen}, observe that the missing case does not occur.
	\end{proof}
			\subsection{Counting Monadic Second-Order Logic}

	We denote \emph{first-order logic} over the relational signature $\{E\}$ of graphs by $\FO$.
	The logic $\mathsf{MSO}_1$ is the extension of \FO by variables $X, Y, Z, \dots$ which range over subsets of vertices and atomic formulas $x \in X$.
	The logic $\mathsf{MSO}_2$ extends $\mathsf{MSO}_1$ by first-order (monadic second-order) variables which range over edges (sets of edges) and atomic formulas $\operatorname{inc}(x, y)$ evaluates to true if $x$ is assigned a vertex $v$ and $y$ is assigned an edge $e$ such that $v$ is incident with $e$.
	\emph{Counting monadic second-order logic} \CMSO is the extension of $\mathsf{MSO}_2$ by atomic formulas $\operatorname{card}_{p,q}(X)$ for integers $p, q \in \mathbb{N}$ expressing $\lvert X \rvert \equiv p \mod q$.
	See \cite{courcelle_graph_2012} for further details.

	\subsection{Recognisability}
	
	We consider the following definition of recognisability \cite{bojanczyk_definability_2016}.
	
	\begin{definition} \label{def:recog}
		Let $k \geq 1$.
		For class of unlabelled graphs $\mathcal{F}$, define the equivalence relation $\sim_{\mathcal{F}}$ on the class of distinctly $k$-labelled graphs~$\mathcal{D}(k)$ by
		letting $\boldsymbol{F}_1 \sim_{\mathcal{F}} \boldsymbol{F}_2$ if and only if for all $\boldsymbol{K} \in \mathcal{D}(k)$ it holds that
		\[
		\soe (\boldsymbol{K} \odot \boldsymbol{F}_1) \in \mathcal{F} \iff \soe (\boldsymbol{K} \odot \boldsymbol{F}_2) \in \mathcal{F}.
		\]
		The class $\mathcal{F}$ is \emph{$k$-recognisable} if $\sim_{\mathcal{F}}$
		has finitely many equivalence classes.
		The number of classes is the \emph{$k$-recognisability index} of $\mathcal{F}$.
	\end{definition}
	
	To parse \cref{def:recog}, first recall that $\boldsymbol{K} \odot \boldsymbol{F}_1$ is the $k$-labelled graph by gluing $\boldsymbol{K}$ and $\boldsymbol{F}_1$ together at their labelled vertices. The $\soe$-operator drops the labels yielding unlabelled graphs.
	Intuitively, $\boldsymbol{F}_1 \sim_{\mathcal{F}} \boldsymbol{F}_2$ iff both or neither of their underlying unlabelled graphs are in $\mathcal{F}$ and the positions of the labels in $\boldsymbol{F}_1$ and $\boldsymbol{F}_2$ is equivalent with respect to membership in $\mathcal{F}$.
	Consider the following example:
	\begin{example} \label{ex:paths}
		The class $\mathcal{W}$ of all paths is $1$-recognisable.
		Its $1$-recognisability index is $4$.
		The equivalence classes are described by the representatives in \cref{fig:paths}.
	\end{example}
	
	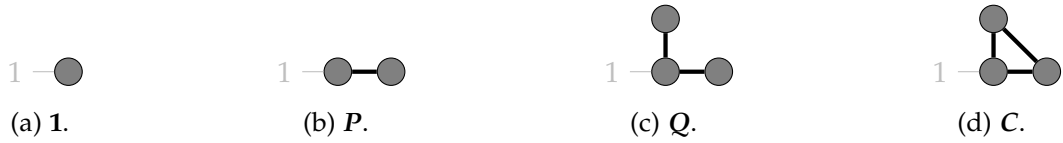
\begin{figure}
		\begin{subfigure}[b]{.2\linewidth}
			\centering
			\begin{tikzpicture}[node distance=.7cm]
				\node (a) [vertex] {};
				\node (a1) [lbl,left of=a] {$1$};
				\draw [color=lightgray] (a) -- (a1);
			\end{tikzpicture}
			\caption{$\boldsymbol{1}$.}
		\end{subfigure}
		\begin{subfigure}[b]{.25\linewidth}
			\centering
			\begin{tikzpicture}[node distance=.7cm]
				\node (a) [vertex] {};
				\node (b) [vertex, right of=a] {};
				\draw [ultra thick] (a) -- (b); 
				\node (a1) [lbl,left of=a] {$1$};
				\draw [color=lightgray] (a) -- (a1);
			\end{tikzpicture}
			\caption{$\boldsymbol{P}$.}
		\end{subfigure}	
		\begin{subfigure}[b]{.25\linewidth}
			\centering
			\begin{tikzpicture}[node distance=.7cm]
				\node (a) [vertex] {};
				\node (b) [vertex, right of=a] {};
				\node (c) [vertex, above of=a] {};
				\draw [ultra thick] (a) -- (b); 
				\draw [ultra thick] (a) -- (c); 
				\node (a1) [lbl,left of=a] {$1$};
				\draw [color=lightgray] (a) -- (a1);
				
			\end{tikzpicture}
			\caption{$\boldsymbol{Q}$.}
		\end{subfigure}	
		\begin{subfigure}[b]{.25\linewidth}
			\centering
			\begin{tikzpicture}[node distance=.7cm]
				\node (a) [vertex] {};
				\node (b) [vertex, right of=a] {};
				\node (c) [vertex, above of=a] {};
				\draw [ultra thick] (a) -- (b); 
				\draw [ultra thick] (a) -- (c); 
				\draw [ultra thick] (b) -- (c);
				\node (a1) [lbl,left of=a] {$1$};
				\draw [color=lightgray] (a) -- (a1);
				
			\end{tikzpicture}
			\caption{$\boldsymbol{C}$.}
		\end{subfigure}		
		\caption{Representatives for $\sim_{\mathcal{W}}$ for $k=1$ and $\mathcal{W}$ the class of paths from \cref{ex:paths}.}
		\label{fig:paths}
	\end{figure}
	
	\begin{proof}
		To show that the labelled graphs in \cref{fig:paths} cover all equivalence classes, let $\boldsymbol{F} = (F, u)$ be arbitrary.
		If $F$ is not a path then $\boldsymbol{F} \sim_{\mathcal{W}} \boldsymbol{C}$. Indeed, for every $\boldsymbol{K} \in \mathcal{D}(1)$, $F$ is a subgraph of $\soe(\boldsymbol{K} \odot \boldsymbol{F})$. Hence, regardless of $\boldsymbol{K}$, both $\soe(\boldsymbol{K} \odot \boldsymbol{F})$ and $\soe(\boldsymbol{K} \odot \boldsymbol{C})$ are not paths.
		If $F$ is a path, then $\boldsymbol{F}$ and $\boldsymbol{1}$, $\boldsymbol{P}$, or $\boldsymbol{Q}$ are equivalent depending on whether the degree of $u$ is $0$, $1$, or $2$.
		
		To show that the representatives in \cref{fig:paths} are in distinct classes, observe for example that $\soe(\boldsymbol{P} \odot \boldsymbol{P}) \in \mathcal{W}$ while $\soe(\boldsymbol{P} \odot \boldsymbol{Q}) \not\in \mathcal{W}$, thus $\boldsymbol{P} \not\sim_{\mathcal{W}} \boldsymbol{Q}$.
		Similarly, $\soe(\boldsymbol{1} \odot \boldsymbol{Q}) \in \mathcal{W}$ whereas $\soe(\boldsymbol{P} \odot \boldsymbol{Q}) \not\in \mathcal{W}$, thus $\boldsymbol{1} \not\sim_{\mathcal{W}} \boldsymbol{P}$.
	\end{proof}
	A more involved example is the following:
	\begin{example}
		Let $\mathcal{F}$ be the family of $H$-subgraph-free graph for some graph $H$.
		Then $\mathcal{F}$ is $k$-recognisable for every $k \geq 1$.
	\end{example}
	\begin{proof}
		Suppose $H$ has $m$ vertices.
		For a distinctly $k$-labelled graph $\boldsymbol{F} = (F, \boldsymbol{u})$, consider the set $\mathcal{H}(\boldsymbol{F})$ of (isomorphism types of) distinctly $k$-labelled graphs $\boldsymbol{F}' = (F', \boldsymbol{u})$ where $F'$ is a subgraph of $F$ such that $V(F') \supseteq \{u_1, \dots, u_k\}$ has at most $k+m$ vertices.
		Clearly, there are only finitely many possible sets $\mathcal{H}(\boldsymbol{F})$.
		Furthermore, if $\mathcal{H}(\boldsymbol{F}_1) = \mathcal{H}(\boldsymbol{F}_2)$ then $\boldsymbol{F}_1 \sim_{\mathcal{F}} \boldsymbol{F}_2$.
		Indeed, if $\boldsymbol{K}  \in \mathcal{D}(k)$ is such that $\soe(\boldsymbol{K} \odot \boldsymbol{F}_1)$ contains $H$ as a subgraph then so does $\soe(\boldsymbol{K} \odot \boldsymbol{F}_2)$ since $\boldsymbol{F}_1$ and $\boldsymbol{F}_2$ contain the same distinctly $k$-labelled graphs on at most $k+m$ vertices.
	\end{proof}
	Analogously, one may argue that every class defined by forbidden minors is recognisable.

	Courcelle~\cite{courcelle_monadic_1990} proved that every \CMSO-definable graph class is \emph{recognisable}, 
	i.e.\@ it is $k$-recognisable for every $k \in \mathbb{N}$.
	Conversely, Boja\'nczyk and Pilipczuk~\cite{bojanczyk_definability_2016} proved that if a recognisable class $\mathcal{F}$ has bounded treewidth then it is \CMSO-definable.
	Furthermore, they conjecture that $k$-recognisability is a sufficient condition for a graph class of treewidth at most $k-1$ to be \CMSO-definable.

	For our purposes, it suffices to weaken \cref{def:recog} as follows:
	The equivalence relation~$\sim_{\mathcal{F}}$ can be defined on $\mathcal{TW}(k)$ instead of $\mathcal{G}(k)$.
	Then $\boldsymbol{K}$ ranges over $\mathcal{TW}(k)$.
	We chose the definition from \cite{courcelle_monadic_1990,bojanczyk_definability_2016} in order to be aligned with the literature.
	
		\subsection{Cai--Fürer--Immerman Graphs}
	\label{sec:cfi}	
	We recall the following version of the classical \textsmaller{CFI} graphs~\cite{cai_optimal_1992} from \cite{roberson_oddomorphisms_2022}.
	Let $G$ be a graph without isolated vertices and $U \colon V(G) \to \mathbb{Z}_2$ a function from $G$ to the abelian group on two elements~$\mathbb{Z}_2$.
	For a vertex $v \in V(G)$, write $E(v) \subseteq E(G)$ for the set of edges incident to~$v$.
	The graph $G_U$ has vertices $(v, S)$ for every $v \in V(G)$ and $S \colon E(v) \to \mathbb{Z}_2$ such that $\sum_{e \in E(v)} S(e) = U(v)$.
	Two vertices $(u, S)$ and $(v, T)$ are adjacent in~$G_U$ if $uv \in E(G)$ and $S(uv) + T(uv) = 0$.
	Note that $\lvert V(G_U)\rvert = \sum_{v \in V(G)}2^{\deg(v)-1}$.
	
	By \cite[Lemma~3.2]{roberson_oddomorphisms_2022}, if $\sum_{v \in V(G)} U(v) = \sum_{v \in V(G)} U'(v)$ for $U, U' \colon V(G) \to \mathbb{Z}_2$ then $G_U = G_{U'}$.
	We may thus write $G_0$ and $G_1$ for the \emph{even} and the \emph{odd} \textsmaller{CFI} graph of $G$.
	We recall the following properties:
	\begin{lemma}[{\cite[Corollary~3.7]{roberson_oddomorphisms_2022}}] \label{lem:cfi}
		Let $G$ be a connected graph and $U \colon V(G) \to \mathbb{Z}_2$. Then the following are equivalent:
		\begin{enumerate}
			\item $G_0 \cong G_{U}$,
			\item $\sum_{v \in V(G)} U(v) = 0$,
			\item $\hom(G, G_0) = \hom(G, G_{U})$.
		\end{enumerate}
	\end{lemma}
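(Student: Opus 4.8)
The plan is to prove the cycle of implications $\text{(1)} \Rightarrow \text{(3)} \Rightarrow \text{(2)} \Rightarrow \text{(1)}$. Two of these are immediate. For $\text{(1)} \Rightarrow \text{(3)}$, isomorphic graphs $H_1 \cong H_2$ satisfy $\hom(F, H_1) = \hom(F, H_2)$ for every graph $F$. For $\text{(2)} \Rightarrow \text{(1)}$, the fact recalled just above the lemma, applied with the all-zero function in place of $U'$, gives $G_U = G_0$ whenever $\sum_{v} U(v) = 0$, so in particular $G_0 \cong G_U$. Thus the substance of the proof is $\text{(3)} \Rightarrow \text{(2)}$, which I would prove contrapositively: assuming $\sum_{v \in V(G)} U(v) = 1$ (so that $G_U = G_1$ by the same recalled fact), I would show $\hom(G, G_0) > \hom(G, G_U)$.

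The main idea is to sort homomorphisms $G \to G_U$ by the endomorphism of $G$ they cover. The projection $\pi \colon G_U \to G$, $(v, S) \mapsto v$, is a homomorphism, so $\pi \circ h \in \hom(G, G)$ for every $h \in \hom(G, G_U)$; write $L_U(g)$ for the number of $h$ with $\pi \circ h = g$, so that $\hom(G, G_U) = \sum_{g \in \hom(G,G)} L_U(g)$. Unwinding the definition of $G_U$, a lift of $g$ is exactly a family $(S_v)_{v \in V(G)}$ with $S_v \colon E(g(v)) \to \mathbb{Z}_2$ satisfying the \emph{vertex equations} $\sum_{e \in E(g(v))} S_v(e) = U(g(v))$ for $v \in V(G)$ together with the \emph{edge equations} $S_u(g(u)g(v)) = S_v(g(u)g(v))$ for $uv \in E(G)$. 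This is an affine system over $\mathbb{Z}_2$; hence $L_U(g)$ is $0$ or $2^{d_g}$, where $d_g$ is the dimension of the solution space of the associated homogeneous system. Crucially, this homogeneous system does not depend on $U$ and coincides with the system defining lifts into $G_0$, which always admits the all-zero solution. Therefore $L_0(g) = 2^{d_g} \geq 1$ and $L_U(g) \in \{0, L_0(g)\}$ for every $g$, whence $\hom(G, G_U) \leq \hom(G, G_0)$, with equality if and only if every endomorphism of $G$ lifts to $G_U$.

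It remains to exhibit, when $\sum_v U(v) = 1$, one endomorphism that fails to lift; I would take $g = \mathrm{id}_G$. Summing the vertex equations over all $v \in V(G)$, the left-hand side becomes $\sum_{uv \in E(G)} \bigl( S_u(uv) + S_v(uv) \bigr)$, and each summand vanishes by the edge equations $S_u(uv) = S_v(uv)$, while the right-hand side equals $\sum_v U(v) = 1$. This contradiction forces $L_U(\mathrm{id}_G) = 0$, hence $\hom(G, G_U) < \hom(G, G_0)$, completing $\text{(3)} \Rightarrow \text{(2)}$. The hypotheses that $G$ is connected and has no isolated vertices enter only through the recalled fact and to keep us in the standard CFI setting; the counting step itself does not use them. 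I expect the only delicate point to be the bookkeeping in the previous paragraph: casting "lifts of $g$" as an affine $\mathbb{Z}_2$-system and checking that its homogeneous part is genuinely independent of $U$, so that the dichotomy $L_U(g) \in \{0, L_0(g)\}$ holds uniformly in $g$.
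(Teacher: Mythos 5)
The paper does not prove this lemma itself; it is cited verbatim as Corollary~3.7 of Roberson's oddomorphism paper, where it is derived from a more general characterisation (weak oddomorphisms $F \to G$ detecting $\hom(F,G_0)\neq\hom(F,G_1)$). Your proof is self-contained, correct, and takes a more elementary route. Sorting homomorphisms $G\to G_U$ by the endomorphism they cover under the projection $\pi\colon (v,S)\mapsto v$, and casting the lifts of a fixed $g$ as solutions of an affine system over $\mathbb{Z}_2$ whose homogeneous part (vertex and edge equations with right-hand side zero) is manifestly $U$-independent, is exactly the right bookkeeping: it gives $L_U(g)\in\{0,L_0(g)\}$ with $L_0(g)\geq 1$ via the all-zero lift, hence $\hom(G,G_U)\leq\hom(G,G_0)$ term by term. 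The parity obstruction for $g=\mathrm{id}_G$ — summing the vertex constraints and killing each edge contribution $S_u(uv)+S_v(uv)=2S_u(uv)=0$ via the edge equations — correctly forces $\sum_v U(v)=0$ if $\mathrm{id}$ is to lift, so when $\sum_v U(v)=1$ the inequality is strict. You correctly localise where connectedness and the absence of isolated vertices enter: the no-isolated-vertices hypothesis is what makes the vertex gadgets nonempty, and connectedness is needed precisely for the cited well-definedness fact $\sum U=\sum U' \Rightarrow G_U=G_{U'}$ (which is false for disconnected $G$, and hence for the $(2)\Rightarrow(1)$ direction). Compared to Roberson's framework, your argument buys a very concrete fibre-dichotomy and needs no extra machinery, at the cost of only covering the case $F=G$; Roberson's oddomorphism characterisation handles general test graphs $F$, which is overkill for this corollary but is what the Seppelt paper needs elsewhere (in the proof of coW[1]-hardness).
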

		
	\section{Decidability}
	\label{sec:decidable}
	
	As a first step, we show that the problem \HomIndP{$\mathcal{F}$} is decidable for every $k$-recognisable graph class $\mathcal{F}$ of treewidth at most $k-1$.
	We do so by establishing a bound on the maximum size of a graph $F \in \mathcal{F}$ for which $\hom(F, G) = \hom(F, H)$ needs to be checked in order to conclude whether $G \equiv_{\mathcal{F}} H$.

	For a graph class $\mathcal{F}$ and $\ell \in \mathbb{N}$, define the class $\mathcal{F}_{\leq \ell} \coloneqq \{F \in \mathcal{F} \mid |V(F)| \leq \ell\}$.

	\begin{theorem} \label{thm:decidable}
		Let $k \geq 1$.
		Let $\mathcal{F}$ be a graph class of treewidth $\leq k-1$ with $k$-recognisability index $C$. 
		For graphs $G$ and $H$ on at most $n$ vertices, 
		\[
			G \equiv_{\mathcal{F}} H \iff G \equiv_{\mathcal{F}_{\leq f(n, k, C)}} H
		\]
		where $f(n, k, C) \coloneqq \max\{ k^{2Cn^k}, 2Cn^k \}$.
	\end{theorem}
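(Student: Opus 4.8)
The plan is to work with the family of distinctly labelled graphs $\mathcal{TW}(k)$ and its homomorphism tensors. First I would recall that the underlying unlabelled graphs of $\mathcal{TW}(k)$ are exactly the graphs of treewidth $\leq k-1$ on at least $k$ vertices, so $G \equiv_{\mathcal{F}} H$ is equivalent to $\hom(\soe\boldsymbol{F},G)=\hom(\soe\boldsymbol{F},H)$ for all $\boldsymbol{F}\in\mathcal{TW}(k)$ with $\soe\boldsymbol{F}\in\mathcal{F}$, together with a trivial check of the finitely many graphs in $\mathcal{F}$ on fewer than $k$ vertices (which are covered by $\mathcal{F}_{\leq f(n,k,C)}$ since $f(n,k,C)\geq k$). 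For a fixed graph $G$ on at most $n$ vertices, the homomorphism tensors $\boldsymbol{F}_G$ of $k$-labelled graphs live in $\mathbb{R}^{V(G)^k}$, a space of dimension $n^k$. The key object is, for each $\sim_{\mathcal{F}}$-class $c$, the subspace $\mathcal{S}_c(G) \subseteq \mathbb{R}^{V(G)^k}$ spanned by $\{\boldsymbol{F}_G \mid \boldsymbol{F}\in\mathcal{TW}(k),\ \boldsymbol{F}\in c\}$; there are $C$ such classes, so the tuple of subspaces lives in a product whose total dimension is at most $Cn^k$.

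Next I would argue that these $C$ subspaces can be generated by the tensors of "small" labelled graphs. Using \cref{lem:twk-gen}, every $\boldsymbol{F}\in\mathcal{TW}(k)$ is obtained from $\boldsymbol{1}$ by repeatedly applying the operations $\boldsymbol{B}\cdot(-)$ for $\boldsymbol{B}\in\mathcal{B}(k)$ and gluing $\boldsymbol{F}_1\odot\boldsymbol{F}_2$. On homomorphism tensors these become matrix-vector multiplication by the fixed matrices $\boldsymbol{B}_G$ and the Schur product, both of which are determined by $G$. Recognisability guarantees that the $\sim_{\mathcal{F}}$-class of $\boldsymbol{F}$ is determined by the classes of its constituents: if $\boldsymbol{F}_1\sim_{\mathcal{F}}\boldsymbol{F}_1'$ then $\boldsymbol{B}\cdot\boldsymbol{F}_1\sim_{\mathcal{F}}\boldsymbol{B}\cdot\boldsymbol{F}_1'$ and $\boldsymbol{F}_1\odot\boldsymbol{F}_2\sim_{\mathcal{F}}\boldsymbol{F}_1'\odot\boldsymbol{F}_2$ — this is the content of $\sim_{\mathcal{F}}$ being a congruence, which one checks directly from \cref{def:recog} by absorbing $\boldsymbol{B}$ (resp. the other glued factor) into the test graph $\boldsymbol{K}$. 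Hence I can run a closure/saturation process: start with $\mathcal{S}_{[\boldsymbol{1}]}(G) \ni \boldsymbol{1}_G$ and all others empty, and repeatedly, for each generator application, add the resulting tensor to the subspace indexed by the appropriate class. Since each $\mathcal{S}_c(G)$ is a subspace of $\mathbb{R}^{n^k}$, the sum of their dimensions can increase at most $Cn^k$ times before stabilising. When the process stabilises, the collection $\{\mathcal{S}_c(G)\}_c$ is closed under all generator operations, and by induction on the derivation of $\boldsymbol{F}$ from \cref{lem:twk-gen} every $\boldsymbol{F}_G$ for $\boldsymbol{F}\in\mathcal{TW}(k)$ lies in $\mathcal{S}_{[\boldsymbol{F}]}(G)$.

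The final step is to bound the size of the labelled graphs whose tensors are produced before stabilisation. Each round of the saturation applies one generator to previously obtained labelled graphs; a tensor that is newly added at round $t$ is a linear combination of tensors $\boldsymbol{F}_G$ for $\boldsymbol{F}\in\mathcal{TW}(k)$ obtainable by a derivation tree of "depth" at most $t$ in the sense of \cref{def:twk} — more precisely, one shows by induction that after $t$ rounds the subspaces are spanned by $\{\boldsymbol{F}_G \mid \boldsymbol{F}\in\mathcal{TW}_{t}(k)\}$ (applying $\boldsymbol{B}\cdot(-)$ and $\odot$ to graphs in $\mathcal{TW}_{t}(k)$ yields graphs in $\mathcal{TW}_{t+1}(k)$, matching how the decomposition tree grows by one level). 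Since stabilisation occurs after at most $2Cn^k$ rounds, all relevant labelled graphs lie in $\mathcal{TW}_{2Cn^k}(k)$, and by \cref{lem:tw-size} they have at most $\max\{k^{2Cn^k}, 2Cn^k\} = f(n,k,C)$ vertices. Now, to decide $G\equiv_{\mathcal{F}} H$: the spaces $\mathcal{S}_c(G)$ and $\mathcal{S}_c(H)$ carry enough information, because $\hom(\soe\boldsymbol{F}, G) = \soe(\boldsymbol{F}_G)$ is a linear functional of $\boldsymbol{F}_G$; one argues that if $\hom(\soe\boldsymbol{F}',G)=\hom(\soe\boldsymbol{F}',H)$ for all $\boldsymbol{F}'\in\mathcal{F}_{\leq f(n,k,C)}$, then in particular this holds for all $\boldsymbol{F}'\in\mathcal{TW}_{2Cn^k}(k)$ with $\soe\boldsymbol{F}'\in\mathcal{F}$, whose tensors span the spaces $\mathcal{S}_c$; taking linear combinations and using that $\soe$ is linear, one propagates the equality of homomorphism counts to every $\boldsymbol{F}\in\mathcal{TW}(k)$ with $\soe\boldsymbol{F}\in\mathcal{F}$, hence $G\equiv_{\mathcal{F}} H$. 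The reverse implication $G\equiv_{\mathcal{F}}H \Rightarrow G\equiv_{\mathcal{F}_{\leq f(n,k,C)}}H$ is immediate since $\mathcal{F}_{\leq f(n,k,C)}\subseteq\mathcal{F}$.

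\textbf{Main obstacle.} The delicate point is making the joint stabilisation argument precise: one must track the $C$ subspaces simultaneously across rounds and show both that their dimensions are monotone (so the process halts in $\leq 2Cn^k$ rounds — the factor $2$ buys room for a final round that produces no new dimensions but certifies closure) and that the labelled graphs realising a basis at round $t$ genuinely lie in $\mathcal{TW}_t(k)$, which requires the careful depth-bookkeeping of \cref{def:twk} and \cref{lem:treedec2} — in particular that $\boldsymbol{B}\cdot(-)$ increases decomposition depth by exactly one and $\odot$ does not increase it beyond the max. Equally important is verifying that $\sim_{\mathcal{F}}$ is a congruence for $\odot$ and for left-multiplication by $\mathcal{B}(k)$-generators, but this follows routinely from \cref{def:recog} by reassociating the gluing operations.
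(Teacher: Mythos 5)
Your overall plan is structurally the same as the paper's: use $\mathcal{TW}_d(k)$, the recognisability classes, a dimension/stabilisation argument, and the size bound from \cref{lem:tw-size}. However, there is a genuine gap in the final correctness step, caused by tracking the spaces for $G$ and for $H$ separately.

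You define $\mathcal{S}_c(G) \subseteq \mathbb{R}^{V(G)^k}$ and $\mathcal{S}_c(H) \subseteq \mathbb{R}^{V(H)^k}$ as two independent collections of subspaces, and at the end argue that because $\hom(\soe\boldsymbol{F},G) = \soe(\boldsymbol{F}_G)$ is a linear functional, equalities of homomorphism counts for small graphs ``propagate by linearity'' to all $\boldsymbol{F} \in \mathcal{TW}(k)$. This step does not go through as written. Concretely: for a large $\boldsymbol{F}$ of accepting class $c$, stabilisation of $\mathcal{S}_c(G)$ gives you $\boldsymbol{F}_G = \sum_i \alpha_i \boldsymbol{F}^i_G$ with small $\boldsymbol{F}^i$, and hence $\hom(\soe\boldsymbol{F},G) = \sum_i \alpha_i\,\hom(\soe\boldsymbol{F}^i,G) = \sum_i \alpha_i\,\hom(\soe\boldsymbol{F}^i,H)$ by the hypothesis. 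But you have no reason to believe this last quantity equals $\hom(\soe\boldsymbol{F},H)$: stabilisation of $\mathcal{S}_c(H)$ gives a decomposition $\boldsymbol{F}_H = \sum_j \beta_j \boldsymbol{G}^j_H$ with \emph{different} coefficients and potentially \emph{different} small graphs, because the coefficients are chosen depending on the ambient graph. Nothing forces the same $\alpha_i$ and $\boldsymbol{F}^i$ to express $\boldsymbol{F}_H$.

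The paper avoids this by working with the \emph{stacked} tensors $\boldsymbol{F}_G \oplus \boldsymbol{F}_H \in \mathbb{R}^{V(G)^k \cup V(H)^k}$ and defining $S(q)$ as the span of these stacked vectors. Then the stabilisation argument (\cref{lem:chain}) yields a single linear combination $\boldsymbol{F}_G \oplus \boldsymbol{F}_H = \sum_i \alpha_i (\boldsymbol{F}^i_G \oplus \boldsymbol{F}^i_H)$ with small $\boldsymbol{F}^i$ in the same state, which simultaneously expresses both $\boldsymbol{F}_G$ and $\boldsymbol{F}_H$ with the same $\alpha_i$; this is exactly what makes the readout lemma (\cref{lem:readout}) and the propagation step work. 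Your dimension count $2Cn^k$ happens to coincide (the paper gets it as $C$ spaces of dimension $\leq 2n^k$ each; you got it as $2C$ spaces of dimension $\leq n^k$ each), but the synchronisation of coefficients between $G$ and $H$ is essential, not cosmetic. Everything else you say — the congruence property of $\sim_{\mathcal{F}}$ (\cref{lem:fefvau}), the depth bookkeeping via \cref{lem:twk-gen}, and the size bound from \cref{lem:tw-size} — matches the paper's line of argument; the fix is simply to stack and carry out the saturation in $\mathbb{R}^{V(G)^k \cup V(H)^k}$.
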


	Fix throughout a graph class $\mathcal{F}$ as in \cref{thm:decidable}.
	In reminiscence of Courcelle's theorem, 
	we let $Q$ denote the set of equivalence classes of~$\sim_{\mathcal{F}}$, as defined in \cref{def:recog}, and call them \emph{states}.
	A state $q \in Q$ is \emph{accepting} if for a (and equivalently, every) $\boldsymbol{F}$ in $q$ it holds that $\soe \boldsymbol{F} \in \mathcal{F}$.
	Write $A \subseteq Q$ for the set of all accepting states.
	
	The states are fine enough to distinguish between graphs in $\mathcal{F}$ and those not in $\mathcal{F}$. 
	Furthermore, they are preserved under series composition and gluing, which are the operations under which all graphs in $\mathcal{TW}(k)$ can be generated by \cref{lem:twk-gen}.
	
	\begin{lemma} \label{lem:fefvau}
		For $\boldsymbol{F}, \boldsymbol{F}', \boldsymbol{F}_1, \boldsymbol{F}_2, \boldsymbol{F}'_1, \boldsymbol{F}'_2 \in \mathcal{D}(k)$, $\boldsymbol{L} \in \mathcal{D}(k,k)$,
		\begin{enumerate}
			\item if $\boldsymbol{F}_1 \sim_{\mathcal{F}} \boldsymbol{F}'_1$ and $\boldsymbol{F}_2 \sim_{\mathcal{F}} \boldsymbol{F}'_2$ 
			then $\boldsymbol{F}_1 \odot \boldsymbol{F}_2 \sim_{\mathcal{F}} \boldsymbol{F}'_1 \odot \boldsymbol{F}'_2$,
			\item if $\boldsymbol{F} \sim_{\mathcal{F}} \boldsymbol{F}'$ then $\boldsymbol{L} \cdot \boldsymbol{F} \sim_{\mathcal{F}} \boldsymbol{L} \cdot \boldsymbol{F}'$.
		\end{enumerate}
	\end{lemma}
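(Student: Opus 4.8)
The plan is to derive both congruence properties directly from \cref{def:recog} by \emph{absorbing} part of each construction into the quantified test graph $\boldsymbol{K}$. I will use three elementary facts without further comment: gluing $\odot$ on $\mathcal{D}(k)$ is commutative and associative up to isomorphism (it only forms a disjoint union and then identifies equally-labelled vertices); both $\odot$ and the mixed composition $\boldsymbol{L}\cdot{-}$ map distinctly (bi)labelled graphs to distinctly (bi)labelled graphs, as observed after the definition of series composition, so every graph built below is again a legitimate test graph; and $\sim_{\mathcal{F}}$ is an equivalence relation.

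For the first item, fix an arbitrary test graph $\boldsymbol{K} \in \mathcal{D}(k)$. Since $\boldsymbol{K} \odot \boldsymbol{F}_2 \in \mathcal{D}(k)$, applying the definition of $\boldsymbol{F}_1 \sim_{\mathcal{F}} \boldsymbol{F}'_1$ to the test graph $\boldsymbol{K} \odot \boldsymbol{F}_2$, together with commutativity and associativity of $\odot$, gives
\[
	\soe\!\big(\boldsymbol{K} \odot (\boldsymbol{F}_1 \odot \boldsymbol{F}_2)\big) \in \mathcal{F}
	\iff \soe\!\big((\boldsymbol{K} \odot \boldsymbol{F}_2) \odot \boldsymbol{F}_1\big) \in \mathcal{F}
	\iff \soe\!\big(\boldsymbol{K} \odot (\boldsymbol{F}'_1 \odot \boldsymbol{F}_2)\big) \in \mathcal{F}.
\]
Now $\boldsymbol{K} \odot \boldsymbol{F}'_1 \in \mathcal{D}(k)$, so applying the definition of $\boldsymbol{F}_2 \sim_{\mathcal{F}} \boldsymbol{F}'_2$ to the test graph $\boldsymbol{K} \odot \boldsymbol{F}'_1$ shows that the last expression lies in $\mathcal{F}$ iff $\soe(\boldsymbol{K} \odot (\boldsymbol{F}'_1 \odot \boldsymbol{F}'_2)) \in \mathcal{F}$. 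As $\boldsymbol{K}$ was arbitrary, $\boldsymbol{F}_1 \odot \boldsymbol{F}_2 \sim_{\mathcal{F}} \boldsymbol{F}'_1 \odot \boldsymbol{F}'_2$.

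For the second item, write $\boldsymbol{L} = (L, \boldsymbol{a}, \boldsymbol{b})$ and fix an arbitrary $\boldsymbol{K} = (K, \boldsymbol{w}) \in \mathcal{D}(k)$. Let $\boldsymbol{K}_{\boldsymbol{L}} \in \mathcal{D}(k)$ be the distinctly $k$-labelled graph obtained from the disjoint union of $K$ and $L$ by identifying $w_i$ with $a_i$ for each $i \in [k]$ and placing the $i$-th label on the image of $b_i$; this is distinctly $k$-labelled because $\boldsymbol{L}$ is distinctly $(k,k)$-bilabelled and the identification, being a partial bijection between $\{w_1, \dots, w_k\}$ and $\{a_1, \dots, a_k\}$, merges no two vertices of $L$. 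Unwinding the definitions of $\odot$ and of the mixed composition shows that for every $\boldsymbol{F} = (F, \boldsymbol{u}) \in \mathcal{D}(k)$ the unlabelled graphs $\soe(\boldsymbol{K} \odot (\boldsymbol{L} \cdot \boldsymbol{F}))$ and $\soe(\boldsymbol{K}_{\boldsymbol{L}} \odot \boldsymbol{F})$ coincide: both arise from the disjoint union of $K$, $L$ and $F$ by identifying $w_i$ with $a_i$ and $b_i$ with $u_i$ for $i \in [k]$, the two labelled graphs differing only in which vertices carry the $k$ labels, which $\soe$ forgets. Hence, applying the definition of $\boldsymbol{F} \sim_{\mathcal{F}} \boldsymbol{F}'$ to the test graph $\boldsymbol{K}_{\boldsymbol{L}}$,
\[
	\soe\!\big(\boldsymbol{K} \odot (\boldsymbol{L} \cdot \boldsymbol{F})\big) \in \mathcal{F}
	\iff \soe\!\big(\boldsymbol{K}_{\boldsymbol{L}} \odot \boldsymbol{F}\big) \in \mathcal{F}
	\iff \soe\!\big(\boldsymbol{K}_{\boldsymbol{L}} \odot \boldsymbol{F}'\big) \in \mathcal{F}
	\iff \soe\!\big(\boldsymbol{K} \odot (\boldsymbol{L} \cdot \boldsymbol{F}')\big) \in \mathcal{F},
\]
and since $\boldsymbol{K}$ was arbitrary, $\boldsymbol{L} \cdot \boldsymbol{F} \sim_{\mathcal{F}} \boldsymbol{L} \cdot \boldsymbol{F}'$. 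The only genuine work is the routine verification of the displayed graph identity and of the well-definedness of $\boldsymbol{K}_{\boldsymbol{L}}$; the one subtlety to keep in mind is that $\boldsymbol{L}$'s in-labels $\boldsymbol{a}$ may coincide with some of its out-labels $\boldsymbol{b}$, which is harmless. I do not anticipate any real obstacle.
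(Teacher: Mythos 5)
Your proof is correct and follows essentially the same approach as the paper. For item~(1) you absorb one factor into the test graph $\boldsymbol{K}$ exactly as the paper does; for item~(2) your $\boldsymbol{K}_{\boldsymbol{L}}$ is precisely the graph the paper writes as $\boldsymbol{L}^* \cdot \boldsymbol{K}$ (where $\boldsymbol{L}^*$ swaps in- and out-labels), and the key identity $\soe(\boldsymbol{K} \odot (\boldsymbol{L}\cdot\boldsymbol{F})) = \soe(\boldsymbol{K}_{\boldsymbol{L}} \odot \boldsymbol{F})$ is the same observation the paper makes; you just verify well-definedness and the graph identity a bit more explicitly rather than invoking the closure of $\mathcal{D}$ under series composition.
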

	\begin{proof}
		Let $\boldsymbol{K} = (K, \boldsymbol{u}) \in \mathcal{D}(k)$ be arbitrary. 
		Then $\soe((\boldsymbol{K} \odot \boldsymbol{F}_1) \odot \boldsymbol{F}_2) \in \mathcal{F} \Leftrightarrow \soe((\boldsymbol{K} \odot \boldsymbol{F}_1) \odot \boldsymbol{F}'_2) \in \mathcal{F}
		\Leftrightarrow \soe(\boldsymbol{K} \odot \boldsymbol{F}'_1 \odot \boldsymbol{F}'_2) \in \mathcal{F}$.
		
		For a $(k,k)$-bilabelled graph $\boldsymbol{L} = (L, \boldsymbol{u}, \boldsymbol{v})$, write $\boldsymbol{L}^* \coloneqq (L, \boldsymbol{v}, \boldsymbol{u})$ for the $(k,k)$-bilabelled graph obtained by swapping the in- and out-labels.
		Then $\soe(\boldsymbol{K} \odot (\boldsymbol{L} \cdot \boldsymbol{F})) = \soe( (\boldsymbol{L}^* \cdot \boldsymbol{K}) \odot \boldsymbol{F})$.
		Thus the second claim follows from the first.
	\end{proof}
	
	The key observation is that since $\sim_{\mathcal{F}}$ is preserved under these operations (\cref{lem:fefvau}) we can keep track of homomorphism counts of the graphs in each of the states using finite-dimensional vector spaces.
		
	Suppose wlog that $V(G)$ and $V(H)$ are disjoint.
	To every state $q \in Q$, 
	we associate vector spaces $S_d(q) \subseteq \mathbb{R}^{V(G)^k \cup V(H)^k}$ for $d \geq 1$.
	Recall that for a distinctly $k$-labelled graph $\boldsymbol{F} \in \mathcal{D}(k)$, 
	the vectors $\boldsymbol{F}_G \in \mathbb{R}^{V(G)^k}$ and $\boldsymbol{F}_H \in \mathbb{R}^{V(H)^k}$ are the homomorphism tensors of $\boldsymbol{F}$ w.r.t.\@ $G$ and $H$.
	Let $\boldsymbol{F}_G \oplus \boldsymbol{F}_H \coloneqq \left(\begin{smallmatrix}
		\boldsymbol{F}_G \\
		\boldsymbol{F}_H
	\end{smallmatrix} \right) \in \mathbb{R}^{V(G)^k \cup V(H)^k}$ denote the vector obtained by appending $\boldsymbol{F}_H$ to $\boldsymbol{F}_G$
	and define
	\[
		S_d(q) \coloneqq \left< \{ \boldsymbol{F}_G \oplus \boldsymbol{F}_H \mid \boldsymbol{F} \in \mathcal{TW}_d(k) \text{ in } q \} \right> \subseteq \mathbb{R}^{V(G)^k \cup V(H)^k},
	\]
	i.e.\ $S_d(q)$ is the subspace of $\mathbb{R}^{V(G)^k \cup V(H)^k}$ spanned by the stacked homomorphism tensors $\boldsymbol{F}_G \oplus \boldsymbol{F}_H$ for $\boldsymbol{F} \in \mathcal{TW}_d(k)$ in state~$q$.
	Clearly,  $S_d(q)$ is a subspace of $S_{d+1}(q)$ for every $d \geq 1$.
	Finally, let $S(q) \coloneqq \bigcup_{d\geq 1} S_d(q)$.

	Let $\mathcal{F}_{\geq k} \coloneqq \{F \in \mathcal{F} \mid \lvert V(F) \rvert \geq k\}$
	and write $\boldsymbol{1}_G$, $\boldsymbol{1}_H$ for the vectors in $\mathbb{R}^{V(G)^k \cup V(H)^k}$ which are one on $V(G)^k$ and $V(H)^k$, respectively, and zero everywhere else.
	The following \cref{lem:readout} describes how to recover homomorphism counts from the vectors in the~$S(q)$.
	
	\begin{lemma} \label{lem:readout}
		Two graph $G$ and $H$ are homomorphism indistinguishable over~$\mathcal{F}_{\geq k}$
		if and only if 
		for every $q \in A$ and every $v \in S(q)$ it holds that $\boldsymbol{1}_G^T v = \boldsymbol{1}_H^T v$. 
	\end{lemma}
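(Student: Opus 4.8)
The plan is to unwind both sides of the equivalence into statements about homomorphism counts of the underlying unlabelled graphs of $\mathcal{TW}(k)$, and then conclude by a single linearity argument. First I would record two elementary facts linking homomorphism tensors to counts: for every $\boldsymbol{F} \in \mathcal{D}(k)$ the sum-of-entries correspondence from \cref{sec:bilabelled} together with the definition of $\boldsymbol{1}_G$ and $\boldsymbol{1}_H$ gives
\[
	\boldsymbol{1}_G^T(\boldsymbol{F}_G \oplus \boldsymbol{F}_H) = \soe(\boldsymbol{F}_G) = \hom(\soe \boldsymbol{F}, G), \qquad \boldsymbol{1}_H^T(\boldsymbol{F}_G \oplus \boldsymbol{F}_H) = \hom(\soe \boldsymbol{F}, H).
\]
Hence, evaluated on the spanning vectors $\boldsymbol{F}_G \oplus \boldsymbol{F}_H$ with $\boldsymbol{F} \in \mathcal{TW}(k)$ lying in $q$, the condition on the right-hand side of the lemma reads exactly $\hom(\soe \boldsymbol{F}, G) = \hom(\soe \boldsymbol{F}, H)$.

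Next I would translate the index sets. A state $q \in Q$ is well-definedly \emph{accepting} iff $\soe \boldsymbol{F} \in \mathcal{F}$ for some (equivalently, every) $\boldsymbol{F}$ in $q$: taking $\boldsymbol{K} = \boldsymbol{1}$ in \cref{def:recog} yields $\soe(\boldsymbol{1} \odot \boldsymbol{F}) = \soe \boldsymbol{F}$, so $\boldsymbol{F}_1 \sim_{\mathcal{F}} \boldsymbol{F}_2$ implies $\soe \boldsymbol{F}_1 \in \mathcal{F} \iff \soe \boldsymbol{F}_2 \in \mathcal{F}$. By the discussion following \cref{lem:tw-size}, the underlying unlabelled graphs of $\mathcal{TW}(k)$ are precisely the graphs of treewidth $\leq k-1$ on at least $k$ vertices; since $\mathcal{F}$ has treewidth $\leq k-1$, every $F \in \mathcal{F}_{\geq k}$ equals $\soe \boldsymbol{F}$ for some $\boldsymbol{F} \in \mathcal{TW}(k)$, which then lies in an accepting state, and conversely $\soe \boldsymbol{F} \in \mathcal{F}_{\geq k}$ whenever $\boldsymbol{F} \in \mathcal{TW}(k)$ lies in an accepting state. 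Therefore $G \equiv_{\mathcal{F}_{\geq k}} H$ holds if and only if $\hom(\soe \boldsymbol{F}, G) = \hom(\soe \boldsymbol{F}, H)$ for every $\boldsymbol{F} \in \mathcal{TW}(k) = \bigcup_{d \geq 1} \mathcal{TW}_d(k)$ that lies in some accepting state.

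Finally comes the linear-algebra step. The set $S(q) = \bigcup_{d \geq 1} S_d(q)$ is an increasing union of subspaces of the finite-dimensional space $\mathbb{R}^{V(G)^k \cup V(H)^k}$, hence is itself a subspace, and since $\mathcal{TW}_d(k)$ is increasing with union $\mathcal{TW}(k)$ it is spanned by $\{\boldsymbol{F}_G \oplus \boldsymbol{F}_H \mid \boldsymbol{F} \in \mathcal{TW}(k) \text{ in } q\}$. The map $v \mapsto \boldsymbol{1}_G^T v - \boldsymbol{1}_H^T v$ is linear, so it vanishes on all of $S(q)$ if and only if it vanishes on this spanning set. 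Combining the three steps: the right-hand side of the lemma holds if and only if $\hom(\soe \boldsymbol{F}, G) = \hom(\soe \boldsymbol{F}, H)$ for every accepting $q$ and every $\boldsymbol{F} \in \mathcal{TW}(k)$ in $q$, which by the previous paragraph is exactly $G \equiv_{\mathcal{F}_{\geq k}} H$.

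The argument is essentially bookkeeping, so no substantial obstacle arises; the only point requiring care is the bidirectional matching between $\mathcal{F}_{\geq k}$ and the accepting-state members of $\mathcal{TW}(k)$ — specifically that restricting to graphs on at least $k$ vertices is both forced (as $\mathcal{TW}(k)$ contains no smaller graphs) and harmless, and that ``accepting'' is a well-defined property of states. I would therefore keep the written proof short.
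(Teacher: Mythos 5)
Your proposal is correct and follows essentially the same route as the paper's own proof: use the sum-of-entries correspondence to read off $\hom(\soe\boldsymbol{F},G)$ and $\hom(\soe\boldsymbol{F},H)$ from $\boldsymbol{1}_G^T$ and $\boldsymbol{1}_H^T$ applied to the spanning vectors $\boldsymbol{F}_G\oplus\boldsymbol{F}_H$, observe that the accepting-state members of $\mathcal{TW}(k)$ have underlying graphs exactly $\mathcal{F}_{\geq k}$ (via \cref{lem:treedec1,lem:treedec2}), and extend by linearity. The only cosmetic difference is that you make explicit, via $\boldsymbol{K}=\boldsymbol{1}$, that ``accepting'' is well-defined on $\sim_{\mathcal{F}}$-classes, which the paper asserts without proof.
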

	\begin{proof}
		For the forward direction, note that $S(q)$ is spanned by the  $\boldsymbol{F}_G \oplus \boldsymbol{F}_H$ where $\boldsymbol{F} \in \mathcal{TW}(k)$ is in $q$.
		Let $\boldsymbol{F}$ in $q \in A$ be arbitrary.
		Then $\soe \boldsymbol{F} \eqqcolon F \in \mathcal{F}_{\geq k}$ and it holds that $ \boldsymbol{1}_G^T (\boldsymbol{F}_G \oplus \boldsymbol{F}_H) = \soe(\boldsymbol{F}_G) = \hom(F, G) = \hom(F, H) = \boldsymbol{1}_H^T (\boldsymbol{F}_G \oplus \boldsymbol{F}_H)$.
		Conversely, let $F \in \mathcal{F}_{\geq k}$ be arbitrary. Since $\tw F \leq k-1$, by \cref{lem:treedec1}, there exists $\boldsymbol{u} \in V(F)^k$ such that $\boldsymbol{F} \coloneqq (F, \boldsymbol{u}) \in \mathcal{TW}(k)$.
		Furthermore, $\boldsymbol{F}$ belongs to some accepting~$q \in Q$. 
		Thus, $\boldsymbol{F}_G \oplus \boldsymbol{F}_H \in S(q)$, and hence
		$\hom(F, G) = \boldsymbol{1}_G^T (\boldsymbol{F}_G \oplus \boldsymbol{F}_H) = \boldsymbol{1}_H^T (\boldsymbol{F}_G \oplus \boldsymbol{F}_H) = \hom(F, H)$.
	\end{proof}
	
	The bound in \cref{thm:decidable} is essentially due to a dimension argument.
	Each of the spaces $S(q) \subseteq \mathbb{R}^{V(G)^k \cup V(H)^k}$ is of dimension at most $2n^k$.
Thus, the following \cref{lem:chain} yields that $S_d(q) = S(q)$ for $d \coloneqq 2Cn^k$ and all $q \in Q$.
	
	\begin{lemma} \label{lem:chain}
		If $d \geq 1$ is such that $S_d(q) = S_{d+1}(q)$ for all $q \in Q$
		then $S_d(q) = S(q)$ for all $q \in Q$.
	\end{lemma}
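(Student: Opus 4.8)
The plan is to show that if the chain of subspaces stabilises from level $d$ to level $d+1$ uniformly over all states $q \in Q$, then no new vectors appear at any later level either. The natural approach is induction on $d' \geq d$: I want to prove that $S_{d'}(q) = S_d(q)$ for all $q \in Q$ and all $d' \geq d$, from which $S(q) = \bigcup_{d' \geq 1} S_{d'}(q) = S_d(q)$ follows immediately since the $S_{d'}(q)$ form an increasing chain. The base case $d' = d$ is trivial, and the step from $d'$ to $d'+1$ is where the work lies, so the induction hypothesis I carry is: $S_{d'}(q) = S_d(q)$ for every $q$, together with the stabilisation assumption $S_d(q) = S_{d+1}(q)$ for every $q$.

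\textbf{The inductive step.} Fix $q \in Q$ and take $\boldsymbol{F} \in \mathcal{TW}_{d'+1}(k)$ in state $q$; I must show $\boldsymbol{F}_G \oplus \boldsymbol{F}_H \in S_d(q)$. Apply \cref{lem:twk-gen} to $\boldsymbol{F}$ and consider the four cases. In case (i), $\boldsymbol{F} = \boldsymbol{1} \in \mathcal{TW}_1(k) \subseteq \mathcal{TW}_d(k)$, so we are done. In cases (ii) and (iii), $\boldsymbol{F} = \boldsymbol{B} \cdot \boldsymbol{F}'$ where $\boldsymbol{B}$ is a product of generators from $\mathcal{B}(k)$ (or a single $\boldsymbol{J}^i$) and $\boldsymbol{F}' \in \mathcal{TW}(k)$ has strictly fewer edges or vertices than $\boldsymbol{F}$; crucially, one should verify that $\boldsymbol{F}' \in \mathcal{TW}_{d'}(k)$ or at least $\mathcal{TW}_{d'+1}(k)$ — in fact, since removing edges doesn't change the decomposition tree, in case (ii) the depth is unchanged, and in case (iii) removing the leaf bag of the decomposition tree yields depth $\leq d'$, so $\boldsymbol{F}' \in \mathcal{TW}_{d'}(k)$. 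By \cref{lem:fefvau}, $\boldsymbol{F}'$ lies in some state $q'$ with $\boldsymbol{B} \cdot \boldsymbol{F}'$ in state $q$ for every representative of $q'$, and the map $v \mapsto \boldsymbol{B}_G \oplus \boldsymbol{B}_H$ acting by matrix multiplication sends $S_{d'}(q') = S_d(q')$ (induction hypothesis) linearly into $S_{d'+1}(q) \subseteq \dots$; I need to argue this image lands in $S_{d+1}(q)$, hence in $S_d(q)$ by stabilisation. In case (iv), $\boldsymbol{F} = \boldsymbol{F}_1 \odot \boldsymbol{F}_2$ with $\boldsymbol{F}_1, \boldsymbol{F}_2 \in \mathcal{TW}(k)$ on fewer vertices than $\boldsymbol{F}$; one checks $\boldsymbol{F}_1, \boldsymbol{F}_2 \in \mathcal{TW}_{d'}(k)$ (a subtree of the decomposition tree has no larger depth), and the Schur product gives a bilinear map, so $\boldsymbol{F}_G \oplus \boldsymbol{F}_H$ is obtained by Schur-multiplying a spanning vector of $S_{d'}(q_1)$ with one of $S_{d'}(q_2)$.

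\textbf{The main obstacle} is that cases (ii)--(iv) do not directly express $S_{d'+1}(q)$ as a linear image of spaces at level $d'$ — the Schur product in case (iv) is bilinear, not linear, and the composition with $\boldsymbol{B}$ in cases (ii)--(iii) must be tracked through the states. The clean way around this is to work spanning-vector by spanning-vector rather than at the level of whole subspaces: every generator $\boldsymbol{F} \in \mathcal{TW}_{d'+1}(k)$ in $q$ decomposes via \cref{lem:twk-gen} into pieces in $\mathcal{TW}_{d'}(k)$ (after confirming the depth bounds, which is the one genuinely fiddly verification), each piece's stacked homomorphism tensor lies in $S_{d'}(q_i) = S_d(q_i) = S_{d+1}(q_i)$ by the induction hypothesis and stabilisation, and then applying the corresponding algebraic operation (matrix-vector product or Schur product, using the correspondences from \cref{sec:bilabelled}) together with \cref{lem:fefvau} produces a vector that, by the same decomposition logic run at levels $d, d+1$ instead of $d', d'+1$, must already lie in $S_{d+1}(q)$; stabilisation then places it in $S_d(q)$. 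Since the spanning vectors of $S_{d'+1}(q)$ all lie in $S_d(q)$, we conclude $S_{d'+1}(q) \subseteq S_d(q)$, and the reverse inclusion is automatic, completing the induction.
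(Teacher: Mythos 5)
Your overall plan (show $S_{d'}(q)\subseteq S_d(q)$ for all $d'\geq d$ by induction) matches the paper's. However, there is a genuine gap in how you handle the cases of \cref{lem:twk-gen}, and it is precisely the step you flag as ``the one genuinely fiddly verification.''

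In case (iv) you assert ``$\boldsymbol{F}_1,\boldsymbol{F}_2\in\mathcal{TW}_{d'}(k)$ (a subtree of the decomposition tree has no larger depth).'' But ``no larger depth'' only gives $\mathcal{TW}_{d'+1}(k)$, not $\mathcal{TW}_{d'}(k)$. Inspecting the proof of \cref{lem:twk-gen}: $T_1$ is $\{r\}$ together with the subtree rooted at the child $s_1$, so if the deepest root-to-leaf path of $T$ passes through $s_1$, then $T_1$ has the same depth as $T$, and $\boldsymbol{F}_1\in\mathcal{TW}_{d'+1}(k)\setminus\mathcal{TW}_{d'}(k)$. The same issue appears in case (ii), where you correctly observe the depth is unchanged --- which means the induction hypothesis on depth does not apply to $\boldsymbol{F}'$ at all. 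In both cases you would be applying the hypothesis $S_{d'}(\cdot)=S_d(\cdot)$ to a graph still sitting at depth $d'+1$. To salvage the case-by-case route one would need a nested secondary induction (within a fixed depth, on the number of vertices and then on the number of edges), which you do not set up.

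The paper sidesteps this entirely by not using the four cases of \cref{lem:twk-gen} one at a time. It instead writes $\boldsymbol{F}\in\mathcal{TW}_{d+i+1}(k)$ in one shot as $\prod_{ij\in A}\boldsymbol{A}^{ij}\cdot\bigodot_{\ell\in L}\boldsymbol{J}^\ell\boldsymbol{F}^\ell$, i.e.\ it applies case (ii) once at the root bag, then case (iv) repeatedly to split off all subtrees of the root, then case (iii) once per subtree. The resulting pieces $\boldsymbol{F}^\ell$ correspond to the subtrees rooted at the children of $r$ and therefore all have depth $\leq d+i$, so the induction hypothesis on depth applies cleanly to each. After replacing each $\boldsymbol{F}^\ell_G\oplus\boldsymbol{F}^\ell_H$ by a linear combination of tensors of graphs in $\mathcal{TW}_d(k)$ (state preserved by \cref{lem:fefvau}) and using multilinearity, the reassembled graphs land in $\mathcal{TW}_{d+1}(k)$, hence in $S_{d+1}(q)=S_d(q)$. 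Your sketch gestures at this conclusion (``by the same decomposition logic run at levels $d,d+1$'') but does not obtain correct depth bounds to make it go through. If you want to keep your case analysis, make the measure you induct on explicit --- lexicographically depth, then $|V|$, then $|E|$ --- and re-derive which quantity decreases in each case; otherwise adopt the paper's combined decomposition.
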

	\begin{proof}
		We argue that $S_d(q) \subseteq S_{d+i}(q)$ for all $i \geq 1$ by induction on $i$. The base case holds by assumption.
		The space $S_{d+i+1}(q)$ is spanned by the $\boldsymbol{F}_G \oplus \boldsymbol{F}_H$ where $\boldsymbol{F} \in \mathcal{TW}_{d+i+1}(k)$ 
		is in $q$.
		For such $\boldsymbol{F}$, by \cref{lem:twk-gen}, there exist $A \subseteq \binom{[k]}{2}$, $L \subseteq [k]$,
		and $\boldsymbol{F}^\ell \in \mathcal{TW}_{d+i}(k)$ for $\ell \in L$ such that 
		\[
			\boldsymbol{F} = \prod_{ij \in A} \boldsymbol{A}^{ij} \cdot  \bigodot_{\ell \in L} \boldsymbol{J}^\ell \boldsymbol{F}^\ell.
		\]
		Let $q_\ell$ denote the state of $\boldsymbol{F}^\ell$.
		By assumption, there exist $\boldsymbol{K}^{\ell m} \in \mathcal{TW}_d(k)$ in state $q_\ell$  and $\alpha_m \in \mathbb{R}$ such that $\boldsymbol{F}^\ell_G \oplus \boldsymbol{F}^\ell_H = \sum \alpha_m \boldsymbol{K}^{\ell m}_G \oplus \boldsymbol{K}^{\ell m}_H$.
		By \cref{lem:fefvau},
		\[
			\boldsymbol{F} \sim_{\mathcal{F}} \prod_{ij \in A} \boldsymbol{A}^{ij}  \cdot  \bigodot_{\ell \in L} \boldsymbol{J}^\ell \boldsymbol{K}^{\ell m}.
		\]
		for all $m$.
		Thus, $\boldsymbol{F}_G \oplus \boldsymbol{F}_H$ can be written as linear combination of vectors in $S_{d+i}(q) \subseteq S_d(q)$, by induction.
	\end{proof}

	This concludes the preparations for the proof of \cref{thm:decidable}.
	
	\begin{proof}[Proof of \cref{thm:decidable}]
		It suffices to prove the backward implication.
		Since $k \leq k^{2Cn^k}$, it suffices to show that $G \equiv_{\mathcal{F}_{\geq k}} H$ by verifying the condition in \cref{lem:readout}.
		By \cref{lem:chain}, $S_d(q) = S(q)$ for $d \coloneqq 2Cn^k$ and all $q \in Q$.
		Hence, $S(q)$ is spanned by the $\boldsymbol{F}_G \oplus \boldsymbol{F}_H$ where $\boldsymbol{F} \in \mathcal{TW}_d(k)$ is in state~$q$.
		By \cref{lem:tw-size}, these graphs has at most $\max\{k^{d}, d\} = \max\{  k^{2Cn^k}, 2Cn^k \}$ vertices.
		Thus,
		\(
			\boldsymbol{1}_G^T (\boldsymbol{F}_G \oplus \boldsymbol{F}_H) = \hom(\soe \boldsymbol{F}, G)
			= \hom(\soe \boldsymbol{F}, H)
			= \boldsymbol{1}_H^T (\boldsymbol{F}_G \oplus \boldsymbol{F}_H),
		\)
		as desired.
	\end{proof}

	Finally, we adapt the techniques developed above to prove the following analogue of \cref{thm:decidable} for graph classes of bounded pathwidth:
	
	\begin{theorem} \label{thm:decidable-pw}
		Let $k \geq 1$.
		Let $\mathcal{F}$ be a graph class of pathwidth $\leq k-1$ with $k$-recognisability index $C$. 
		For graphs $G$ and $H$ on at most $n$ vertices, 
		\[
		G \equiv_{\mathcal{F}} H \iff G \equiv_{\mathcal{F}_{\leq f(n, k, C)}} H
		\]
		where $f(n, k, C) \coloneqq 2Cn^k + k - 1$.
	\end{theorem}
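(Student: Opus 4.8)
The plan is to transfer the proof of \cref{thm:decidable} to the pathwidth setting, substituting $\mathcal{PW}_d(k)$ for $\mathcal{TW}_d(k)$, \cref{lem:pwk-size} for \cref{lem:tw-size}, and \cref{lem:pwk-gen} for \cref{lem:twk-gen}. Fix $\mathcal{F}$ as in the statement, let $Q$ be the set of equivalence classes of $\sim_{\mathcal{F}}$ (hence $|Q| = C$) with accepting set $A \subseteq Q$, assume $V(G) \cap V(H) = \emptyset$, and for $q \in Q$ and $d \geq 1$ define $S_d(q) \coloneqq \langle \{ \boldsymbol{F}_G \oplus \boldsymbol{F}_H \mid \boldsymbol{F} \in \mathcal{PW}_d(k) \text{ in } q \} \rangle \subseteq \mathbb{R}^{V(G)^k \cup V(H)^k}$ and $S(q) \coloneqq \bigcup_{d \geq 1} S_d(q)$. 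Writing $\mathcal{F}_{\geq k} \coloneqq \{ F \in \mathcal{F} \mid |V(F)| \geq k \}$, the readout lemma carries over unchanged: since every $F$ with $\pw F \leq k-1$ on at least $k$ vertices equals $\soe \boldsymbol{F}$ for some $\boldsymbol{F} \in \mathcal{PW}(k)$ by \cref{lem:treedec1}, one has $G \equiv_{\mathcal{F}_{\geq k}} H$ if and only if $\boldsymbol{1}_G^T v = \boldsymbol{1}_H^T v$ for all $q \in A$ and $v \in S(q)$.

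Next I would establish the analogue of \cref{lem:chain}: if $S_d(q) = S_{d+1}(q)$ for all $q$, then $S_d(q) = S(q)$ for all $q$. One inducts on $i \geq 1$ to prove $S_{d+i}(q) = S_d(q)$; in the inductive step a given $\boldsymbol{F} \in \mathcal{PW}_{d+i+1}(k)$ in state $q$ is decomposed via \cref{lem:pwk-gen}. The only difference from the treewidth proof is that \cref{lem:pwk-gen} has no gluing case — so only \cref{lem:fefvau}(2) is invoked — but its edge-stripping case keeps the vertex set fixed; hence one applies \cref{lem:pwk-gen} up to twice to reach $\boldsymbol{F} = \boldsymbol{L} \cdot \boldsymbol{F}'$ with $\boldsymbol{L} = \prod_{ij \in A'} \boldsymbol{A}^{ij} \cdot \boldsymbol{J}^\ell$ for some $A' \subseteq \binom{[k]}{2}$ and $\ell \in [k]$, and $\boldsymbol{F}' \in \mathcal{PW}(k)$ on one fewer vertex (the degenerate case $\boldsymbol{F} = \prod_{ij \in A'} \boldsymbol{A}^{ij} \cdot \boldsymbol{1} \in \mathcal{PW}_1(k)$ is immediate). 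As a path decomposition as in \cref{def:pwk} has exactly $|V(\boldsymbol{F}')| - k + 1$ bags, this yields $\boldsymbol{F}' \in \mathcal{PW}_{d+i}(k)$. Expanding $\boldsymbol{F}'_G \oplus \boldsymbol{F}'_H$ via the inductive hypothesis as a linear combination of $\boldsymbol{K}_G \oplus \boldsymbol{K}_H$ with $\boldsymbol{K} \in \mathcal{PW}_d(k)$ in the state of $\boldsymbol{F}'$, applying the linear map on homomorphism tensors induced by $\boldsymbol{L}$, and using \cref{lem:fefvau}(2), one writes $\boldsymbol{F}_G \oplus \boldsymbol{F}_H$ as a combination of the $(\boldsymbol{L} \cdot \boldsymbol{K})_G \oplus (\boldsymbol{L} \cdot \boldsymbol{K})_H$, each $\boldsymbol{L} \cdot \boldsymbol{K}$ being in state $q$ and, since $\boldsymbol{L}$ introduces only a single new vertex, in $\mathcal{PW}_{d+1}(k)$. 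Thus $\boldsymbol{F}_G \oplus \boldsymbol{F}_H \in S_{d+1}(q) = S_d(q)$.

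Finally, each $S(q)$ has dimension at most $2n^k$, so $d \mapsto \sum_{q \in Q} \dim S_d(q)$ is a non-decreasing integer sequence bounded by $2Cn^k$ which is already at least $1$ at $d = 1$ (as $\boldsymbol{1} \in \mathcal{PW}_1(k)$ contributes the all-ones vector); hence it is constant on some step $d \to d+1$ with $d \leq 2Cn^k$, which by the chain lemma forces $S_{2Cn^k}(q) = S(q)$ for every $q$. Consequently $S(q)$ is spanned by the $\boldsymbol{F}_G \oplus \boldsymbol{F}_H$ with $\boldsymbol{F} \in \mathcal{PW}_{2Cn^k}(k)$, whose underlying graphs have at most $2Cn^k + k - 1 = f(n,k,C)$ vertices by \cref{lem:pwk-size}. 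Assuming $G \equiv_{\mathcal{F}_{\leq f(n,k,C)}} H$, reading off $\hom(\soe \boldsymbol{F}, G) = \hom(\soe \boldsymbol{F}, H)$ on this spanning set and invoking the readout lemma gives $G \equiv_{\mathcal{F}_{\geq k}} H$; since $f(n,k,C) \geq k$ this also covers the finitely many graphs in $\mathcal{F}$ on fewer than $k$ vertices, so $G \equiv_{\mathcal{F}} H$, while the converse is trivial. The only genuine obstacle is this vertex bookkeeping in the chain lemma — ensuring the reconstructed graphs stay in $\mathcal{PW}_{d+1}(k)$ despite the edge-stripping step not shrinking the graph; everything else is a faithful transcription of the proof of \cref{thm:decidable}.
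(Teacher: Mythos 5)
Your proof is correct and follows essentially the same route as the paper's (which is only a brief sketch): redefine the spaces $S_d(q)$ over $\mathcal{PW}_d(k)$, observe that every $\boldsymbol{F} \in \mathcal{PW}_{d+1}(k)$ factors as $\boldsymbol{A}\boldsymbol{J}^\ell \boldsymbol{F}'$ with $\boldsymbol{F}' \in \mathcal{PW}_d(k)$ so that the chain lemma goes through, stabilise at $d = 2Cn^k$, and bound the vertex count via \cref{lem:pwk-size}. You fill in precisely the details the paper elides — the need to combine the edge-stripping and vertex-dropping cases of \cref{lem:pwk-gen}, the degenerate $\boldsymbol{1}$ case, and the bookkeeping that a bag count of $|V(\boldsymbol{F}')| - k + 1$ forces $\boldsymbol{F}' \in \mathcal{PW}_{d}(k)$ — and these all check out.
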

	\begin{proof}
		We sketch how the proof of \cref{thm:decidable} needs to be adapted.
		Define $S_d(q)$ and $S(q)$ as above with $\mathcal{PW}_d(k)$ instead of $\mathcal{TW}_d(k)$, cf.\ \cref{sec:labelled-pw}.
		
		By \cref{lem:pwk-gen},
		Observing that for every $\boldsymbol{F} \in \mathcal{PW}_{d+1}(k)$ there exists $A \subseteq \binom{[k]}{2}$, $\ell \in [k]$, and $\boldsymbol{F}' \in \mathcal{PW}_{d}(k)$ such that
		\(
			\boldsymbol{F} =  \boldsymbol{A} \boldsymbol{J}^\ell \boldsymbol{F}'
		\)
		where $\boldsymbol{A} \coloneqq \prod_{ij \in A} \boldsymbol{A}^{ij}$, \cref{lem:chain} follows analogously.
		Thus, $S_d(q) = S(q)$ for all $q \in Q$ when $d \coloneqq 2Cn^k$.
		By \cref{lem:pwk-size}, this space is spanned by homomorphism tensors of graphs of size at most $2Cn^k +k -1$. 
		The statement follows as in \cref{thm:decidable}.
	\end{proof}

	Given \cref{thm:decidable,thm:decidable-pw},
	we suggest to consider the following general question:
	For a graph class $\mathcal{F}$, is there a function $f \colon \mathbb{N} \to \mathbb{N}$ such that $G \equiv_{\mathcal{F}} H$ if and only if $G \equiv_{\mathcal{F}_{\leq f(n)}} H$ for all graphs $G$ and $H$ on at most $n$ vertices?
	If such an $f$ exists, can be computed, and if membership in $\mathcal{F}$ is decidable then \HomIndP{$\mathcal{F}$} is decidable.
	
	\Cref{thm:decidable,thm:decidable-pw} give such functions for every recognisable graph class of bounded treewidth.
	By~\cite{lovasz_operations_1967}, for the class $\mathcal{G}$ of all graphs, $f$ can be taken to be the identity $n \mapsto n$.
	By~\cite{mancinska_quantum_2020}, there exists no such function for the class of all planar graphs $\mathcal{P}$ which is computable.

	\section{Modular Homomorphism Indistinguishability in Polynomial Time}
	\label{sec:ptime}

	The insight that yielded \cref{thm:decidable} is that the spaces $S(q)$ have polynomial dimension. 
	In this section, we strengthen this result by showing that bases $B(q)$ for the spaces $S(q)$ can be efficiently computed.
	A technical difficulty arising here is that the numbers produced in the process can be of doubly exponential magnitude. 
	In order to overcome this problem, we first consider the following problem:
	
	\dproblem{\textsc{ModHomInd}($\mathcal{F}$)}{Graphs $G$ and $H$, a prime $p$ in binary}{Are $G$ and $H$ homomorphism indistinguishable over $\mathcal{F}$ modulo~$p$?}
	
	It is easy to see that \cref{lem:readout,lem:chain} also hold when constructing the vector spaces over the finite field $\mathbb{F}_p$.
	Doing this has the advantage that all arithmetic operations can be carried out in $(\log p)^{O(1)}$.
	We prove the following:

	\begin{theorem} \label{thm:mod}
		Let $k \geq 1$.
		If $\mathcal{F}$ is a $k$-recognisable graph class of tree\-width $\leq k-1$
		then \textup{\textsc{ModHomInd}($\mathcal{F}$)} is in polynomial time.
	\end{theorem}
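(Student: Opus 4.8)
The plan is to make the dimension argument behind \cref{thm:decidable} effective by computing, for each state $q$, an explicit basis of the space $S(q)$, working throughout over the field $\mathbb{F}_p$ so that every entry stays in $\{0,\dots,p-1\}$ and every arithmetic operation costs $(\log p)^{O(1)}$. As noted in the text, \cref{lem:readout,lem:chain} hold verbatim when $S_d(q)$ and $S(q)$ are formed inside $\mathbb{F}_p^{V(G)^k\cup V(H)^k}$; in particular $\boldsymbol{1}_G^{T}(\boldsymbol{F}_G\oplus\boldsymbol{F}_H)=\hom(\soe\boldsymbol{F},G)\bmod p$, so $G\equiv_{\mathcal{F}_{\geq k}}^{p}H$ holds if and only if $\boldsymbol{1}_G^{T}v=\boldsymbol{1}_H^{T}v$ in $\mathbb{F}_p$ for every $q\in A$ and every $v\in S(q)$. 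Because $\mathcal{F}$ is fixed, the set of states $Q$, the accepting states $A\subseteq Q$, a representative in $\mathcal{TW}(k)$ of each state, the induced transition maps $\delta_{\boldsymbol{A}^{ij}},\delta_{\boldsymbol{J}^i}\colon Q\to Q$ and the gluing map $\bullet\colon Q\times Q\to Q$ (all well defined by \cref{lem:fefvau}), and the finite collection of graphs in $\mathcal{F}$ on fewer than $k$ vertices, are constants that can be hard-wired. Those small graphs are dealt with by brute force, computing $\hom(F,G)$ and $\hom(F,H)$ (integers below $n^{k-1}$) and comparing them modulo $p$; it then remains to decide $G\equiv_{\mathcal{F}_{\geq k}}^{p}H$.

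For this I would compute bases $B(q)\subseteq\mathbb{F}_p^{V(G)^k\cup V(H)^k}$ of the $S(q)$ by a fixpoint iteration. Initialise each $B(q)$ as a basis of the span of $\{\boldsymbol{F}_G\oplus\boldsymbol{F}_H\mid \boldsymbol{F}\in\mathcal{TW}_1(k)\text{ in }q\}$; there are only constantly many such $\boldsymbol{F}$ (graphs on exactly $k$ vertices with all of them labelled), and each $\boldsymbol{F}_G$ is the $0/1$-tensor $\boldsymbol v\mapsto\prod_{u_iu_j\in E(F)}[v_iv_j\in E(G)]$, computable in time $n^{O(k)}$. Then repeat: for every $ij\in\binom{[k]}{2}$ and $q\in Q$, append to $B(\delta_{\boldsymbol{A}^{ij}}(q))$ the vectors obtained by applying $\boldsymbol{A}^{ij}$ on the left to each $v\in B(q)$, i.e.\ pointwise multiplication by $\boldsymbol v\mapsto[v_iv_j\in E(G)]$ on the $G$-block and by $\boldsymbol v\mapsto[v_iv_j\in E(H)]$ on the $H$-block; for every $i\in[k]$ and $q$, append to $B(\delta_{\boldsymbol{J}^i}(q))$ the vectors obtained by applying $\boldsymbol{J}^i$ on the left (sum out coordinate $i$, then rebroadcast it); for every $q_1,q_2\in Q$, append to $B(q_1\bullet q_2)$ the Schur products $v_1\odot v_2$ for $v_1\in B(q_1)$, $v_2\in B(q_2)$; finally reduce each $B(q)$ back to a basis by Gaussian elimination over $\mathbb{F}_p$. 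Stop once no $B(q)$ has grown. Finally, accept iff none of the small-graph comparisons failed and $\boldsymbol{1}_G^{T}v=\boldsymbol{1}_H^{T}v$ in $\mathbb{F}_p$ for every $q\in A$ and every $v\in B(q)$, which by the $\mathbb{F}_p$-analogue of \cref{lem:readout} is equivalent to $G\equiv_{\mathcal{F}_{\geq k}}^{p}H$.

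Correctness rests on \cref{lem:twk-gen}: inducting on $|V(F)|+|E(F)|$, every $\boldsymbol{F}\in\mathcal{TW}(k)$ is reached from $\mathcal{TW}_1(k)$ by repeatedly prepending a single $\boldsymbol{A}^{ij}$ (case~2, one edge at a time), prepending a single $\boldsymbol{J}^i$ (case~3), or forming one binary gluing (case~4). Each such operation is multilinear in the stacked homomorphism tensors and, by \cref{lem:fefvau} together with the closure of $\mathcal{TW}(k)$ under it, maps $S(q)$ (resp.\ $S(q_1),S(q_2)$) into $S$ of the appropriate state; hence at the fixpoint $\langle B(q)\rangle=S(q)$ for every $q$. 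For the running time, $\sum_{q\in Q}\dim\langle B(q)\rangle$ is a non-decreasing integer bounded by $2|Q|n^k$, so the iteration stabilises after at most $2|Q|n^k$ rounds — the computational counterpart of \cref{lem:chain}. Each round runs $O(|Q|^2+k^2|Q|)$ batches, each touching at most $(2n^k)^2$ vectors of length at most $2n^k$, plus a Gaussian elimination; since left-multiplication by $\boldsymbol{A}^{ij}$ or $\boldsymbol{J}^i$ and the Schur product are all computable coordinatewise in time $n^{O(k)}$ and $\mathbb{F}_p$-arithmetic costs $(\log p)^{O(1)}$, a round costs $n^{O(k)}(\log p)^{O(1)}$. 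With $k$ and $\mathcal{F}$ (hence $|Q|$) fixed, the total is $n^{O(k)}(\log p)^{O(1)}$, polynomial in the input.

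The main obstacle is the bookkeeping in the second paragraph: the generated vectors must be routed into the correct state — this is exactly where recognisability is used, via the hard-wired transition maps justified by \cref{lem:fefvau} — and one must avoid realising the gluing $\bigodot_{\ell\in L}\boldsymbol{J}^\ell\boldsymbol{F}^\ell$ of up to $k$ tensors from \cref{lem:twk-gen} as a single multilinear expansion, which would blow the spanning sets up to size $n^{\Theta(k^2)}$; decomposing it into binary gluings keeps every intermediate spanning set of size $n^{O(k)}$. The second point worth care is that the integer homomorphism counts involved can be doubly exponentially large, so that passing to $\mathbb{F}_p$ from the outset — which is why the statement is phrased for \textsc{ModHomInd}($\mathcal{F}$) rather than \HomInd directly — is what keeps all arithmetic, and in particular Gaussian elimination, polynomial.
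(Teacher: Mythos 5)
Your proposal is correct and follows essentially the same approach as the paper's (Algorithm~1 with Lemmas~\ref{lem:runtime} and~\ref{lem:correct}): compute bases $B(q)$ for the spaces $S(q)$ over $\mathbb{F}_p$ by a fixpoint iteration that applies the generators of $\mathcal{TW}(k)$, using the hard-wired transition maps from recognisability to route each new vector to the right state, and terminating after at most $O(|Q|n^k)$ rounds because the total dimension is bounded. The only cosmetic difference is initialisation: you seed each $B(q)$ with a basis of the span of all $\boldsymbol{F}_G \oplus \boldsymbol{F}_H$ for $\boldsymbol{F} \in \mathcal{TW}_1(k)$, whereas the paper starts just from $\boldsymbol{1}_G \oplus \boldsymbol{1}_H$ in $B(q_0)$ and lets the $\boldsymbol{A}^{ij}$-transitions generate the rest of $\mathcal{TW}_1(k)$; both are equivalent. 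Your remarks on keeping gluing binary and on routing via recognisability identify exactly the points where the paper's proof of correctness (Lemma~\ref{lem:correct}, which inducts along Lemma~\ref{lem:twk-gen}) does the work.
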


	For later reference, we recall the data describing the graph class~$\mathcal{F}$ which the algorithm utilises.
	Recall $Q$ the set of equivalence classes (\emph{states}) of $\sim_{\mathcal{F}}$ and $A \subseteq Q$ the set of accepting states from \cref{sec:decidable}.
	\Cref{lem:fefvau} asserts that the operations generating $\mathcal{TW}(k)$ can be regarded as operations on the equivalence classes of~$\sim_{\mathcal{F}}$.
	The algorithm is provided with a table encoding how each of the operation acts on the states.
	
	Write $\pi \colon \mathcal{TW}(k) \to Q$ for the map that associates an $\boldsymbol{F} \in \mathcal{TW}(k)$ to its state $q \in Q$.
	Write $q_0$ for the state of $\boldsymbol{1} \in \mathcal{TW}(k)$.
	Furthermore, write $g \colon Q \times Q \to Q$ and $b_{\boldsymbol{B}} \colon Q \to Q$ for every $\boldsymbol{B} \in \mathcal{B}(k)$
	such that
	\begin{align}
		g(\pi(\boldsymbol{F}), \pi(\boldsymbol{F}')) &= \pi(\boldsymbol{F} \odot \boldsymbol{F}'), \\
		b_{\boldsymbol{B}}(\pi(\boldsymbol{F})) &= \pi(\boldsymbol{B} \cdot \boldsymbol{F}).\label{eq:bb}
	\end{align}
	for every $\boldsymbol{F}, \boldsymbol{F}' \in \mathcal{TW}(k)$ and $\boldsymbol{B} \in \mathcal{B}(k)$.
	Note that $Q$, $A$, $g$, $q_0$ and the $b_{\boldsymbol{B}}$, $\boldsymbol{B} \in \mathcal{B}(k)$, are finite objects, which can be hard-coded. The map $\pi$ does not need to be computable and is only needed for analysing the algorithm.
	
	Finally, for a set $B \subseteq \mathbb{F}_p^n$, write $\left< B \right> \subseteq \mathbb{F}_p^n$ for the $\mathbb{F}_p$-vector space spanned by $B$, cf.\ \cref{check1,check2}.
	Having defined all necessary notions, we can now describe \Cref{alg1}.

	\begin{algorithm}
		\SetAlgoNoEnd
		\LinesNumbered
\KwIn{graphs $G$ and $H$, a prime $p$ in binary.}
		\KwData{$k$, $Q$, $A$, $q_0$, $g$, $b_{\boldsymbol{B}}$ for $\boldsymbol{B} \in \mathcal{B}(k)$.}
		\KwOut{whether $G \equiv_{\mathcal{F}}^p H$.}
		
		\nllabel{bruteforce} With brute force check whether $G$ and $H$ are homomorphism indistinguishable over the finite graph class $\mathcal{F}_{\leq k}$ modulo~$p$ and reject if not\;
		
		$B(q_0) \leftarrow \{\boldsymbol{1}_{G} \oplus \boldsymbol{1}_H\} \subseteq \mathbb{F}_p^{V(G)^k \cup V(H)^k}$\;
		
		$B(q) \leftarrow \emptyset \subseteq \mathbb{F}_p^{V(G)^k \cup V(H)^k}$ for all $q \neq q_0$\;

		\Repeat{none of the $B(q)$, $q \in Q$, are updated}{ \nllabel{mainloop}

			\ForEach{$\boldsymbol{B} \in \mathcal{B}(k)$, $q \in Q$, $v \in B(q)$}{\nllabel{subloop1}
				$w \leftarrow (\boldsymbol{B}_G \oplus \boldsymbol{B}_H) v \coloneqq \left( \begin{smallmatrix}			\boldsymbol{B}_G & 0 \\ 0 & \boldsymbol{B}_H			\end{smallmatrix}\right) v$\;
				
				\If{$w \not\in \left< B(b_{\boldsymbol{B}}(q)) \right>$}{\nllabel{check1} add $w$ to $B(b_{\boldsymbol{B}}(q))$\;}
			}

			\ForEach{$q_1,q_2 \in Q$, $v_1 \in B(q_1)$, $v_2 \in B(q_2)$}{\nllabel{subloop2}
				$w \leftarrow v_1 \odot v_2$\;	
				\If{$w \not\in \left< B(g(q_1, q_2)) \right>$}{\nllabel{check2} add $w$ to $B(g(q_1, q_2))$\;}
			}	
		}
		
		\eIf{\nllabel{eq:acc} $\boldsymbol{1}_G^T v = \boldsymbol{1}_H^Tv$ for all $q \in A$ and  $v \in B(q)$ }{accept\;}{reject\;}
		
		\caption{\textsc{ModHomInd}($\mathcal{F}$) for $k$-recognisable $\mathcal{F}$ of treewidth $\leq k-1$.}
		\label{alg1}
	\end{algorithm}
	
	Write $n \coloneqq \max \{|V(G)|, |V(H)|\}$ and $C \coloneqq |Q|$. The runtime of \cref{alg1} can be analysed as follows:
	
	\begin{lemma} \label{lem:runtime}
		There exists a computable function $f$ such that \Cref{alg1} runs in time 
		$f(k, C) n^{O(k)} (\log p)^{O(1)}$.
	\end{lemma}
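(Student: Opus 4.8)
The plan is an amortised running-time analysis in which the only nontrivial point is a dimension bound; everything else is bookkeeping.

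First I would record the loop invariant that throughout the execution each $B(q)$, $q\in Q$, is a linearly independent subset of $\mathbb{F}_p^{V(G)^k\cup V(H)^k}$: this holds after the initialisation of the $B(q)$ and is preserved because a vector is only ever inserted into a set $B(\cdot)$ by \cref{check1} or \cref{check2}, and only when it lies outside the span of the current contents of that set. Since $\dim \mathbb{F}_p^{V(G)^k\cup V(H)^k}=|V(G)|^k+|V(H)|^k\le 2n^k$, this forces $|B(q)|\le 2n^k$ for every $q$, hence $\sum_{q\in Q}|B(q)|\le 2Cn^k$ at all times. Now observe that every pass through the \textbf{repeat}-loop starting at \cref{mainloop} other than the last one inserts at least one new vector into some $B(q)$ and therefore strictly increases $\sum_{q\in Q}|B(q)|$; as this sum is $1$ initially and bounded by $2Cn^k$, the loop makes at most $2Cn^k+1$ passes. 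This amortised bound is the crux of the argument and the step I expect to take the most care.

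Next I would bound the cost of a single pass. The loop at \cref{subloop1} iterates over at most $|\mathcal B(k)|\cdot|Q|\cdot\max_q|B(q)|\le k^2\cdot C\cdot 2n^k$ triples; for each, computing $w=(\boldsymbol B_G\oplus\boldsymbol B_H)v$ is a matrix--vector product whose block-diagonal matrix has at most $2n^k$ rows and columns, and the homomorphism tensors $\boldsymbol B_G,\boldsymbol B_H$ of the generators $\boldsymbol A^{ij},\boldsymbol J^i$ are sparse and can be written down in $n^{O(k)}$ field operations, so the product costs $n^{O(k)}$ field operations; testing $w\notin\langle B(b_{\boldsymbol B}(q))\rangle$ is Gaussian elimination on at most $2n^k+1$ vectors of length at most $2n^k$, i.e.\ $n^{O(k)}$ field operations. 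The loop at \cref{subloop2} iterates over at most $C^2(2n^k)^2$ quadruples, each costing $O(n^k)$ for the Schur product $v_1\odot v_2$ plus $n^{O(k)}$ for the membership test. Hence a single pass uses $f_1(k,C)\,n^{O(k)}$ operations in $\mathbb F_p$ for some computable $f_1$.

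Finally I would handle the two remaining lines. \Cref{bruteforce} enumerates the finite class $\mathcal F_{\le k}$ --- whose size depends only on $k$ and which, like the rest of the data, can be hard-coded --- and for each $F\in\mathcal F_{\le k}$ computes $\hom(F,G)$ and $\hom(F,H)$ modulo $p$ by ranging over the at most $n^{|V(F)|}\le n^k$ candidate maps, costing $f_2(k)\,n^{O(k)}$ field operations; the acceptance check at \cref{eq:acc} inspects at most $\sum_{q\in A}|B(q)|\le 2Cn^k$ vectors at cost $O(n^k)$ each. Summing the contributions and using that each arithmetic operation in $\mathbb F_p$ (with $p$ given in binary) takes $(\log p)^{O(1)}$ bit operations --- the machine-model overhead of addressing the $n^{O(k)}$ coordinates of the vectors and matrices being absorbed into the same factor --- yields the claimed bound $f(k,C)\,n^{O(k)}(\log p)^{O(1)}$. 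Besides the amortised iteration count, the only thing that needs a word of justification is that the generator tensors $\boldsymbol B_G,\boldsymbol B_H$ can be produced within budget, which is immediate from their explicit description in \cref{sec:labelled}.
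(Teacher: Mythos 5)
Your proposal is correct and follows essentially the same route as the paper: the same linear-independence invariant gives $|B(q)|\le 2n^k$ and $\sum_q |B(q)|\le 2Cn^k$, which bounds the number of passes through the \textbf{repeat}-loop; the per-pass costs of the two inner loops, the brute-force check in \cref{bruteforce}, and the acceptance test are then bounded term by term exactly as in the paper, with $\mathbb{F}_p$-arithmetic contributing the $(\log p)^{O(1)}$ factor. The only differences are cosmetic: you make explicit the off-by-one in the pass count and the sparsity of the generator tensors, neither of which changes the argument.
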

	\begin{proof}
		Consider the following individual runtimes:
		
		Counting homomorphism of a graph on $k$ vertices into a graph on $n$ vertices, can be done in time $n^k$.
		Hence, \Cref{bruteforce} requires time $f(k) n^k$ for some computable $f$.
		
		Throughout the execution of \cref{alg1}, the vectors in each $B(q)$, $q \in Q$, are linearly independent. Thus, $|B(q)| \leq \dim S(q) \leq 2n^k$ and $\sum_{q \in Q} |B(q)| \leq  2Cn^k$.
		Hence, the body of the loop in \cref{mainloop}
		is entered at most $2Cn^k$ many times.
		
		The loop in \cref{subloop1} iterates over at most $k^2 \cdot C \cdot 2n^k$ many objects. 
		Computing the vector $w$ takes polynomial time in $2n^k \cdot \log p$.
		The same holds for checking the condition in \cref{check1,check2}, e.g.\ via Gaussian elimination.
		The loop in \cref{subloop2} iterates over at most $C^2 \cdot (2n^k)^2$ many objects.
		
		Finally, checking the condition in \cref{eq:acc} takes $C \cdot 2n^k \cdot (\log p)^{O(1)}$ many steps.
	\end{proof}

	The following \cref{lem:correct} yields that \cref{alg1} is correct.
	
	\begin{lemma} \label{lem:correct}
		When \Cref{alg1} terminates,
		$B(q)$ spans $S(q)$ for all $q \in Q$.
	\end{lemma}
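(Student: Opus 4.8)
The plan is to show by induction that when \cref{alg1} terminates, $\left<B(q)\right> = S_d(q)$ for all $q \in Q$ and all $d \geq 1$; since the loop terminates exactly when no $B(q)$ is updated, and $S(q) = \bigcup_{d \geq 1} S_d(q)$, this suffices. The two directions are: (i) every vector placed in $B(q)$ lies in $S(q)$, i.e.\ $\left<B(q)\right> \subseteq S(q)$, and (ii) every $S_d(q)$ is contained in the final $\left<B(q)\right>$, i.e.\ $S(q) \subseteq \left<B(q)\right>$.

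For direction (i), I would argue by induction on the order in which vectors are added. The base case is $\boldsymbol{1}_G \oplus \boldsymbol{1}_H \in S_1(q_0)$, since $\boldsymbol{1}$ is the edgeless distinctly $k$-labelled graph on $k$ vertices, which lies in $\mathcal{TW}_1(k)$ and is in state $q_0$. For the inductive step: if $v \in \left<B(q)\right> \subseteq S(q)$, say $v$ is a linear combination of $\boldsymbol{F}^{(j)}_G \oplus \boldsymbol{F}^{(j)}_H$ with each $\boldsymbol{F}^{(j)} \in \mathcal{TW}(k)$ in state $q$, then $w = (\boldsymbol{B}_G \oplus \boldsymbol{B}_H)v$ is the corresponding linear combination of $(\boldsymbol{B} \cdot \boldsymbol{F}^{(j)})_G \oplus (\boldsymbol{B} \cdot \boldsymbol{F}^{(j)})_H$ by the matrix-product/matrix-vector correspondence from \cref{sec:bilabelled}; each $\boldsymbol{B} \cdot \boldsymbol{F}^{(j)}$ is in $\mathcal{TW}(k)$ and in state $b_{\boldsymbol{B}}(q)$ by \eqref{eq:bb}, so $w \in S(b_{\boldsymbol{B}}(q))$. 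Similarly, $v_1 \odot v_2$ is a linear combination of Schur products $(\boldsymbol{F}^{(j_1)} \odot \boldsymbol{F}^{(j_2)})_G \oplus (\boldsymbol{F}^{(j_1)} \odot \boldsymbol{F}^{(j_2)})_H$, lying in $S(g(q_1,q_2))$, again using the gluing/Schur-product correspondence and that $\mathcal{TW}(k)$ is closed under $\odot$. Here one must be slightly careful that the Schur product of two linear combinations expands into a linear combination of pairwise Schur products, which is fine since $\odot$ is bilinear.

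For direction (ii), I would show by induction on $d$ that after termination, $S_d(q) \subseteq \left<B(q)\right>$ for every $q$. For $d = 1$: $\mathcal{TW}_1(k)$ consists only of $\boldsymbol{1}$, so $S_1(q_0)$ is spanned by $\boldsymbol{1}_G \oplus \boldsymbol{1}_H$, which is in $B(q_0)$, and $S_1(q) = \{0\}$ for $q \neq q_0$. For the inductive step, let $\boldsymbol{F} \in \mathcal{TW}_{d+1}(k)$ be in state $q$. By \cref{lem:twk-gen}, $\boldsymbol{F}$ decomposes as one of the four cases: it equals $\boldsymbol{1}$ (handled); or $\boldsymbol{F} = \prod_{ij \in A}\boldsymbol{A}^{ij} \cdot \boldsymbol{F}'$ with $\boldsymbol{F}' \in \mathcal{TW}(k)$ having fewer edges; or $\boldsymbol{F} = \boldsymbol{J}^i \cdot \boldsymbol{F}'$ with $\boldsymbol{F}'$ having fewer vertices; or $\boldsymbol{F} = \boldsymbol{F}_1 \odot \boldsymbol{F}_2$ with both factors on fewer vertices. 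In each reductive case the smaller pieces lie in $\mathcal{TW}_{d'}(k)$ for some $d' \leq d+1$; to run the induction cleanly I would instead induct on the pair $(d, |V(\soe\boldsymbol{F})| + |E(\soe\boldsymbol{F})|)$ lexicographically, or simply observe that each application of \cref{lem:twk-gen} strictly decreases vertex or edge count while keeping the decomposition depth $\leq d+1$, and terminates at $\boldsymbol{1}$. In any case, by the inductive hypothesis the homomorphism tensors of the smaller pieces lie in $\left<B(q')\right>$ for their respective states $q'$; applying the loop bodies in \cref{subloop1,subloop2} — which is precisely what the \textbf{Repeat} loop does until closure — produces $\boldsymbol{F}_G \oplus \boldsymbol{F}_H$ (or rather, since the operations are linear and the loops only add new independent vectors until $\left<B(q)\right>$ is closed under all of them, $\boldsymbol{F}_G \oplus \boldsymbol{F}_H$ must already lie in the final $\left<B(q)\right>$). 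The key point is that at termination each $\left<B(q)\right>$ is closed under left-multiplication by every $\boldsymbol{B} \in \mathcal{B}(k)$ and under Schur products of pairs, so it contains the span of everything reachable from $\boldsymbol{1}_G \oplus \boldsymbol{1}_H$ by these operations, which by \cref{lem:twk-gen} is all of $S(q)$.

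The main obstacle is the bookkeeping in direction (ii): matching the recursion in \cref{lem:twk-gen} (which peels off operations from the \emph{outside} of $\boldsymbol{F}$) with the bottom-up closure computed by the algorithm (which builds up \emph{from} $\boldsymbol{1}$), and choosing a well-founded measure so that the induction goes through — depth alone is not obviously monotone under \cref{lem:twk-gen}, so one needs to combine it with vertex/edge count or argue termination of the peeling process directly. Everything else reduces to the bilinearity of $\odot$, the functoriality of the tensor operations established in \cref{sec:bilabelled}, and the compatibility \eqref{eq:bb}, \eqref{eq:bb}'s analogue for $g$, from \cref{lem:fefvau}.
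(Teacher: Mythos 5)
Your proposal is essentially correct and follows the same two-directional argument as the paper: direction (i) is the invariant $B(q) \subseteq S(q)$, preserved because the linear maps $v \mapsto (\boldsymbol{B}_G \oplus \boldsymbol{B}_H)v$ and the bilinear Schur product carry spanning sets of $S(q)$ into $S(b_{\boldsymbol{B}}(q))$ and $S(g(q_1,q_2))$ via the compatibility of $\pi$ with the operations; direction (ii) shows that at termination each $\langle B(q)\rangle$ is closed under the generating operations and hence contains everything reachable from $\boldsymbol{1}_G \oplus \boldsymbol{1}_H$, which by \cref{lem:twk-gen} is all of $S(q)$.

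One inaccuracy to flag: $\mathcal{TW}_1(k)$ does not consist only of $\boldsymbol{1}$. A depth-$1$ decomposition is a single bag of size $k$ covering all vertices, so $\mathcal{TW}_1(k)$ contains every distinctly $k$-labelled graph on exactly $k$ vertices, with arbitrary edges among them. Thus $S_1(q)$ can be nonzero for several $q \neq q_0$, and your proposed base case for induction on $d$ alone does not close without already invoking closure under the $\boldsymbol{A}^{ij}$-operations. You anticipate this difficulty yourself; your fallback — inducting on $|V(\soe\boldsymbol{F})| + |E(\soe\boldsymbol{F})|$, which \cref{lem:twk-gen} strictly decreases in every non-base case, or equivalently doing structural induction along the recursion of \cref{lem:twk-gen} — is exactly how the paper proceeds and is the right fix. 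The single base case is then $\boldsymbol{F} = \boldsymbol{1}$, which matches the initialisation $B(q_0) = \{\boldsymbol{1}_G \oplus \boldsymbol{1}_H\}$.
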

	\begin{proof}
		First observe that the invariant $B(q) \subseteq S(q)$ for all $q \in Q$ is preserved throughout \cref{alg1}.
		Indeed, for example in \cref{subloop1}, since $v \in B(q) \subseteq S(q)$, 
		it can be written as linear combination of $\boldsymbol{F}_G \oplus \boldsymbol{F}_H$ for $\boldsymbol{F} \in \mathcal{TW}(k)$ of state $q$. 
		Because $\boldsymbol{B} \cdot \boldsymbol{F}$ is in state $b_{\boldsymbol{B}}(q)$ by \cref{eq:bb}, $(\boldsymbol{B}_G \oplus \boldsymbol{B}_H)v$ is in the span of $\boldsymbol{B}_G\boldsymbol{F}_G \oplus \boldsymbol{B}_H\boldsymbol{F}_H \in S(b_{\boldsymbol{B}}(q))$.
		
		Now consider the converse inclusion.
		The proof is by induction on the structure in \cref{lem:twk-gen}.
		By initialisation, $\boldsymbol{1}_G \oplus \boldsymbol{1}_H$ is in the span of $B(q_0)$.
		
		For the inductive step,
		suppose that $\boldsymbol{F} \in \mathcal{TW}(k)$ of state $q$ is such that 
		$\boldsymbol{F}_G \oplus \boldsymbol{F}_H = \sum_{v \in B(q)} \alpha_v v$ for some coefficients $\alpha_v \in \mathbb{F}_p$.
		Let $\boldsymbol{B} \in \mathcal{B}(k)$ and $\boldsymbol{F}' \coloneqq \boldsymbol{B} \cdot \boldsymbol{F}$.
		Then $(\boldsymbol{B}_G \oplus \boldsymbol{B}_H) v$
		is in the span of $B(b_{\boldsymbol{B}}(q))$ for all $v \in B(q)$
		by the termination condition.
		Hence, $\boldsymbol{F}'_G \oplus \boldsymbol{F}'_H = \sum_{v \in B(q)} \alpha_v (\boldsymbol{B}_G \oplus \boldsymbol{B}_H) v$ is in the span of $B(b_{\boldsymbol{B}}(q))$.
		
		Let $\boldsymbol{F}^1, \boldsymbol{F}^2 \in \mathcal{TW}(k)$ of states $q_1, q_2$ be such that 
		$\boldsymbol{F}^1_G \oplus \boldsymbol{F}^1_H = \sum_{v \in B(q_1)} \alpha_v v$ 
		and
		$\boldsymbol{F}^2_G \oplus \boldsymbol{F}^2_H = \sum_{w \in B(q_2)} \beta_w w$ 
		for some coefficients $\alpha_v,\beta_w \in \mathbb{F}_p$.
		Since the algorithm terminated, all $v \odot w$ for $v \in B(q_1)$ and $w \in B(q_2)$ are in the span of $B(g(q_1, q_2))$.
		Then $(\boldsymbol{F}^1 \odot \boldsymbol{F}^2)_G \oplus (\boldsymbol{F}^1 \odot \boldsymbol{F}^2)_H 
		= (\boldsymbol{F}^1_G \oplus \boldsymbol{F}^1_H) \odot (\boldsymbol{F}^2_G \oplus \boldsymbol{F}^2_H) 
		= \sum_{v \in B(q_1), w \in B(q_2)} \alpha_v \beta_w (v \odot w)$
		is in the span of $B(g(q_1, q_2))$.
	\end{proof}
	
	This concludes the preparations for the proof of \cref{thm:mod}:
	\begin{proof}[Proof of \cref{thm:mod}]
		\cref{lem:correct} implies that the conditions in \cref{lem:readout,eq:acc} are equivalent.
		Thus, $G$ and $H$ are homomorphism indistinguishable over $\mathcal{F}_{\geq k}$ modulo~$p$ if and only if \cref{eq:acc} holds.
		The runtime bound is given in \cref{lem:runtime}.
	\end{proof}

	\section{Polynomial Time for Bounded Pathwidth}

	In this section, we deduce \cref{thm:pw} from \cref{thm:mod,thm:decidable-pw}.
	The idea is to run \cref{alg1} for a collection of primes whose product is larger than any relevant homomorphism count and invoking the Chinese Remainder Theorem.
	The bound in \cref{thm:decidable-pw} is tight enough for this procedure to be in polynomial time.
	First consider the following \cref{lem:boundprime}.
	\begin{lemma} \label{lem:boundprime}
		Let $N, n \in \mathbb{N}$ be such that $N \log n \geq e^{2000}$.
		Let $\mathcal{F}$ be a graph class and $G$ and $H$ be graphs on at most $n$ vertices.
		If $G \not\equiv_{\mathcal{F}_{\leq N}} H$ then there exists a prime $N \log n < p \leq (N\log n)^2$ such that $G \not\equiv^p_{\mathcal{F}_{\leq N}} H$.
	\end{lemma}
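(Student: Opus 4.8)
The plan is to combine the trivial bound on homomorphism counts with the explicit estimate on products of primes from \cref{fact:primes}. Since $G \not\equiv_{\mathcal{F}_{\leq N}} H$, I would first fix a witness: a graph $F \in \mathcal{F}$ with $\lvert V(F)\rvert \leq N$ and $\hom(F, G) \neq \hom(F, H)$, and set $D \coloneqq \lvert \hom(F, G) - \hom(F, H)\rvert$, so that $D \geq 1$. Because $F$ has at most $N$ vertices while $G$ and $H$ each have at most $n$ vertices, a homomorphism out of $F$ is determined by the images of at most $N$ vertices, hence $\hom(F, G), \hom(F, H) \in \{0, 1, \dots, n^N\}$ and therefore $1 \leq D \leq n^N = 2^{N \log n}$.

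Next I would abbreviate $m \coloneqq N \log n$ and observe that the hypothesis $N \log n \geq e^{2000}$ is exactly what permits \cref{fact:primes} to be applied with $m$ playing the role of $n$. Suppose, for contradiction, that $G \equiv^p_{\mathcal{F}_{\leq N}} H$ for \emph{every} prime $m < p \leq m^2$. Applying this to the witness $F$ yields $\hom(F, G) \equiv \hom(F, H) \mod p$, that is, $p \mid D$, for every such prime. These primes are pairwise distinct, so their product divides $D$; by \cref{fact:primes}\,(2) this product is at least $2^{m^2/2}$, whence $D \geq 2^{m^2/2}$. On the other hand $D \leq 2^m$, and since $m \geq e^{2000} > 2$ we have $m^2/2 > m$, so $2^{m^2/2} > 2^m \geq D$, a contradiction.

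Consequently some prime $m < p \leq m^2$, i.e.\ $N \log n < p \leq (N \log n)^2$, satisfies $G \not\equiv^p_{\mathcal{F}_{\leq N}} H$, which is the assertion. The whole argument is a short pigeonhole-style computation, and I do not expect a genuine obstacle: the only points needing care are that the threshold $N \log n \geq e^{2000}$ is aligned precisely with the range of validity of the explicit Prime Number Theorem in \cref{fact:primes}, and that the strict inequality $m^2/2 > m$ holds so that the product-of-primes lower bound $2^{m^2/2}$ really does exceed the trivial upper bound $2^m$ on $D$. This lemma is essentially the bookkeeping that sets up the Chinese Remainder argument used afterwards to derive \cref{thm:pw}.
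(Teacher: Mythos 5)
Your proof is correct and is essentially the same argument as the paper's: both rest on the observation that the product of the primes in the range exceeds $n^N$, which bounds the homomorphism counts, so they cannot all divide the difference. You phrase it as a divisibility-of-$D$ contradiction while the paper phrases it via the Chinese Remainder Theorem applied to the congruence modulo the product $M$, but these are the same reasoning.
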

	\begin{proof}
		By \cref{fact:primes},
		the product $M$ of the primes $N \log n \leq p < (N\log n)^2$ 
		is at least $2^{(N \log n)^2/2} > 2^{N \log n} = n^N$.
		If $G \not\equiv_{\mathcal{F}_{\leq N}} H$ then there exist $F \in \mathcal{F}_{\leq N}$ such that $\hom(F,G) \neq \hom(F, H)$.
		Both numbers are non-negative and at most $n^N$.
		Hence, $\hom(F, G) \not\equiv \hom(F, H) \mod M$.
		By the Chinese Reminder Theorem \cite[Theorem~2.10]{nathanson_elementary_2000}, there exists $N \log n < p \leq (N\log n)^2$ such that $\hom(F, G) \not\equiv \hom(F, H) \mod p$, as desired.
	\end{proof}

	We can now deduce \cref{thm:pw}:
	\mainPW*
	\begin{proof}
		Let $G$ and $H$ be graphs on at most $n$ vertices.
		By \cref{thm:decidable-pw}, $G \equiv_{\mathcal{F}} H$ if and only if $G \equiv_{\mathcal{F}_{\leq N}} H$ for $N \coloneqq 2Cn^k + k-1$.
		We may assume wlog that $N\log n \geq e^{2000}$.
		By \cref{lem:boundprime}, the latter holds if and only if $G \equiv^p_{\mathcal{F}_{\leq N}} H$ for all primes $N \log n < p \leq (N\log n)^2$.
		Moreover, $G \equiv_{\mathcal{F}} H$ implies $G \equiv^p_{\mathcal{F}} H$, which implies $G \equiv^p_{\mathcal{F}_{\leq N}} H$, for all primes~$p$.
		Hence, $G \equiv_{\mathcal{F}} H$ if and only if $G \equiv^p_{\mathcal{F}} H$ for all primes $N \log n < p \leq (N\log n)^2$.
		Thus, $G \equiv_{\mathcal{F}} H$ can be checked by enumerating all numbers between $N \log n$ and $(N\log n)^2$ and running \cref{alg1} for every prime~\cite{agrawal_primes_2004}.
	\end{proof}

	Similarly, \cref{lem:boundprime} places \HomIndP{$\mathcal{F}$} for a $k$-recognisable graph class $\mathcal{F}$ of treewidth at most $k-1$ in coNP. Indeed, by \cref{thm:decidable}, $N$ can be taken to be exponential in $n$. Thus, the prime from \cref{lem:boundprime} is a polynomial-size certificate for $G \not\equiv_{\mathcal{F}} H$ which can be verified in polynomial time by \cref{thm:mod}.

	\section{Randomised Polynomial Time}	
	
	The chance that a random prime $p$ of appropriate size certifies that $G \not\equiv_{\mathcal{F}} H$  as in \cref{lem:boundprime} is very high.
	This observation yields \cref{thm:coRP} via the following \cref{lem:prob-homind}:
	
	\begin{lemma} \label{lem:prob-homind}
		Let $N, n \in \mathbb{N}$ be such that $N \log n \geq e^{2000}$.
		Let $\mathcal{F}$ be a graph class and $G$ and $H$ be graphs on at most $n$ vertices.
		If $G \not\equiv_{\mathcal{F}_{\leq N}} H$
		then
		the probability that a random prime $N \log n < p \leq (N\log n)^2$
		is such that $G \equiv^p_{\mathcal{F}_{\leq N}} H$ is at most $\frac{2}{N \log n}$.
	\end{lemma}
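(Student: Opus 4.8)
The plan is to exhibit a single witness graph $F \in \mathcal{F}_{\leq N}$ separating $G$ and $H$, to bound the nonzero integer $D := |\hom(F,G) - \hom(F,H)|$, and to observe that only the (few) prime divisors of $D$ can fail to certify $G \not\equiv_{\mathcal{F}_{\leq N}}^p H$. Writing $m := N \log n$, the point is that $D \leq n^N = 2^m$ has at most $m/\log m$ prime divisors exceeding $m$, which is tiny compared to the roughly $m^2/\log m$ primes available in the sampling range $(m, m^2]$ furnished by \cref{fact:primes}.

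Concretely, first I would invoke $G \not\equiv_{\mathcal{F}_{\leq N}} H$ to pick $F \in \mathcal{F}_{\leq N}$ with $\hom(F,G) \neq \hom(F,H)$ and set $D := |\hom(F,G) - \hom(F,H)|$, so $D \geq 1$. Since $|V(F)| \leq N$ and $G,H$ have at most $n$ vertices, both homomorphism counts lie in $\{0, 1, \dots, n^N\}$, so $1 \leq D \leq n^N = 2^{m}$ (recall $\log$ is base $2$). If a prime $m < p \leq m^2$ satisfies $G \equiv_{\mathcal{F}_{\leq N}}^p H$ then in particular $\hom(F,G) \equiv \hom(F,H) \bmod p$, i.e.\ $p \mid D$. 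Hence the primes in the range that fail to certify $G \not\equiv_{\mathcal{F}_{\leq N}} H$ are all prime divisors of $D$ larger than $m$.

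The crucial counting step is that $D$ has at most $m/\log m$ distinct prime divisors exceeding $m$: if $p_1 < \dots < p_t$ were such primes then $2^m \geq D \geq p_1 \cdots p_t > m^t$, forcing $t < m/\log m$. On the other hand, since $m = N \log n \geq e^{2000}$, \cref{fact:primes} gives at least $m^2/(2\log m)$ primes in $(m, m^2]$. Therefore the probability that a uniformly random prime in this range satisfies $G \equiv_{\mathcal{F}_{\leq N}}^p H$ is at most
\[
\frac{m/\log m}{m^2/(2\log m)} = \frac{2}{m} = \frac{2}{N \log n},
\]
as claimed. There is no real obstacle here; the one point requiring care is to count only the prime divisors of $D$ that exceed $m$ (which every prime in the sampling range does), since counting all prime divisors of $D$ would only give the weaker bound $\tfrac{2\log m}{m}$ and lose the needed factor of $\log m$.
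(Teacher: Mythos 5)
Your proof is correct and takes essentially the same approach as the paper: pick a separating graph $F \in \mathcal{F}_{\leq N}$, bound $|\hom(F,G)-\hom(F,H)| \leq n^N = 2^{N\log n}$, observe that a prime can only fail to certify the difference if it divides this quantity, count those primes (at most $N\log n/\log(N\log n)$), and divide by the total number of primes in the range from \cref{fact:primes}. The paper phrases the divisor-counting step through the Chinese Remainder Theorem rather than your direct argument that the product of $t$ primes $> m$ exceeds $m^t$, but this is the same calculation.
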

	\begin{proof}
		By assumption, there exist $F \in \mathcal{F}_{\leq N}$ such that $\hom(F, G) \neq \hom(F, H)$.
		Both numbers are non-negative and at most $n^N$.
		Thus, $\hom(F, G) \not\equiv \hom(F, H) \mod n^N$.
		
		By the Chinese Reminder Theorem \cite[Theorem~2.10]{nathanson_elementary_2000}, in every set $X$  of at least $\frac{N \log n}{\log(N\log n)}$ many primes $N \log n < p \leq (N\log n)^2$, 
		there exists a $p \in X$ such that $\hom(F, G) \not\equiv \hom(F, H) \mod p$.
		This is because $\prod_{p \in X} p > (N \log n)^{\frac{N \log n}{\log(N\log n)}} = n^N$.
		
		By \cref{fact:primes}, the probability that a random prime $N \log n < p \leq (N\log n)^2$ is such that $\hom(F, G) \equiv\hom(F, H) \mod p$ is
		\[
			\leq \frac{N \log n}{\log(N \log n)} \cdot \frac{2\log(N \log n)}{(N \log n)^2} = \frac{2}{N \log n}.
		\]
		Since $G \equiv_{\mathcal{F}}^p H$ implies that $\hom(F, G) \equiv\hom(F, H) \mod p$, the desired probability is bounded by the same value.
	\end{proof}
	
	The algorithm that yields \cref{thm:coRP} can now be stated as \cref{alg2}.
	
	\mainRP*
	\begin{proof}
		Let $N \coloneqq \max\{ k^{2Cn^k}, 2Cn^k \}$ be the quantity from \cref{thm:decidable}. 
		We may suppose that $N \log n \geq e^{2000}$.
		Since $\log N$ is polynomial in the input size,
		\cref{alg2} runs in polynomial time, requiring a polynomial-time primality test~\cite{agrawal_primes_2004}.
		
		For correctness, first observe that if $G \equiv_{\mathcal{F}} H$ then \cref{alg1} always accepts.
		If $G \not\equiv_{\mathcal{F}} H$ then \cref{alg1} might accept incorrectly.
		In each iteration, the probability of not rejecting
		is the probability of not sampling a prime plus the probability of $p$ being such that $G \equiv_{\mathcal{F}}^p H$. 
		By \cref{fact:primes,lem:prob-homind},
		it is at most
		\[
			1 - \frac{1}{2 \log(N \log n)} + \frac{2}{N \log n} \leq 1 - \frac{1}{4 \log (N\log n)}.
		\]
		By \cite[4.2.30]{abramowitz_handbook_1965}, the total probability for accepting if $G \not\equiv_{\mathcal{F}} H$ is at most $e^{-1} < 1/2$.
	\end{proof}
	
	\begin{algorithm}
		\SetAlgoNoEnd
		\KwIn{graphs $G$ and $H$}
		\KwOut{whether $G \equiv_{\mathcal{F}} H$}
		
		$n \leftarrow \max\{|V(G)|, |V(H)|\}$\;
		
		$N \leftarrow \max\{ k^{2Cn^k}, 2Cn^k \}$\;
		
		\For{$\lceil 4\log(N\log n) \rceil$ times}{
		Sample a random number $N\log n < p < (N \log n)^2$\;
		\If{$p$ is a prime and $G \not\equiv_{\mathcal{F}}^p H$}{reject\;}
		}
		accept\;
		
		\caption{A randomised reduction from \HomIndP{$\mathcal{F}$} to \textsc{ModHomInd}($\mathcal{F}$) for $k$-recognisable graph classes $\mathcal{F}$ of treewidth $\leq k-1$.}
		\label{alg2}
	\end{algorithm}

	\section{Fixed-Parameter Tractability}
	\label{sec:alg-meta-thm}
	
	In this section, we deduce \cref{thm:fptRP} from \cref{thm:coRP}.
	The challenge is to efficiently compute from the \CMSO-sentence $\phi$ and $k$ the data describing the graph class $\mathcal{F}_{\phi, k}$ for \cref{alg1}.
	That this can be done was proven by Courcelle~\cite{courcelle_graph_2012}.
	
	More precisely, Courcelle proved that for every \CMSO-sentence~$\phi$ and integer $k$ one can compute a finite automaton processing expressions which encode (tree decompositions of) graphs of bounded treewidth. It is this automaton from which the data required by \cref{alg1} can be constructed. 
	In order to state this argument precisely with minimal technical overhead, we introduce the following syntactical counterpart of $\mathcal{TW}(k)$.
	
 	Write $\mathfrak{TW}(k)$ for the set of \emph{terms} defined inductively as the following formal expressions:
	\begin{enumerate}
		\item $\boldsymbol{1} \in \mathcal{TW}(k)$ is a term,
		\item if $t_1, t_2 \in \mathfrak{TW}(k)$ then $t_1 \odot t_2 \in \mathfrak{TW}(k)$,
		\item if $t \in \mathfrak{TW}(k)$ then $\boldsymbol{B} \cdot t \in \mathfrak{TW}(k)$ for every $\boldsymbol{B} \in \mathcal{B}(k)$.
	\end{enumerate}
	The key difference between $\mathcal{TW}(k)$ and $\mathfrak{TW}(k)$ is that for elements of the former the tree decomposition satisfying \cref{def:twk} is implicit.
	For applying Courcelle's techniques, we require this decomposition to be explicit, as in $\mathfrak{TW}(k)$.
	
	There is a mapping $\val \colon \mathfrak{TW}(k) \to \mathcal{TW}(k)$ interpreting the formal expressions above as concrete distinctly $k$-labelled graphs.
	By \cref{lem:twk-gen}, this mapping is surjective.	
	We require the following \cref{thm:courcelle}.
\begin{theorem}[Courcelle] \label{thm:courcelle}
		Given a \CMSO-sentence $\phi$ and an integer $k \in \mathbb{N}$,
		one can compute 
		a finite set $Q$, 
		a subset $A \subseteq Q$, 
		an element $q_0 \in Q$, and 
		functions $g \colon Q \times Q \to Q$
and $b_{\boldsymbol{B}} \colon Q \to Q$ for every $\boldsymbol{B} \in \mathcal{B}(k)$ 
		such that there exists a map $\pi \colon \mathfrak{TW}(k) \to Q$ satisfying
		\begin{enumerate}
			\item $\pi(\boldsymbol{1}) = q_0$,
			\item for $t \in \mathfrak{TW}(k)$,  $\soe(\val(t)) \models \phi$ iff $\pi(t) \in A$,
			\item for all $t_1, t_2 \in \mathfrak{TW}(k)$, $\pi(t_1 \odot t_2) = g(\pi(t_1), \pi(t_2))$,
			\item for all $t \in \mathfrak{TW}(k)$ and $\boldsymbol{B} \in \mathcal{B}(k)$, $\pi(\boldsymbol{B} \cdot t) = b_{\boldsymbol{B}}(\pi(t))$.
\end{enumerate}  
	\end{theorem}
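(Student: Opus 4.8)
The statement is Courcelle's theorem on recognisability of \CMSO-definable graph properties \cite{courcelle_monadic_1990,courcelle_graph_2012}, rephrased for the term algebra $\mathfrak{TW}(k)$ that the rest of the paper uses. The plan is to identify the operations $\boldsymbol 1$, $\odot$, and $\boldsymbol B \cdot (-)$ for $\boldsymbol B \in \mathcal B(k)$ as operations built from Courcelle's basic graph operations on $k$-sourced graphs, and then to invoke the effective form of his theorem; no new mathematical content is needed beyond that.

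Concretely, a distinctly $k$-labelled graph is precisely a graph with $k$ pairwise distinct sources named $1, \dots, k$, and Courcelle's signature provides, among others, the one-vertex constants, disjoint union, source fusion, forgetting a source, source renaming, and creation of an edge between two named sources. In this vocabulary $\boldsymbol 1$ is the edgeless $k$-vertex graph with one source at each vertex; the gluing product $\boldsymbol F \odot \boldsymbol F'$ is the parallel composition obtained by taking a disjoint union and fusing, for each $i$, source $i$ of $\boldsymbol F$ with source $i$ of $\boldsymbol F'$; the unary operation $\boldsymbol A^{ij} \cdot (-)$ is edge creation between sources $i$ and $j$; and $\boldsymbol J^i \cdot (-)$ forgets source $i$ (retaining its vertex as an internal vertex) and then forms the disjoint union with a fresh one-vertex graph carrying source $i$. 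Thus $\boldsymbol 1$ and the operations $\odot$ and $b_{\boldsymbol B}$ all lie in the clone generated by Courcelle's operations, and each of them visibly preserves distinctness of sources, so that $\val(\mathfrak{TW}(k)) \subseteq \mathcal D(k)$.

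To finish, I would apply Courcelle's theorem to the \CMSO-sentence $\phi'$ asserting that the $\{E\}$-reduct --- equivalently, the image under $\soe$ --- of a $k$-sourced graph satisfies $\phi$; this is indeed \CMSO over $k$-sourced structures, since the sources are simply ignored. The theorem yields, computably from $\phi$ and $k$, a finite state set $\widehat Q$, a computable set $\widehat A \subseteq \widehat Q$ of accepting states, a state for each constant, and a finite transition function on $\widehat Q$ for each of Courcelle's operations, together with an evaluation homomorphism $\widehat\pi$ from the term algebra to $\widehat Q$ whose image lies in $\widehat A$ exactly on those terms whose value has $\soe$ satisfying $\phi$. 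Taking $Q \coloneqq \widehat Q$, $A \coloneqq \widehat A$, $q_0 \coloneqq \widehat\pi(\boldsymbol 1)$, and obtaining $g$ and the $b_{\boldsymbol B}$ by composing the relevant transition functions along the decompositions of $\odot$ and $\boldsymbol B \cdot (-)$ above --- a computable step, as only finitely many compositions are involved --- and defining $\pi$ by evaluating the corresponding Courcelle term, properties (1), (3), (4) follow from $\widehat\pi$ being a homomorphism and (2) is the acceptance condition. Note that $\pi$ need not be, and in general is not, computable; only $Q$, $A$, $q_0$, $g$, $b_{\boldsymbol B}$ are required to be. The one place where care is needed --- and what I expect to be the only real work --- is this bookkeeping: checking the decompositions of $\boldsymbol J^i \cdot (-)$ and $\odot$ into basic operations, verifying that restricting Courcelle's congruence from all $k$-sourced graphs to the subalgebra generated by $\mathfrak{TW}(k)$ is harmless, and confirming that ``$\soe(\cdot)\models\phi$'' really is a \CMSO-property of the labelled structure.
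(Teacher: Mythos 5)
Your proposal is correct and follows essentially the same route as the paper: both invoke the effective form of Courcelle's theorem by expressing $\boldsymbol{1}$, $\odot$, and $\boldsymbol{B}\cdot(-)$ as derived operations in Courcelle's HR signature for $k$-sourced graphs and then translating the resulting tree automaton's transition function into $g$ and the $b_{\boldsymbol{B}}$. The only cosmetic difference is how you handle the $\soe$ in the acceptance condition (you lift $\phi$ to a sentence $\phi'$ over $k$-sourced structures that ignores the sources, whereas the paper derives $\soe$ as an operation and composes it with the transition function), and the paper additionally remarks that one may discard states whose sort is not $[k]$; both are harmless bookkeeping choices.
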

\begin{proof}
	We sketch how the theorem follows from results in \cite{courcelle_graph_2012}.
	By the proof of \cite[Theorem~6.3]{courcelle_graph_2012}, 
	one can compute, given $\phi$ and $k$, a finite deterministic automaton 
	recognising whether a so called $F^{\textup{HR}}_{[k]}$-term 
	evaluates to a graph satisfying~$\phi$.
	
	$F^{\textup{HR}}_{[k]}$ plays the role of $\mathfrak{TW}(k)$. 
	It is a many-sorted signature \cite[Definition~2.123]{courcelle_graph_2012} whose sorts correspond 
	to the sources (labels, in our terminology) of sourced (distinctly labelled) graphs of bounded treewidth.
	The operations of $F^{\textup{HR}}_{[k]}$ are gluing, dropping of labels, and renaming of labels \cite[Definition~2.32]{courcelle_graph_2012}.
	Note that the series product, gluing, and unlabelling from \cref{sec:bilabelled} can be derived from these.
	Moreover, our graph $\boldsymbol{1} \in \mathcal{TW}(k)$ in \cref{fig:one} can be derived from the constants in~$F^{\textup{HR}}_{[k]}$.
	
	A finite deterministic automaton processing $F^{\textup{HR}}_{[k]}$-terms \cite[Defition~3.46]{courcelle_graph_2012}
	comprises a finite set of states $Q$, 
	a set of accepting states $A \subseteq Q$,
	and a transition function $\delta$ 
	mapping $(q_1, \dots, q_\ell, f)$ to $q$ where $q_1, \dots, q_\ell, q \in Q$
	and $f \in F^{\textup{HR}}_{[k]}$ is an $\ell$-ary operation, $\ell \geq 0$.
	Furthermore, there is  map $\sigma$ that associates to every $q \in Q$ a sort, that is a subset of $[k]$ of labels.
	
	The automaton processes $F^{\textup{HR}}_{[k]}$-terms, 
	which in our case are elements of $\mathfrak{TW}(k)$.
	In order to derive the desired objects from this automaton,
	we keep those states for which $\sigma$ evaluates to $[k]$.
	The maps $g$ and $b_{\boldsymbol{B}}$ can be computed from $\delta$.
	We mark a state $q$ as accepting if $\delta$ maps $q$ and $\soe$ (an operation derived from those in $F^{\textup{HR}}_{[k]}$) to an accepting state.
	The state $q_0$ is the state produced by $\delta$ on the $0$-ary derived function~$\boldsymbol{1}$.
	The map $\pi$ (which we do not require to be computable) is defined by the semantics of the automaton.
\end{proof}

	Equipped with \cref{thm:courcelle}, we prove \cref{thm:fptRP}.
	
	\mainFPT*
	\begin{proof}
		Let $\mathcal{F} \coloneqq \mathcal{F}_{\phi, k}$.
		Invoking \cref{thm:courcelle}, compute $Q$, $A$, $q_0$, $g$, and $b_{\boldsymbol{B}}$ for $\boldsymbol{B} \in \mathcal{B}(k)$ in time only depending on $\phi$ and $k$.
		Now invoke \cref{alg2}, which calls \cref{alg1}.
		
		Since the data obtained from \cref{thm:courcelle} 
		describes an equivalence relation $t_1 \sim t_2$ iff $\pi(t_1) = \pi(t_2)$ on $\mathfrak{TW}(k)$ rather than on $\mathcal{TW}(k)$ as in \cref{sec:ptime,sec:decidable},
		the definitions in these sections need to be slightly adapted.
		
		By \cref{thm:courcelle}, the following assertions analogous to \cref{lem:fefvau} hold: If $\pi(t_1) = \pi(t_2)$ and $\pi(t'_1) = \pi(t'_2)$  for $t_1, t_2, t'_1, t'_2 \in \mathfrak{TW}(k)$ then 
		$\pi(t_1 \odot t'_1) = g(\pi(t_1), \pi(t'_1)) = g(\pi(t_2), \pi(t'_2)) = \pi(t_1 \odot t'_1)$. Furthermore, for $\boldsymbol{B} \in \mathcal{B}(k)$, $\pi(\boldsymbol{B} \cdot t_1) = b_{\boldsymbol{B}}(\pi(t_1)) = b_{\boldsymbol{B}}(\pi(t_2)) = \pi(\boldsymbol{B} \cdot t_2)$.
		
		Define $S(q)$ for $q \in Q$ as the $\mathbb{F}_p$-vector space spanned by the $\boldsymbol{F}_G \oplus \boldsymbol{F}_H$ where $\boldsymbol{F} = \val(t)$ for some $t \in \mathfrak{TW}(k)$ in state~$q$.
		
		Then the following analogue of \cref{lem:readout} holds: Two graphs $G$ and $H$ are homomorphism indistinguishable over $\mathcal{F}_{\geq k}$ modulo~$p$ if and only if $\boldsymbol{1}_G^T v = \boldsymbol{1}_H^T v$ for every $v \in S(q)$ and $q \in A$.
		Indeed, if $F \in \mathcal{F}_{\geq k}$ then there
		there exist by \cref{thm:courcelle} a $t \in \mathfrak{TW}(k)$ such that $\soe(\boldsymbol{F}) \cong F$ for $\boldsymbol{F} \coloneqq \val(t)$.
		Then $\hom(F, G) \equiv \boldsymbol{1}_G^T(\boldsymbol{F}_G \oplus \boldsymbol{F}_H) \mod p$. 
		The claim follows as in \cref{lem:readout}.
		
		The proof of \cref{lem:correct} goes through analogously, replacing structural induction on $\mathcal{TW}(k)$ via \cref{lem:twk-gen} by structural induction on the definition of $\mathfrak{TW}(k)$.
		The same holds for \cref{lem:readout}.
		Finally, \cref{lem:runtime} gives the desired runtime.
	\end{proof}

	Analogously, the following \cref{thm:fptPW} can be derived from \cref{thm:pw}.
	The difference between the problems \textsc{PWHomInd} and \HomInd  is that treewidth is replaced by pathwidth.
	
	\pproblem{\textsc{PWHomInd}}{Graphs $G$, $H$, a $\mathsf{CMSO}_2$-sentence $\phi$, an integer $k$.}{$|\phi| + k$.}{Are $G$ and $H$ homomorphism indistinguishable over \( \mathcal{F}_{\phi,k} \coloneqq \{\text{graph }F \mid F \models \phi \text{ and } \pw F \leq k -1\}\)?}
	
	\Cref{thm:fptPW} places \textsc{PWHomInd} in the parametrised complexity class~XP.
	\begin{theorem} \label{thm:fptPW}
		There exists a computable function $f \colon \mathbb{N} \to \mathbb{N}$
		and a deterministic algorithm for \textup{\textsc{PWHomInd}} of runtime $f(|\phi| + k)n^{O(k)}$.
	\end{theorem}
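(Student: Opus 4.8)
The plan is to transcribe the proof of \cref{thm:fptRP} using the pathwidth versions of its ingredients: \cref{thm:decidable-pw} in place of \cref{thm:decidable}, \cref{lem:pwk-gen} in place of \cref{lem:twk-gen}, \cref{lem:pwk-size} in place of \cref{lem:tw-size}, and the deterministic prime-enumeration reduction from the proof of \cref{thm:pw} in place of the randomised \cref{alg2}. First I would introduce the syntactic counterpart $\mathfrak{PW}(k)$ of $\mathcal{PW}(k)$, analogous to $\mathfrak{TW}(k)$: the set of terms built from the constant $\boldsymbol{1}$ using only the unary operations $t \mapsto \boldsymbol{B}\cdot t$ for $\boldsymbol{B}\in\mathcal{B}(k)$, with \emph{no} binary gluing operation. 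There is an evaluation map $\val\colon\mathfrak{PW}(k)\to\mathcal{PW}(k)$, surjective by \cref{lem:pwk-gen}. Since a path decomposition is a special tree decomposition and the operations $\boldsymbol{B}\cdot(-)$ are derivable from Courcelle's $F^{\textup{HR}}_{[k]}$-operations exactly as in the proof of \cref{thm:courcelle}, that proof specialises to $\mathfrak{PW}(k)$ and yields: given a \CMSO-sentence $\phi$ and $k\in\mathbb{N}$ one can compute a finite set $Q$, a subset $A\subseteq Q$, a state $q_0\in Q$, and functions $b_{\boldsymbol{B}}\colon Q\to Q$ for $\boldsymbol{B}\in\mathcal{B}(k)$, together with a (not necessarily computable) map $\pi\colon\mathfrak{PW}(k)\to Q$ satisfying $\pi(\boldsymbol{1})=q_0$, $\pi(\boldsymbol{B}\cdot t)=b_{\boldsymbol{B}}(\pi(t))$, and $\soe(\val(t))\models\phi$ iff $\pi(t)\in A$. (Here the automaton is a word automaton, as $\mathfrak{PW}(k)$-terms are words over the finite alphabet $\mathcal{B}(k)$.)

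Next I would run the pathwidth analogue of \cref{alg1}. Concretely, for each $q\in Q$ let $S(q)$ be the $\mathbb{F}_p$-span of the stacked homomorphism tensors $\boldsymbol{F}_G\oplus\boldsymbol{F}_H$ over $\boldsymbol{F}=\val(t)$ with $t\in\mathfrak{PW}(k)$ in state $q$; then delete the gluing loop of \cref{subloop2} and keep only the loop of \cref{subloop1}, together with the brute-force check on the finite class $(\mathcal{F}_{\phi,k})_{\leq k}$ and the read-out step of \cref{eq:acc}. By \cref{lem:pwk-gen}, every $\boldsymbol{F}\in\mathcal{PW}(k)$ arises from $\boldsymbol{1}$ by applying only the operations $\boldsymbol{B}\cdot(-)$, so the analogues of \cref{lem:readout,lem:chain,lem:correct,lem:runtime} all go through by structural induction on $\mathfrak{PW}(k)$, exactly as in the proof of \cref{thm:fptRP}. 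This decides $G\equiv^p_{\mathcal{F}_{\phi,k}}H$ in time $f(|\phi|+k)\,n^{O(k)}(\log p)^{O(1)}$ for a computable $f$.

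Finally I would eliminate the modulus as in the proof of \cref{thm:pw}. Put $C\coloneqq|Q|$, which is bounded by a computable function of $|\phi|+k$, and $N\coloneqq 2Cn^k+k-1$, so that $\log N$ is polynomial in the input size. By \cref{thm:decidable-pw}, $G\equiv_{\mathcal{F}_{\phi,k}}H$ iff $G\equiv_{(\mathcal{F}_{\phi,k})_{\leq N}}H$, and by \cref{lem:boundprime} (together with the trivial implication $G\equiv_{\mathcal{F}_{\phi,k}}H\Rightarrow G\equiv^p_{\mathcal{F}_{\phi,k}}H$) this holds iff $G\equiv^p_{\mathcal{F}_{\phi,k}}H$ for every prime $N\log n<p\leq(N\log n)^2$. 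There are only polynomially many such primes, each of bit-length polynomial in the input size, so enumerating them with a deterministic primality test and running the pathwidth analogue of \cref{alg1} on each gives a deterministic algorithm of runtime $f(|\phi|+k)\,n^{O(k)}$, as claimed.

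I expect the only genuine obstacle to be the bookkeeping underlying the first paragraph: checking that Courcelle's construction really does specialise to path-decomposition terms and that the operations $\boldsymbol{B}\cdot(-)$ are $F^{\textup{HR}}_{[k]}$-derivable, so that no binary operation --- in particular no gluing --- is ever needed and the reduced version of \cref{alg1} without the loop of \cref{subloop2} still computes a basis of every $S(q)$. Everything afterwards is a routine transcription of \cref{sec:decidable,sec:ptime} and the proof of \cref{thm:pw}, with \cref{lem:pwk-size} supplying the polynomial size bound $2Cn^k+k-1$ in place of \cref{lem:tw-size}.
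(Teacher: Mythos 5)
Your proposal is correct and follows exactly the same route as the paper's (very terse) proof: adapt the proof of \cref{thm:fptRP} by dropping the gluing operation, using the pathwidth generation lemma (\cref{lem:pwk-gen}), and then derandomise by exhaustive prime enumeration exactly as in the proof of \cref{thm:pw}, justified by the polynomial size bound from \cref{thm:decidable-pw}. The paper leaves the bookkeeping you spell out in your first paragraph (the word-automaton specialisation of \cref{thm:courcelle} to $\mathfrak{PW}(k)$) implicit, so your version is the same argument with the details filled in.
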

	\begin{proof}
		The desired algorithm can be obtained as described in \cref{thm:fptRP} disregarding the gluing operation.
		By \cref{thm:decidable-pw}, the relevant numbers are not too large and all necessary modulo can be computed explicitly as in \cref{thm:pw}.
	\end{proof}

	\section{Lasserre in Polynomial Time}
	\label{sec:lasserre}
	
	By phrasing graph isomorphism as an integer program,
	heuristics from integer programming can be used to attempt to solve graph isomorphism.
	Prominent heuristics are the Sherali--Adams linear programming hierarchy and the Lasserre semidefinite programming hierarchy.
	While  the (approximate) feasibility of each level of these hierarchies can be decided efficiently, it is known that a linear number of levels is required to decide graph isomorphism for all graphs \cite{atserias_sherali-adams_2012,grohe_pebble_2015,roberson_lasserre_2023}.

	In \cite{roberson_lasserre_2023}, for each $t \geq 1$, 
	feasibility of the $t$-th level of the Lasserre hierarchy 
	was characterised as homomorphism indistinguishability relation over a graph class $\mathcal{L}_t$ constructed ibidem.
	Moreover, the authors asked whether there is a polynomial-time algorithm for deciding these relations. 
	In this section, we give a randomised algorithm for this problem which is polynomial-time for every level.
	
	The Lasserre semidefinite program can be solved approximately in polynomial time using e.g.\ the ellipsoid method. How to decide \emph{exact} feasibility is generally unknown.
	Since $\mathcal{L}_t$ is a minor-closed graph class of treewidth at most~$3t-1$,
	\Cref{thm:coRP} immediately yields a randomised polynomial-time algorithm for each level of the hierarchy. 
	However, it is not clear how to compute the data in \cref{thm:courcelle} for $\mathcal{L}_t$ given $t$.
	The following \cref{thm:lasserre} overcomes this problem by making the dependence on the parameter~$t$ effective:

	\dproblem{\textsc{Lasserre}}{Graphs $G$ and $H$, an integer $t \in \mathbb{N}$.}{Is the level-$t$ Lasserre relaxation for the standard integer program for $G \cong H$ feasible?}

	\begin{theorem} \label{thm:lasserre}
		There exists a computable function $f \colon \mathbb{N} \to \mathbb{N}$ and a randomised algorithm deciding \textup{\textsc{Lasserre}} which always runs in time~$f(t)n^{O(t)}$, accepts all \textsmaller{YES}-instances and accepts \textsmaller{NO}-instances with probability less than one half.
	\end{theorem}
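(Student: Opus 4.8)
The plan is to reduce \textsc{Lasserre} to \HomInd and invoke \cref{thm:fptRP}. Recall from \cite{roberson_lasserre_2023} that for every $t \geq 1$ there is a minor-closed graph class $\mathcal{L}_t$ of treewidth at most $3t-1$ such that the level-$t$ Lasserre relaxation for the standard integer program encoding $G \cong H$ is \emph{exactly} feasible if and only if $G \equiv_{\mathcal{L}_t} H$. For any fixed $t$, $\mathcal{L}_t$ is recognisable (being \CMSO-definable, e.g.\ via its finite set of excluded minors) and \cref{thm:coRP} already applies; the only missing ingredient for a uniform algorithm is an \emph{effective} way to obtain, from the input integer $t$, the data $Q, A, q_0, g, b_{\boldsymbol{B}}$ required by \cref{alg1} — equivalently, an effectively computable \CMSO-sentence defining $\mathcal{L}_t$. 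Robertson--Seymour does not help here, as the excluded minors of $\mathcal{L}_t$ are not known to be computable from~$t$.

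The key step is therefore to produce, from $t$, a \CMSO-sentence $\phi_t$ with $\{F \mid F \models \phi_t,\ \tw F \leq 3t-1\} = \mathcal{L}_t$ together with a bound on $|\phi_t|$ that is computable in~$t$. For this I would unwind the explicit construction of $\mathcal{L}_t$ in \cite{roberson_lasserre_2023}: it builds graphs from small pieces by gluing along interfaces of size $O(t)$, so membership of $F$ in $\mathcal{L}_t$ is witnessed by a tree-like (or path-like) decomposition of width $O(t)$ carrying $O(t)$-bounded annotations subject to local consistency conditions. Such a witness can be encoded by a bounded number of monadic second-order set variables colouring vertices and edges, and the consistency conditions are first-order; hence membership in $\mathcal{L}_t$ is expressible by a \CMSO-sentence $\phi_t$ whose construction, and whose size, are computable from~$t$. (Since the witnessing decompositions have width $\leq 3t-1$, intersecting with the treewidth bound does no harm; if $\phi_t$ already excludes large-treewidth graphs the intersection is vacuous.) Feeding $(G, H, \phi_t, 3t)$ into the randomised \HomInd-algorithm of \cref{thm:fptRP} then decides $G \equiv_{\mathcal{L}_t} H$ in time $f(|\phi_t| + 3t)\, n^{O(t)} = g(t)\, n^{O(t)}$, accepting all \textsmaller{YES}-instances and accepting \textsmaller{NO}-instances with probability below $1/2$; by the characterisation of \cite{roberson_lasserre_2023} this answers \textsc{Lasserre}. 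Alternatively one could bypass \CMSO entirely and read the data $Q, A, q_0, g, b_{\boldsymbol{B}}$ with $k = 3t$ directly off the algebraic description of $\mathcal{L}_t$ as a labelled-graph algebra on $3t$ labels, then call \cref{alg2} directly; the index $|Q|$ would again be computable in $t$, and \cref{lem:runtime} gives the same runtime.

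The main obstacle is precisely this first step: translating the construction of $\mathcal{L}_t$ from \cite{roberson_lasserre_2023} into a uniformly computable \CMSO-sentence (or recognisability automaton) and verifying that the resulting class, once intersected with treewidth $\leq 3t-1$, is exactly $\mathcal{L}_t$. This is careful bookkeeping rather than a new idea: every operation generating $\mathcal{L}_t$ acts on interfaces of size $O(t)$, so the associated congruence on $O(t)$-labelled graphs has index bounded computably in $t$, which is all that \cref{alg1} and \cref{lem:runtime} require. One should also take care to use the \emph{exact}-feasibility version of the Lasserre/homomorphism-indistinguishability correspondence, since \textsc{Lasserre} asks about exact feasibility — approximate feasibility, decidable in polynomial time by the ellipsoid method, is a genuinely different question.
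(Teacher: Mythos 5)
Your main route — exhibiting a computable \CMSO-sentence $\phi_t$ defining $\mathcal{L}_t$ and then calling \cref{thm:fptRP} — is precisely the route the paper explicitly declines to take. \Cref{sec:lasserre} opens by acknowledging that \cref{thm:coRP} applies to each fixed $\mathcal{L}_t$ (it is minor-closed of treewidth $\leq 3t-1$), but then states ``it is not clear how to compute the data in \cref{thm:courcelle} for $\mathcal{L}_t$ given $t$.'' Your sketch of why such a $\phi_t$ should be computable — ``membership in $\mathcal{L}_t$ is witnessed by a tree-like decomposition carrying $O(t)$-bounded annotations subject to local consistency conditions, encodable by a bounded number of MSO set variables'' — is exactly the bookkeeping the paper does not carry out, and it is not a small thing: $\mathcal{L}_t$ is defined as a closure under series composition, parallel composition with atomics, \emph{and arbitrary permutations of the $2t$ labels}, and you would need to verify that the $\mathfrak{S}_{2t}$-action does not blow up the annotation scheme and that the resulting sentence, once intersected with $\tw \leq 3t-1$, yields exactly $\mathcal{L}_t$. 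As written, this is a claim, not a proof, and it is the gap the paper routes around.

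The paper's actual proof avoids recognisability automata altogether. Because $\mathcal{L}_t$ is \emph{defined} as the closure of the finitely many atomic $(t,t)$-bilabelled graphs $\mathcal{A}_t$ under series composition, Schur product with atomics, and $\mathfrak{S}_{2t}$-permutations, one does not need states to track whether one has fallen out of the class — one never does. Accordingly the paper introduces a term set $\mathfrak{L}_t$ with a $\depth$ measure, proves a size bound (\cref{lem:lass-size}: at most $2t \cdot 2^{\depth}$ vertices) and a stabilisation lemma (\cref{lem:lass-chain}) for a \emph{single} subspace $S \subseteq \mathbb{F}_p^{(V(G)^t \cup V(H)^t)^2}$ of dimension $\leq 2n^{2t}$, yielding the bound $f(n,t) = 2t\cdot 4^{n^{2t}}$ in \cref{thm:decidable-lasserre}. \Cref{alg3} then simply initialises $B$ with the atomic homomorphism tensors and closes $B$ under Schur products with atomics, matrix products, and axis transpositions, with no automaton, no $Q$, no $g$, no $b_{\boldsymbol{B}}$. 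Chinese remaindering (\cref{lem:prob-homind}, \cref{alg2} with $N \leftarrow 2t\cdot 4^{n^{2t}}$) finishes the job. Your ``alternative'' of reading off $Q, A, q_0, g, b_{\boldsymbol{B}}$ with $k = 3t$ is closer in spirit but still misses this: the paper never computes $Q$ (for the reason stated), and it works with $(t,t)$-bilabelled tensors of dimension $n^{2t}$ rather than $3t$-labelled tensors of dimension $n^{3t}$, which is both conceptually simpler and tighter. You do correctly flag the exact-versus-approximate feasibility point, which matters.
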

	
	By \cite{roberson_lasserre_2023}, deciding \textsc{Lasserre} amounts to deciding homomorphism indistinguishability over the graph class $\mathcal{L}_t$ defined ibidem.
	We recall the definition:
	
	A $(t,t)$-bilabelled graph $\boldsymbol{A} = (A, \boldsymbol{u}, \boldsymbol{v})$ is \emph{atomic} if all its vertices are labelled, i.e.\ every $v \in V(A)$ appears among the labelled vertices $u_1, \dots, u_t,\allowbreak  v_1, \dots, v_t$.
	Let $\mathcal{A}_t$ denote the set of all atomic graphs.
	The class $\mathcal{L}_t$ is the closure of $\mathcal{A}_t$ under series composition, parallel composition with atomic graphs, and arbitrary permutations of the labels, i.e.\ the action of the symmetric group~$\mathfrak{S}_{2t}$ on the $2t$ labels.
	
	As in \cref{sec:alg-meta-thm}, write $\mathfrak{L}_t$ for the following inductively defined set of formal expressions. We simultaneously define a complexity measure $\depth \colon \mathfrak{L}_t \to \mathbb{N}$ on this set:
	\begin{enumerate}
		\item if $\boldsymbol{A}$ is atomic then $\boldsymbol{A} \in \mathfrak{L}_t$ and $\depth \boldsymbol{A} \coloneqq 1$,
		\item if $w \in \mathfrak{L}_t$ and $\boldsymbol{A}$ is atomic then $\boldsymbol{A} \odot w \in \mathfrak{L}_t$ and $\depth(\boldsymbol{A} \odot w) \coloneqq \depth w$,
		\item if $ w \in \mathfrak{L}_t$ and $\sigma \in \mathfrak{S}_{2t}$ then $w^\sigma \in \mathfrak{L}_t$ and $\depth(w^\sigma) \coloneqq \depth w$,
		\item if $w_1, w_2 \in \mathfrak{L}_t$ then $w_1 \cdot w_2 \in \mathfrak{L}_t$ and $\depth(w_1 \cdot w_2) \coloneqq \max\{\depth w_1, \depth w_2\} + 1$.
	\end{enumerate}
	Clearly, each expression $w \in \mathfrak{L}_t$ can be associated with the $(t,t)$-bilabelled graph $\val(w) \in \mathcal{L}_t$ it encodes. 
	
	As in \cref{sec:decidable}, we first give an upper bound on the size of the graphs whose homomorphism counts need to be considered in order to decide \textsc{Lasserre}.
	
	\begin{theorem} \label{thm:decidable-lasserre}
		For $t \geq 1$ and graphs $G$ and $H$ on at most $n$~vertices, 
		\[
		G \equiv_{\mathcal{L}_t} H \iff G \equiv_{(\mathcal{L}_t)_{\leq f(n, t)}} H 
		\]
		where $f(n, t) \coloneqq 2t \cdot 4^{n^{2t}}$.
	\end{theorem}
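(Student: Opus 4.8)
The plan is to follow the strategy of \cref{thm:decidable}: introduce a finite-dimensional vector space of (stacked) homomorphism tensors graded by a complexity measure, show by a dimension argument that the grading stabilises after polynomially many levels, and then read homomorphism counts off the space. The essential difference is that there is no recognisability filter here, so a single chain of spaces suffices; on the other hand the generating operations (series composition, gluing with atomic graphs, permutation of labels) together with the measure $\depth$ on $\mathfrak{L}_t$ force the structural inductions to be organised slightly differently.

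Concretely, for $G$ and $H$ on at most $n$ vertices I would work with the stacked tensors $M(w) \coloneqq \val(w)_G \oplus \val(w)_H$ in the ambient space $\mathbb{R}^{V(G)^t \times V(G)^t} \oplus \mathbb{R}^{V(H)^t \times V(H)^t}$ of dimension at most $2n^{2t}$, and set $S_d \coloneqq \langle \{ M(w) \mid w \in \mathfrak{L}_t,\ \depth w \leq d \} \rangle$ and $S \coloneqq \bigcup_{d\geq 1} S_d$. The first step is a size bound in the spirit of \cref{lem:tw-size}: by structural induction on $w \in \mathfrak{L}_t$, the graph $\soe(\val(w))$ has at most $t \cdot 2^{\depth w}$ vertices. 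The crucial observation is that atomic graphs have no unlabelled vertices, so gluing with an atomic graph and permuting labels neither create new vertices nor change the depth, while a single series composition at most doubles the vertex count and increases the depth by exactly one.

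Next I would prove the analogue of \cref{lem:chain}: if $S_d = S_{d-1}$ then $S_{d'} = S_d$ for all $d' \geq d$. This follows by showing, under the hypothesis $S_d = S_{d-1}$, that $M(w) \in S_d$ for every $w \in \mathfrak{L}_t$ with $\depth w \leq d+1$, by induction on the structure of $w$. The four cases exploit that series composition corresponds to the bilinear matrix product of tensors, gluing with a fixed atomic graph to the linear Schur product with a fixed tensor, and relabelling to axis permutation, together with the identities $\depth(\boldsymbol{A} \odot w) = \depth w$, $\depth(w^\sigma) = \depth w$, and $\depth(w_1 \cdot w_2) = \max\{\depth w_1, \depth w_2\} + 1$; in the series case one expands both factors, which lie in $S_d = S_{d-1}$, into combinations of $M(w')$ with $\depth w' \leq d-1$, so that every resulting product has depth at most $d$. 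Since $\dim S_1 \geq 1$ — the edgeless atomic graph with $\boldsymbol{u} = \boldsymbol{v}$ has the identity tensor as its homomorphism tensor — and the ambient dimension is at most $2n^{2t}$, the increasing chain $S_1 \subseteq S_2 \subseteq \cdots$ stabilises at some level $d^\ast \leq 2n^{2t}$, whence $S = S_{d^\ast} = S_{2n^{2t}}$.

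For the readout I would use the sum-of-entries correspondence of \cref{sec:bilabelled} to write $\hom(\soe(\val(w)), G) = \boldsymbol{1}_G^T M(w)$ and $\hom(\soe(\val(w)), H) = \boldsymbol{1}_H^T M(w)$, where $\boldsymbol{1}_G$ and $\boldsymbol{1}_H$ are the obvious indicator vectors; since $\val$ is surjective onto $\mathcal{L}_t$, this yields $G \equiv_{\mathcal{L}_t} H$ iff $\boldsymbol{1}_G^T v = \boldsymbol{1}_H^T v$ for all $v \in S = S_{2n^{2t}}$. As $S_{2n^{2t}}$ is spanned by the $M(w)$ with $\depth w \leq 2n^{2t}$, each satisfying $\lvert \soe(\val(w)) \rvert \leq t \cdot 2^{2n^{2t}} = t \cdot 4^{n^{2t}} \leq f(n,t)$, this condition is in turn equivalent to $G \equiv_{(\mathcal{L}_t)_{\leq f(n,t)}} H$, while the converse implication is immediate from $(\mathcal{L}_t)_{\leq f(n,t)} \subseteq \mathcal{L}_t$. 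I expect the main obstacle to be setting up the chain-lemma induction correctly: unlike in \cref{lem:chain}, the operations of gluing with an atomic graph and permuting labels do not reduce the complexity measure, so the induction must proceed on term structure with the uniform hypothesis $\depth w \leq d+1$ rather than on $\depth w$ itself.
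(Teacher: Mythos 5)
Your proposal is correct and takes essentially the same route as the paper's proof: the complexity measure $\depth$ on $\mathfrak{L}_t$, the size bound by structural induction (your constant $t\cdot 2^{\depth w}$ is in fact a bit tighter than the paper's $2t\cdot 2^{\depth w}$ and still suffices), the chain/dimension argument in the ambient space of dimension $\leq 2n^{2t}$, and the surjectivity of $\val$ for the readout. The only slip is notational: since $\val(w)$ is a $(t,t)$-bilabelled graph its homomorphism tensor is a matrix, so the sum-of-entries readout is $\hom(\soe(\val(w)),G) = \boldsymbol{1}_G^T\,\val(w)_G\,\boldsymbol{1}_G$ with $\boldsymbol{1}_G$ on both sides, not $\boldsymbol{1}_G^T M(w)$ alone; this does not affect the argument.
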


	Towards \cref{thm:decidable-lasserre}, we make the following observations:
	
	\begin{lemma}
		\label{lem:lass-size}
		Let $t \geq 1$. 
		If $w \in \mathfrak{L}_t$ then $\val(w)$ is a bilabelled graph on at most $2t \cdot 2^{\depth(w)}$ vertices.
	\end{lemma}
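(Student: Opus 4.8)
The plan is to prove the bound by structural induction on the term $w \in \mathfrak{L}_t$, following the four clauses of its inductive definition, in each case bounding $|V(\val(w))|$ in terms of the vertex counts of the subterms and the behaviour of $\depth$. The claim to carry through the induction is exactly $|V(\val(w))| \le 2t \cdot 2^{\depth(w)}$.

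For the base case $w = \boldsymbol{A}$ with $\boldsymbol{A}$ atomic, every vertex of an atomic $(t,t)$-bilabelled graph carries a label, so $|V(\val(w))| \le 2t \le 2t \cdot 2^{1} = 2t \cdot 2^{\depth(w)}$. Next come the two inductive cases in which the graph does not grow. If $w = \boldsymbol{A} \odot w'$ with $\boldsymbol{A}$ atomic, gluing identifies every vertex of $A$ — all of which are labelled — with a labelled vertex of $\val(w')$, so no fresh vertex is introduced and $|V(\val(w))| = |V(\val(w'))|$; since $\depth(w) = \depth(w')$, the bound is inherited from the induction hypothesis. If $w = (w')^\sigma$, permuting labels leaves the underlying graph unchanged, so again $|V(\val(w))| = |V(\val(w'))|$ and $\depth(w) = \depth(w')$, and we conclude directly.

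The remaining case $w = w_1 \cdot w_2$ is the only one in which the bound doubles, matching the increment of $\depth$ on a product. Series composition forms the disjoint union of $\val(w_1)$ and $\val(w_2)$ and then performs some vertex identifications, so $|V(\val(w))| \le |V(\val(w_1))| + |V(\val(w_2))|$. By the induction hypothesis each summand is at most $2t \cdot 2^{\depth(w_i)} \le 2t \cdot 2^{\max\{\depth w_1, \depth w_2\}}$, hence $|V(\val(w))| \le 2t \cdot 2^{\max\{\depth w_1, \depth w_2\} + 1} = 2t \cdot 2^{\depth(w)}$ by the definition of $\depth$ on a product, completing the induction.

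I do not expect a genuine obstacle here; the induction is routine. The one point requiring care is clause (2): one must check that gluing with an \emph{atomic} graph adds no vertices at all. This is precisely where atomicity (as opposed to allowing an arbitrary bilabelled graph on the left of $\odot$) is essential — it is what keeps the bound a function of $\depth(w)$ alone, rather than of the number of gluing steps performed along the way.
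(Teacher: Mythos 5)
Your proof is correct and matches the paper's argument exactly: both proceed by structural induction on the four clauses defining $\mathfrak{L}_t$, observing that atomicity keeps gluing with $\boldsymbol{A}$ from adding vertices and that series composition at most doubles the bound in step with the increment of $\depth$. No discrepancy to note.
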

	\begin{proof}
		The proof is by induction on the definition of $\mathfrak{L}_t$.
		If $w$ is atomic than $\val(w)$ has at most $2t$ vertices and $\depth(w) =1$.
		
		If $w = \boldsymbol{A} \odot w'$ for some atomic $\boldsymbol{A}$ then $\val(w) = \boldsymbol{A} \odot \val(w')$ has at most as many vertices as $\val(w')$ since all vertices of $\boldsymbol{A}$ are labelled. This implies the claim.
		The case $w = (w')^\sigma$ for $\sigma \in \mathfrak{S}_{2t}$ is analogous.
		
		If $w = w_1 \cdot w_2$ then the number of vertices in $\val(w)$ is at most the number of vertices in $\val(w_1)$ plus the number of vertices in $\val(w_2)$. This number is by induction at most $2t \cdot (2^{\depth w_1} + 2^{\depth w_2}) \leq 2t \cdot 2 \cdot 2^{(\depth w) -1} = 2t \cdot 2^{\depth w}$, as desired.
	\end{proof}
	
	For two graphs $G$ and $H$ on at most $n$ vertices,
	define $S_d$ for $d \geq 1$ as the subspace of $\mathbb{R}^{(V(G)^t \cup V(H)^t) \times (V(G)^t \cup V(H)^t)}$ spanned by $\boldsymbol{L}_G \oplus \boldsymbol{L}_H \coloneqq \left( \begin{smallmatrix}
		\boldsymbol{L}_G & 0 \\ 0 & \boldsymbol{L}_H
	\end{smallmatrix}\right)$ where $\boldsymbol{L} = \val(w)$ for $w \in \mathfrak{L}_t$ with $\depth(w) \leq d$.
	The dimension of $S \coloneqq \bigcup_{d \geq 1} S_d$ is at most $2n^{2t}$.
	Thus, by the following \cref{lem:lass-chain}, $S_d = S$ for $d \coloneqq 2n^{2t}$.

	\begin{lemma}
		\label{lem:lass-chain}
		If $d \geq 1$ is such that $S_d = S_{d+1}$ then $S_d = S$.
	\end{lemma}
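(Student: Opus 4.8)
The plan is to mimic the proof of \cref{lem:chain} (and, before that, \cref{lem:lass-size}'s companion argument), using the structural generation of $\mathfrak{L}_t$ together with the stability of the ``$\val(w)$ of depth $\le d$'' spaces under the three building operations that increase depth by at most the right amount. Concretely, I would show by induction on $i \ge 0$ that $S_{d+i} = S_d$, the base case $i = 0$ being trivial and the case $i = 1$ being the hypothesis $S_d = S_{d+1}$.

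For the inductive step, suppose $S_{d+i} = S_d$ and let $w \in \mathfrak{L}_t$ with $\depth(w) \le d+i+1$; I must show $\boldsymbol{L}_G \oplus \boldsymbol{L}_H \in S_d$ where $\boldsymbol{L} = \val(w)$. I case on the outermost constructor of $w$ according to the definition of $\mathfrak{L}_t$. If $w$ is atomic, then $\depth(w) = 1 \le d$ and there is nothing to prove. If $w = \boldsymbol{A} \odot w'$ with $\boldsymbol{A}$ atomic, then $\depth(w') = \depth(w) \le d+i+1$; but since $\boldsymbol{A}$ is atomic this does \emph{not} immediately drop the depth, so instead I would argue as follows: gluing with the fixed atomic graph $\boldsymbol{A}$ acts on homomorphism tensors as a Schur (entrywise) product with the matrix $\boldsymbol{A}_G \oplus \boldsymbol{A}_H$, hence it is a linear map; by induction $\boldsymbol{L}'_G \oplus \boldsymbol{L}'_H$ (for $\boldsymbol{L}' = \val(w')$) lies in the span of finitely many $\boldsymbol{M}_G \oplus \boldsymbol{M}_H$ with $\boldsymbol{M} = \val(u)$, $\depth(u) \le d$, and applying the same Schur product to each term and noting $\boldsymbol{A} \odot u \in \mathfrak{L}_t$ with $\depth(\boldsymbol{A}\odot u) = \depth(u) \le d$ shows $\boldsymbol{L}_G \oplus \boldsymbol{L}_H \in S_d$. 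The permutation case $w = (w')^\sigma$ is identical, using that relabelling acts as a (linear) permutation of axes of the tensor and $\depth(u^\sigma) = \depth(u)$. Finally, if $w = w_1 \cdot w_2$ then $\depth(w_1), \depth(w_2) \le d+i$, so by the inductive hypothesis $\boldsymbol{L}^1_G \oplus \boldsymbol{L}^1_H$ and $\boldsymbol{L}^2_G \oplus \boldsymbol{L}^2_H$ lie in $S_d$, hence are linear combinations of $\boldsymbol{M}^a_G \oplus \boldsymbol{M}^a_H$ and $\boldsymbol{N}^b_G \oplus \boldsymbol{N}^b_H$ respectively with all the $\boldsymbol{M}^a = \val(u_a)$, $\boldsymbol{N}^b = \val(v_b)$ of depth $\le d$. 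Since series composition corresponds to (block) matrix multiplication, which is bilinear, $\boldsymbol{L}_G \oplus \boldsymbol{L}_H = \sum_{a,b}\alpha_a\beta_b\,(\boldsymbol{M}^a \cdot \boldsymbol{N}^b)_G \oplus (\boldsymbol{M}^a \cdot \boldsymbol{N}^b)_H$; but $u_a \cdot v_b \in \mathfrak{L}_t$ has depth $\max\{\depth u_a, \depth v_b\} + 1 \le d+1$, so each summand lies in $S_{d+1} = S_d$ by hypothesis. Therefore $\boldsymbol{L}_G \oplus \boldsymbol{L}_H \in S_d$, completing the induction; as $S = \bigcup_{i\ge 0} S_{d+i}$, we conclude $S = S_d$.

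The only subtle point — and the step I expect to need the most care — is the case $w = w_1 \cdot w_2$: there the two factors can each have depth up to $d+i$, so a naive ``decrease depth by one at the top'' induction does not close, and one genuinely needs the hypothesis $S_d = S_{d+1}$ (not merely $S_d = S_{d+i}$) to absorb the $+1$ coming from the series product after expanding both factors into depth-$\le d$ generators. This is exactly the role the hypothesis plays, mirroring how \cref{lem:chain} uses $S_d(q) = S_{d+1}(q)$ to handle the $\boldsymbol{J}^i$- and $\boldsymbol{A}^{ij}$-steps. The atomic-gluing and permutation cases are routine once one observes these operations are linear on tensors and depth-preserving on terms.
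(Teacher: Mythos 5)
Your proof is correct and takes essentially the same approach as the paper's: an outer induction on $i$ combined with an inner structural induction on the term $w$, using linearity/bilinearity of Schur product, axis permutation, and matrix product to pass from a spanning set of depth $\le d$ to one of depth $\le d+1$, and then the hypothesis $S_d = S_{d+1}$ to close. The paper's proof makes the inner structural induction explicit, whereas you leave it implicit in the $\boldsymbol{A}\odot w'$ and $(w')^\sigma$ cases (where depth does not drop); otherwise the arguments coincide.
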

\begin{proof}We show by induction that $S_{d+i} \subseteq S_d$ for all $i \geq 1$.
	The base case follows by assumption.
	For the inductive step, we argue by structural induction that for every $w \in \mathfrak{L}_t$ with $\depth(w) \leq d+i+1$ and $\boldsymbol{L} \coloneqq \val(w)$ it holds that $\boldsymbol{L}_G \oplus \boldsymbol{L}_H \in S_d$.
	If $w$ is atomic, then this clearly holds.
	If $w = \boldsymbol{A} \odot w'$ then by the inner induction the homomorphism tensors of $\val(w')$ is in $S_d$. 
	The Schur product of any matrix in this space with $\boldsymbol{A}_G \oplus \boldsymbol{A}_H$ is in $S_d$ by definition. Thus, the claim follows.
	The case $w = (w')^\sigma$ for $\sigma \in \mathfrak{S}_{2t}$ is analogous.
	If $w = w' \cdot w'$ then $\depth(w'), \depth(w'') < \depth(w) \leq d+i+1$. By the outer inductive hypothesis, $S_{d+i} \subseteq S_d$.
	Thus, writing $\boldsymbol{L}' \coloneqq \val(w')$ and $\boldsymbol{L}'' \coloneqq \val(w'')$, it holds that $\boldsymbol{L}'_G \oplus \boldsymbol{L}'_H, \boldsymbol{L}''_G \oplus \boldsymbol{L}''_H \in S_d$.
	Thus, writing $\boldsymbol{L} \coloneq \val(w)$, $\boldsymbol{L}_G \oplus \boldsymbol{L}_H = (\boldsymbol{L}' \boldsymbol{L}'')_G \oplus (\boldsymbol{L}'\boldsymbol{L}'')_H = (\boldsymbol{L}'_G \oplus \boldsymbol{L}'_H)(\boldsymbol{L}''_G \oplus \boldsymbol{L}''_H)\in S_{d+1} \subseteq S_d$.
\end{proof}

	This concludes the preparations for the proof of \cref{thm:decidable-lasserre}.
	
	\begin{proof}[Proof of \cref{thm:decidable-lasserre}]
		By \cref{lem:lass-chain}, $S_{d} =S$ for $d \coloneqq 2n^{2t}$.
		The space $S_d$ is spanned by $\boldsymbol{L}^i_G \oplus \boldsymbol{L}^i_H$ for some $\boldsymbol{L}^i \in \mathcal{L}_t$ on at most $2t \cdot 2^d$ vertices, by \cref{lem:lass-size}.
		For arbitrary $\boldsymbol{F} \in \mathcal{L}_t$, there exist coefficients $\alpha_i \in \mathbb{R}$ such that $\boldsymbol{F}_G \oplus \boldsymbol{F}_H = \sum \alpha_i \boldsymbol{L}^i_G \oplus \boldsymbol{L}^i_H$. Hence,
		\begin{align*}
			\hom(\soe \boldsymbol{F}, G) 
			&= \boldsymbol{1}_G^T (\boldsymbol{F}_G \oplus \boldsymbol{F}_H) \boldsymbol{1}_G
			= \sum \alpha_i \boldsymbol{1}_G^T (\boldsymbol{L}^i_G \oplus \boldsymbol{L}^i_H) \boldsymbol{1}_G \\
			&= \sum \alpha_i  \hom(\soe \boldsymbol{L}^i,G)
			= \hom(\soe \boldsymbol{F}, H), 
		\end{align*}
		as desired.
	\end{proof}
	
	We now describe \cref{alg3} for \textsc{Lasserre} and prove \cref{thm:lasserre}.

	\begin{algorithm}
	\SetAlgoNoEnd
\KwIn{graphs $G$ and $H$, $t \in \mathbb{N}$, a prime $p$ in binary.}
	\KwOut{whether $G \equiv_{\mathcal{L}_t}^p H$.}

	$B \leftarrow \{ \boldsymbol{A}_G \oplus \boldsymbol{A}_H \mid \boldsymbol{A} \in \mathcal{A}_t\} \subseteq \mathbb{F}_p^{(V(G)^t \cup V(H)^t) \times (V(G)^t \cup V(H)^t)}$\;	
	
	\Repeat{$B$ is not updated}{

		\ForEach{$\boldsymbol{A} \in \mathcal{A}_t$, $v \in B$}{
			$w \leftarrow (\boldsymbol{A}_G \oplus \boldsymbol{A}_H) \odot v$\;
			
			\If{$w \not\in \left< B \right>$}{ add $w$ to $B$\;}
		}

		\ForEach{$v_1, v_2 \in B$}{
			$w \leftarrow v_1 \cdot v_2$\;	
			\If{$w \not\in \left<B\right>$}{add $w$ to $B$\;}
		}
		\ForEach{$1 \leq i < j \leq 2t$, $v \in B$}{
			$w \leftarrow v^{(i \ j )}$\;	
			\If{$w \not\in \left<B\right>$}{add $w$ to $B$\;}
		}	
	}
	
	\eIf{$\boldsymbol{1}_G^T v \boldsymbol{1}_G = \boldsymbol{1}_H^Tv \boldsymbol{1}_H$ for all $v \in B$ }{accept\;}{reject\;}
	
	\caption{Modular \textsc{Lasserre}.}
	\label{alg3}
\end{algorithm}

	\begin{proof}[Proof of \cref{thm:lasserre}]
		Correctness follows as in \cref{lem:correct} for \cref{alg1}.
		For the runtime, note that the main loop is entered at most $2n^{2t}$ times. The subloops are entered $4t(2t-1) 2n^{2t} + 4n^{4t}$ times.
		The linear algebraic operations can be performed in time polynomial in $2n^{2t} \log p$. 
		Thus the algorithm has runtime $O(t^2 n^{Ct} (\log p)^C)$ for some constant $C \geq 4$.
		
		To close the gap between modular homomorphism indistinguishability over $\mathcal{L}_t$ and \textsc{Lasserre}, invoke \cref{lem:prob-homind,thm:decidable-lasserre}.
		\Cref{alg2} yields the desired randomised algorithm with $\mathcal{F} \coloneqq \mathcal{L}_t$ and adjusting $N \leftarrow 2t \cdot 4^{n^{2t}}$ in the second line.
	\end{proof}

	\section{Lower Bounds}
	
	In this final section, we establish two hardness results for the problem \HomInd.
	In both cases, we show hardness for natural families of graph classes.
	The approaches are orthogonal in the sense that reduction yielding coNP-hardness in \cref{thm:main3} is from a fixed-parameter tractable problem while the reduction yielding coW[1]-hardness in \cref{thm:main4} is not polynomial time.

	\subsection{coNP-Hardness}
	\label{sec:coNP}

	We show that the following problem is coNP-hard.
	\dproblem{WL}{Graphs $G$ and $H$, an integer $k$}{Are $G$ and $H$ $k$-WL indistinguishable?}

	By \cite{dvorak_recognizing_2010,cai_optimal_1992}, 
	$k$-WL indistinguishability coincides with homomorphism indistinguishability over the class of graphs of treewidth at most $k$.
	Hence, \textsc{WL} is clearly a special case of \HomInd, i.e.\@ with $\phi$ set to true.
	Thus, the following \cref{thm:main3} shows that when disregarding the parametrisation \HomInd is coNP-hard under polynomial-time many-one reductions.
	
	\mainC*
	\begin{proof}
		For a graph $G$,
		write $\Delta(G)$ for its maximum vertex degree.
		The following problem is NP-complete by \cite[Theorem~11]{bodlaender_treewidth_1997}:
		
		\dproblem{BoundedDegreeTreewidth}{a graph $G$ with $\Delta(G) \leq 9$, an integer $k$}{Is $\tw G \leq k$?}
		
		By deleting isolated vertices, we may suppose that every connected component of $G$ contains at least two vertices.		
		If $G$ has multiple connected components, 
		take one vertex from each component and connect them in a pathlike fashion. 
		This increases the maximum degree potentially by one but makes the graph connected. 
		The treewidth is invariant under this operation.
		Thus, we may suppose that $G$ is connected and $\Delta(G) \leq 10$.
		
		Given such an instance, we produce the instance $(G_0, G_1, k)$ of \textsc{WL}.
		Here, $G_0$ and $G_1$ are the even and odd \textsmaller{CFI} graphs as defined in \cref{sec:cfi}.
		
		Then $G_0$ and $G_1$ are $k$-WL indistinguishable if and only if $\tw G \geq k+1$.
		Indeed,  by \cite{cai_optimal_1992,dvorak_recognizing_2010} and \cite[Lemma~4.4]{neuen_homomorphism-distinguishing_2023}, 
		since $G$ is connected, if $\tw G \geq k+1$ then $G_0$ and $G_1$ are $k$-WL indistinguishable. 
		Conversely, if $\tw G < k+1$ then $G_0$ and $G_1$ are distinguished by $k$-WL since $\hom(G, G_0) \neq \hom(G, G_1)$ by \cref{lem:cfi} and \cite{dvorak_recognizing_2010}.
				
		Hence, $(G, k)$ is a \textsmaller{YES}-instance of \textsc{BoundedDegree\-Tree\-width}
		if and only if $(G_0, G_1, k)$ is a \textsmaller{NO}-instance of \textsc{WL}.
		The graphs $G_0$ and $G_1$ are of size $\sum_{v\in V(G)} 2^{\deg(v) - 1} \leq 2^9 n$, which is polynomial in the input.
	\end{proof}
	
	\subsection{coW[1]-Hardness}
	
	The second hardness result concerns \HomInd as a parametrised problem.
	Write $\mathcal{G}_{\leq k}$ for the class of all graphs on at most $k$ vertices and consider the following problem:
	
	\pproblem{HomIndSize}{Graphs $G$ and $H$, an integer $k \geq 1$.}{$k$}{Are $G$ and $H$ homomorphism indistinguishable over the class $\mathcal{G}_{\leq k}$?}
	
	The problem \textsc{HomIndSize} fixed-parameter reduces to \HomInd and also to \textsc{PWHomInd}.
	To that end, consider the first-order formula \[\phi_k \coloneqq  \exists x_1 \dots \exists x_k \forall y \bigvee_{i=1}^k (y = x_i) \] for $k \in \mathbb{N}$.
	Then, a graph models $\phi_k$ if and only if it has at most~$k$ vertices.
	Furthermore, $|\phi_k| = O(k)$. Hence, transforming the instance $(G, H, k)$ of \textsc{HomIndSize} to the instance $(G, H, \phi_k, k-1)$ of \HomInd gives the desired reduction.
	Since $|\phi_k| + k = O(k)$, \cref{thm:homindsize} implies \cref{thm:main4}.

	\begin{theorem} \label{thm:homindsize}
		\textup{\textsc{HomIndSize}} is \textup{coW[1]}-hard under fpt-reductions.
		Unless \textsmaller{ETH} fails, there is no algorithm for 	\textup{\textsc{HomIndSize}} that runs in time $f(k)n^{o(k)}$ for any computable function $f \colon \mathbb{N} \to \mathbb{N}$.
	\end{theorem}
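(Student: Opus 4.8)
The plan is to prove both halves of \cref{thm:homindsize} with one polynomial‑time reduction from \textsc{Multicolored Clique}: given a graph $A$ on $n$ vertices with its vertex set partitioned into colour classes $V_1,\dots,V_k$, decide whether $A$ has a clique meeting every class exactly once. This problem is W[1]‑hard, and the standard reduction from \textsc{3-SAT} shows it has no algorithm of runtime $f(k)n^{o(k)}$ unless \textsmaller{ETH} fails. I will produce, in polynomial time, graphs $G,H$ on $\mathrm{poly}(n)$ vertices together with a parameter $k'=O(k)$ such that $A$ has a multicoloured $k$‑clique if and only if $G\not\equiv_{\mathcal G_{\le k'}}H$, i.e.\ if and only if $(G,H,k')$ is a \textsmaller{NO}‑instance of \textsc{HomIndSize}. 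Because the parameter grows only by a constant factor, this single reduction transports both lower bounds: it is an fpt‑reduction to the complement of \textsc{HomIndSize} (hence coW[1]‑hardness), and a hypothetical $f(k')n^{o(k')}$ algorithm for \textsc{HomIndSize} would, via $k'=O(k)$, give an $f'(k)n^{o(k)}$ algorithm for \textsc{Multicolored Clique}, contradicting \textsmaller{ETH}.

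The first step is a reformulation. By classical Möbius inversion over the partition lattice (passing $\hom\rightsquigarrow\mathrm{inj}$) and over the subgraph lattice (passing $\mathrm{inj}\rightsquigarrow\mathrm{sub}$), all restricted to graphs on at most $\ell$ vertices, one gets that $G\equiv_{\mathcal G_{\le\ell}}H$ if and only if $G$ and $H$ contain the same number of copies of \emph{every} graph on at most $\ell$ vertices as a (not necessarily induced) subgraph. In particular this isolates the role of $\hom(K_\ell,\cdot)=\ell!\cdot(\#\,\ell\text{-cliques})$ inside the profile. The reduction then aims to build $G,H$ and $k'$ so that $\mathrm{sub}(F,G)=\mathrm{sub}(F,H)$ for every graph $F$ on at most $k'$ vertices \emph{except} one designated pattern $P$ with $O(k)$ vertices, and so that $\mathrm{sub}(P,G)-\mathrm{sub}(P,H)$ is a fixed positive multiple of the number of multicoloured $k$‑cliques of $A$. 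The pattern $P$ is a rigid gadget: attach to $A$ an identifiable spine $u_1\cdots u_k$ (a path with a private pendant at one end) and join every vertex of $V_i$ to $u_i$; a multicoloured clique together with the spine forms a copy of $P$ on $O(k)$ vertices, and no copy of $P$ arises otherwise. Detecting $P$ in this gadgeted host is thus as hard as \textsc{Multicolored Clique} up to a linear change of parameter, which is exactly what the \textsmaller{ETH} bound needs.

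What remains, and this is the crux, is to realise $G$ and $H$ so that they genuinely agree on the subgraph count of \emph{every} small pattern other than $P$ while still differing through $P$. Since the only global difference between $G$ and $H$ has to be invisible to all patterns on $\le k'$ vertices, it must be introduced by a local, Cai--F\"urer--Immerman‑style toggle gadget whose ``twist'' is crossed by a copy of $P$ but by no other small embedded pattern; the agreement on the non‑exceptional patterns is then witnessed by an explicit sign‑reversing involution pairing the embeddings into $G$ with those into $H$, and in the no‑clique case this can be strengthened to an outright isomorphism $G\cong H$. The delicate point is the tension between \emph{hiding} almost the entire structure of $A$ from small patterns (which a CFI mechanism does best when the hidden part has large treewidth) and \emph{exposing} it through the $O(k)$‑vertex pattern $P$ (which needs the twisted region to be of small treewidth): reconciling these — e.g.\ by first reducing to \textsc{Multicolored Clique} instances whose host graph is of large treewidth away from the spine, and placing the CFI twist so that only the spine/clique interaction sees it — is where the real work lies. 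Given that construction, verifying the involution, bounding the sizes by $\mathrm{poly}(n)$ and $k'=O(k)$, and assembling the coW[1]‑hardness and the \textsmaller{ETH} bound from the hardness of \textsc{Multicolored Clique} are routine, and, combined with the fpt‑reduction from \textsc{HomIndSize} to \HomInd noted above, yield \cref{thm:main4}.
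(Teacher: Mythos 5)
Your plan has a genuine gap precisely where you yourself flag the ``real work'': you never give the construction of the pair $(G,H)$, only a desideratum that a CFI-style twist on the gadgeted host be visible to the one designated pattern $P$ and to no other graph on $\le k'$ vertices. That requirement is not a detail to be filled in later; it \emph{is} the theorem. Worse, it is not at all clear it can be met by the route you sketch: a CFI twist planted in a hard Multicolored Clique host (which has high treewidth, hence high connectivity, essentially everywhere) will generically be detected by a whole zoo of $O(k)$-vertex patterns, not by $P$ alone, and the ``sign-reversing involution'' and the ``place the twist so only the spine/clique interaction sees it'' step are asserted, not exhibited. The Möbius-inversion reformulation to subgraph counts is correct but buys you nothing toward closing this hole.

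The paper dodges the entire difficulty by twisting the \emph{pattern side} instead of the host. It reduces from plain \textsc{Clique}: given $(G,k)$, it outputs $(G\times K_0,\, G\times K_1,\, k)$, where $K_0,K_1$ are the even and odd CFI graphs of the $k$-clique $K$ and $\times$ is the categorical product. Two facts finish the proof with no bespoke construction at all. First, by \cite[Theorem~3.13]{roberson_oddomorphisms_2022}, $\hom(F,K_0)\neq\hom(F,K_1)$ requires a weak oddomorphism $F\to K$, which forces $F$ to have $\ge k$ vertices and $\ge\binom{k}{2}$ edges; hence $K_0\equiv_{\mathcal{G}_{\le k}\setminus\{K\}}K_1$ is automatic. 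Second, $\hom(F,A\times B)=\hom(F,A)\hom(F,B)$, so the only surviving discrepancy over $\mathcal{G}_{\le k}$ is $\hom(K,G)\bigl(\hom(K,K_0)-\hom(K,K_1)\bigr)$, which vanishes iff $G$ has no $k$-clique. The parameter is preserved exactly, the output size is $n\cdot k\cdot2^{k-2}$, and both the coW[1]- and ETH-hardness claims follow immediately. So the two approaches do diverge — you twist the host and must argue indistinguishability from scratch; the paper twists the small pattern graph and inherits indistinguishability from known CFI theory — but yours, as written, does not reach a proof.
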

	
	\begin{proof}
The proof is by reduction from the parametrised clique problem \textsc{Clique}, 
		which is well-known to be W[1]-complete and which does not admit
		an $f(k)n^{o(k)}$-time algorithm for any computable $f$ unless \textsmaller{ETH} fails \cite[Theorems~13.25, 14.21]{cygan_parameterized_2015}.
		
		Let $K$ denote the $k$-vertex clique and $K_0$ and $K_1$ its even and odd \textsmaller{CFI} graphs
		as defined in \cref{sec:cfi}.
		We first observe that $K_0 \equiv_{\mathcal{G}_{\leq k} \setminus \{K\}} K_1$.
		Indeed, by \cite[Theorem~3.13]{roberson_oddomorphisms_2022},
		for every graph~$F$, $\hom(F, K_0) \neq \hom(F, K_1)$ if and only if
		there exists a \emph{weak oddomorphism} $h \colon F \to K$ as defined in \cite[Definition~3.9]{roberson_oddomorphisms_2022}.
		By definition, a weak oddomorphism is surjective on edges and vertices, i.e.\@ for every $uv \in E(K)$
		there exists $u'v' \in E(F)$ such that $h(u'v') = uv$. 
		Hence, if $\hom(F, K_0) \neq \hom(F, K_1)$ then $F$ has at least $k$ vertices and $\binom{k}{2}$ edges.
		The only graph in $\mathcal{G}_{\leq k}$ matching this description is~$K$.
		
		The reduction produces given the instance $(G, k)$ of \textsc{Clique}
		the instance $(G \times K_0, G \times K_1, k)$ of \textsc{HomIndSize}
		where $K$ is the $k$-vertex clique and $\times$ 
		denotes the categorical product of two graphs.
		This instance can be produced in fixed-parameter time.
		Furthermore, the parameter~$k$ is not affected by this reduction.
		For correctness, consider the following cases:
		
		If $G \times K_0 \equiv_{\mathcal{G}_{\leq k}} G \times K_1$ then $\hom(K, G) = 0$.
		Indeed, by assumption and \cite[(5.30)]{lovasz_large_2012}, 
		\[ \hom(K, G)\hom(K, K_0) =  \hom(K, G \times K_0) = \hom(K, G \times K_1) = \hom(K, G) \hom(K, K_1). \]
		However, $\hom(K,K_0) \neq \hom(K,K_1)$ by \cref{lem:cfi}, and thus $\hom(K,G) = 0$.
		
		Conversely, it holds that $K_0 \equiv_{\mathcal{G}_{\leq k} \setminus \{K\}} K_1$
		and hence also $G \times K_0 \equiv_{\mathcal{G}_{\leq k} \setminus \{K\}} G \times K_1$
		by the initial observation and \cite[(5.30)]{lovasz_large_2012}.
		Since $\hom(K, G) = 0$, also $G \times K_0 \equiv_{\mathcal{G}_{\leq k}} G \times K_1$.
	\end{proof}
	
	\subsection{Beyond Treewidth}
	
	\Cref{thm:coRP} establishes that \HomIndP{$\mathcal{F}$} is decidable for every recognisable graph class of bounded treewidth. 
	It would be desirable to give a necessary criterion for a graph class $\mathcal{F}$ to be such that \HomIndP{$\mathcal{F}$} is decidable/tractable.
	
	In this section, we note that the requirement of bounded treewidth can be relaxed when one does not aim at a polynomial-time algorithm.
	In the following \cref{thm:cliques}, $\chi(F)$ denotes the chromatic number of $F$. 
	A typical graph class to which the theorem applies is the class $\mathcal{K}$ of all cliques.
	
	\begin{theorem} \label{thm:cliques}
		Let $\phi$ be a \CMSO-sentence and $\mathcal{F}$ the class of all graphs satisfying $\phi$.
		Suppose there exists a computable non-decreasing function $g \colon \mathbb{N} \to \mathbb{N}$
		such that $\tw F \leq g(\chi(F))$ for all $F \in \mathcal{F}$.
		Then $\HomInd(\mathcal{F})$ is decidable.
	\end{theorem}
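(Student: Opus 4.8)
The plan is to reduce the problem $\HomInd(\mathcal{F})$ to a single instance of the parametrised problem \HomInd in which the treewidth bound is read off from the chromatic numbers of the two input graphs. The key elementary fact is that homomorphisms do not increase the chromatic number: composing a homomorphism $h\colon F\to G$ with a proper $\chi(G)$-colouring of $G$ yields a proper $\chi(G)$-colouring of $F$, so $\hom(F,G)>0$ forces $\chi(F)\le\chi(G)$. Hence, writing $c\coloneqq\max\{\chi(G),\chi(H)\}$, every graph $F$ with $\chi(F)>c$ satisfies $\hom(F,G)=0=\hom(F,H)$, and such $F$ are irrelevant to deciding whether $G\equiv_{\mathcal{F}}H$.

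First I would prove the equivalence
\[
	G\equiv_{\mathcal{F}}H\iff G\equiv_{\mathcal{F}_{\phi,\,g(c)+1}}H ,
\]
where $\mathcal{F}_{\phi,\,g(c)+1}=\{F\mid F\models\phi,\ \tw F\le g(c)\}$ is the graph class of the instance $(G,H,\phi,g(c)+1)$ of \HomInd. The implication ``$\Rightarrow$'' holds because $\mathcal{F}_{\phi,\,g(c)+1}\subseteq\mathcal{F}$. For ``$\Leftarrow$'', let $F\in\mathcal{F}$: if $\chi(F)\le c$ then $\tw F\le g(\chi(F))\le g(c)$ by the hypothesis and monotonicity of $g$, so $F\in\mathcal{F}_{\phi,\,g(c)+1}$ and $\hom(F,G)=\hom(F,H)$; if $\chi(F)>c$ then $\hom(F,G)=0=\hom(F,H)$ as observed. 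Because the chromatic number is computable and $g$ is computable, $c$ and $g(c)+1$ can be computed from $G$ and $H$, so this is an effective reduction.

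It then remains to decide $G\equiv_{\mathcal{F}_{\phi,\,g(c)+1}}H$. Applying \cref{thm:courcelle} to the sentence $\phi$ and $k\coloneqq g(c)+1$ effectively produces the finite data $Q$, $A$, $q_0$, $g$, $(b_{\boldsymbol{B}})_{\boldsymbol{B}\in\mathcal{B}(k)}$ describing the class $\mathcal{F}_{\phi,k}$. Carrying out the size analysis underlying \cref{thm:fptRP} with these states — the chain argument of \cref{lem:chain} applied to the spaces $S(q)\subseteq\mathbb{R}^{V(G)^k \cup V(H)^k}$, $q\in Q$, each of dimension at most $2n^k$ for $n\coloneqq\max\{|V(G)|,|V(H)|\}$ — shows that $G\equiv_{\mathcal{F}_{\phi,k}}H$ holds if and only if $\hom(F,G)=\hom(F,H)$ for every graph $F$ on at most $\max\{k^{2|Q|n^k},\,2|Q|n^k\}$ vertices with $F\models\phi$ and $\tw F\le k-1$; there are only finitely many such $F$, they are effectively enumerable, and their homomorphism counts into $G$ and $H$ are computable. (Equivalently, one may run the randomised algorithm of \cref{thm:fptRP} and derandomise it by testing every prime in the relevant range instead of sampling, cf.\ the coNP remark following \cref{lem:boundprime}.)

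I expect the only genuinely delicate point to be the non-uniformity: the graph class $\mathcal{F}_{\phi,\,g(c)+1}$ depends on the input graphs through $c$, so \cref{thm:coRP} and \cref{thm:decidable} cannot be invoked with a fixed, pre-computed recognisability index. The role of \cref{thm:courcelle} is exactly to supply the required automaton — and hence a computable size bound — uniformly in $\phi$ and $c$, turning the bare decidability claim into a concrete terminating procedure.
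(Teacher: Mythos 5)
Your proof is correct and takes essentially the same approach as the paper: the paper uses the trivial bound $\chi(G),\chi(H)\le n$ and invokes \cref{thm:fptRP} with parameter $g(n)$, whereas you compute $c=\max\{\chi(G),\chi(H)\}$ and use $g(c)+1$, which yields the same conclusion with a slightly tighter parameter. You are also more careful than the paper's one-line invocation of \cref{thm:fptRP}: since a coRP algorithm alone does not give a decision procedure, your explicit appeal to the effective size bound from \cref{thm:courcelle} plus \cref{lem:chain} (or, alternatively, derandomisation by exhaustive prime testing) is the right way to turn it into genuine decidability.
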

	\begin{proof}
		Let $n \coloneqq \max\{|V(G)|, |V(H)|\}$.
		Observe that $G \equiv_{\mathcal{F}} H$ if and only if $G \equiv_{\mathcal{F} \cap \mathcal{TW}_{g(n)}} H$.
		Indeed, if $F \in \mathcal{F}$ has treewidth greater than $g(n)$ then $\chi(F) > n$.
		Since $G$ and $H$ both have chromatic number at most $n$, there is neither a homomorphism $F \to G$ nor $F \to H$.
		Now invoke \cref{thm:fptRP} with input $\phi$ and $g(n)$.
	\end{proof}
	
	The class $\mathcal{K}$ is definable in first-order logic and has constant clique-width. Thus, from a model checking perspective, it is a rather simple graph class.
	However, as shown in~\cite{boeker_complexity_2019}, $\HomInd(\mathcal{K})$ is $\mathsf{C}_=\mathsf{P}$-hard and thus not in polynomial time unless the polynomial hierarchy collapses.
	This indicates that one cannot replace Courcelle's graph algebras for bounded treewidth, which were used in this work, by his graph algebras for bounded clique-width, which Courcelle uses to prove that $\mathsf{MSO}_1$-model checking is fixed-parameter tractable on classes of bounded clique-width.
	
	\section{Conclusion}
	
	\Cref{thm:coRP}, our central result, asserts that deciding homomorphism indistinguishable is tractable over every recognisable graph class of bounded treewidth.
	Thereby, we make the intuition precise that finite-dimensional spaces of homomorphism tensors of graphs with a bounded number of labels are sufficient to efficiently capture homomorphism indistinguishability.
	
	A reasonable next step is to combine \cref{thm:coRP} with a hardness result giving necessary conditions on a graph class $\mathcal{F}$ for \HomIndP{$\mathcal{F}$} to be undecidable.
	We propose the following working hypothesis:
	
	\begin{question} \label{question}
		Is it true that for a proper minor-closed graph class~$\mathcal{F}$
		the problem \HomIndP{$\mathcal{F}$} is decidable if and only if $\mathcal{F}$ has bounded treewidth?
	\end{question}
	\Cref{thm:coRP} establishes the backward implication.
	The only known minor-closed graph class $\mathcal{F}$ for which \HomIndP{$\mathcal{F}$} is undecidable, is the class of planar graphs, as shown by Man\v{c}inska and Roberson~\cite{mancinska_quantum_2020}.
	An affirmative answer to \cref{question} would imply a conjecture of Roberson~\cite[Conjecture~5]{roberson_oddomorphisms_2022} asserting that $\equiv_{\mathcal{F}}$ is not isomorphism for every proper minor- and union-closed class~$\mathcal{F}$.
	However,
	even though the graph classes in \cref{question} clearly need to satisfy some condition which rules out pathological behaviours, it is not clear whether closure under taking minors is the right one \cite{roberson_oddomorphisms_2022,seppelt_logical_2023}. 
	 
	Towards \cref{question}, one could try to devise reductions between \HomIndP{$\mathcal{F}_1$} and \HomIndP{$\mathcal{F}_2$} for distinct (minor-closed) graph classes $\mathcal{F}_1$ and $\mathcal{F}_2$. We are not aware of any such reduction.
	
	A final question with repercussions beyond the scope of this work concerns the parametrised hardness of deciding Weisfeiler--Leman indistinguishability. If $k$ is a made a parameter in the problem \textsc{WL} from \cref{sec:coNP}, can one show coW[1]-hardness or even XP-completeness, cf.\ \cite{berkholz_lower_2012}?

	\paragraph{Acknowledgements}
	I was supported by the German Research Foundation (\textsmaller{DFG}) within Research Training Group 2236/2 (\textsmaller{UnRAVeL}) and by the European Union (\textsmaller{ERC}, SymSim, 101054974).
	
	Views and opinions expressed are however those of the author(s) only and do not necessarily reflect those of the European Union or the European Research Council Executive Agency. Neither the European Union nor the granting authority can be held responsible for them.
	
	I would like to thank Martin Grohe and Louis Härtel for fruitful discussions and their support.
	In particular, I thank Martin Grohe for making me aware of Chinese remaindering.

	\bibliographystyle{plainurl}

\end{document}